\tikzset{elli/.style={ellipse,draw}}
\tikzset{rect/.style={rectangle,draw}}
\providecommand{\U}[1]{\protect\rule{.1in}{.1in}}
\newcommand{\PreserveBackslash}[1]{\let\temp=\\#1\let\\=\temp}
\newcolumntype{C}[1]{>{\PreserveBackslash\centering}p{#1}}
\newcolumntype{R}[1]{>{\PreserveBackslash\raggedleft}m{#1}}
\newcommand{\outerprod}[2]{\vert #1 \rangle\!\langle #2 \vert}
\newcommand{\ket}[1]{\vert #1 \rangle}
\newcommand{\bra}[1]{\langle #1 \vert}
\newcommand{\outerproj}[1]{\vert #1 \rangle\!\langle #1 \vert}
\newcommand{\im}{\mathrm{i}}
\newcommand{\NHS}[1]{\frac{1}{\sqrt{2}}\left\Vert #1 \right\Vert_2}
\newtheorem{theorem}{Theorem}
\newtheorem{algorithm}{Algorithm}
\newtheorem{corollary}[theorem]{Corollary}
\newtheorem{lemma}[theorem]{Lemma}
\newtheorem{problem}{Problem}
\newtheorem{proposition}{Proposition}
\newtheorem{remark}[theorem]{Remark}
\newcolumntype{P}[1]{>{\raggedright\arraybackslash}p{#1}}
\newcolumntype{L}{>{\centering\arraybackslash}m{2.6cm}}
\newcolumntype{M}[1]{>{\centering\arraybackslash}m{#1}}
\newenvironment{proof}[1][Proof]{\noindent\textbf{#1.} }{\ \rule{0.5em}{0.5em}}
\begin{document}
\preprint{ }
\title{Estimating distinguishability measures on quantum computers}
\author{Soorya Rethinasamy}
\affiliation{Hearne Institute for Theoretical Physics, Department of Physics and Astronomy,
and Center for Computation and Technology, Louisiana State University, Baton
Rouge, Louisiana 70803, USA}
\affiliation{School of Applied and Engineering Physics, Cornell University, Ithaca, New York 14850, USA}

\author{Rochisha Agarwal}
\affiliation{Department of Physics, Indian Institute of Technology Roorkee, Roorkee,
Uttarakhand, India}
\affiliation{School of Electrical and Computer Engineering, Cornell University, Ithaca, New York 14850, USA}

\author{Kunal Sharma}
\affiliation{Joint Center for Quantum Information and Computer Science, University of
Maryland, College Park, Maryland 20742, USA}
\affiliation{IBM Quantum, IBM T.J. Watson Research Center, Yorktown Heights, New York 10598, USA}

\author{Mark M.~Wilde}
\affiliation{Hearne Institute for Theoretical Physics, Department of Physics and Astronomy,
and Center for Computation and Technology, Louisiana State University, Baton
Rouge, Louisiana 70803, USA}
\affiliation{School of Electrical and Computer Engineering, Cornell University, Ithaca, New York 14850, USA}

\begin{abstract}
The performance of a quantum information processing protocol is ultimately judged by distinguishability measures that quantify how distinguishable the actual result of the protocol is from the ideal case. The most prominent distinguishability measures are those based on the fidelity and trace distance, due to their physical interpretations. In this paper, we propose and review several algorithms for estimating distinguishability measures based on trace distance and fidelity. The algorithms can be used for distinguishing quantum states, channels, and strategies (the last also known in the literature as ``quantum combs''). The fidelity-based algorithms offer novel physical interpretations of these distinguishability measures in terms of the maximum probability with which a single prover (or competing provers) can convince a verifier to accept the outcome of an associated computation. We simulate many of these algorithms by using a variational approach with parameterized quantum circuits. We find that the simulations converge well in both the noiseless and noisy scenarios, for all examples considered. Furthermore, the noisy simulations exhibit a parameter noise resilience. Finally, we establish a strong relationship between various quantum computational complexity classes and distance estimation problems.

\end{abstract}
\date{\today}
\maketitle

\tableofcontents
\allowdisplaybreaks

\section{Introduction}

In quantum information processing, it is essential to quantify the performance
of protocols by using distinguishability measures. It is typically the case
that there is an ideal state to prepare or an ideal channel to simulate, but
in practice, we can only realize approximations, due to experimental error.
Two commonly employed distinguishability measures for states are the trace
distance \cite{H67,H69} and the fidelity \cite{U76}. The former has an
operational interpretation as the distinguishing advantage in the optimal
success probability when trying to distinguish two states that are chosen
uniformly at random. The latter has an operational meaning as the maximum
probability that a purification of one state could pass a test for being a
purification of the other (this is known as Uhlmann's transition probability
\cite{U76}). These distinguishability measures have generalizations to quantum
channels, in the form of the diamond distance \cite{Kit97}\ and the fidelity
of channels \cite{GLN04}, as well as to strategies (sequences of channels), in
the form of the strategy distance \cite{CDP08a,CDP09,G12} and the fidelity of
strategies \cite{Gutoski2018fidelityofquantum}. Each of these measures are
generalized by the generalized divergence of states \cite{PV10}, channels
\cite{LKDW18}, and strategies \cite{Wang2019a}. The operational
interpretations of these latter distinguishability measures are similar to the
aforementioned ones, but the corresponding protocols involve more steps that
are used in the distinguishing process.

Both the trace distance and the fidelity can be computed by means of
semi-definite programming \cite{Wat13}, so that they can be estimated
accurately with a run-time that is polynomial in the dimension of the states.
The same is true for the diamond distance \cite{Wat09}, fidelity of channels
\cite{Yuan2017,KW20}, the strategy distance \cite{CDP08a,CDP09,G12}, and the
fidelity of strategies \cite{Gutoski2018fidelityofquantum}. While this method
of estimating these quantities is reasonable for states, channels, and
strategies of small dimension, its computational complexity actually increases
exponentially with the number of qubits involved, due to the well-known fact
that Hilbert-space dimension grows exponentially with the number of qubits.

In this paper, we provide several quantum algorithms for estimating these
distinguishability measures. Some of the algorithms rely on interaction with a
quantum prover, in which case they are not necessarily efficiently computable
even on a quantum computer. In fact, the computational hardness results of
\cite{W02,RW05,W09zkqa} lend credence to the belief that estimating these
quantities reliably is not generally possible in polynomial time on a quantum
computer. However, as we show in our paper, by replacing the quantum prover
with a parameterized circuit (see \cite{CABBEFMMYCC20,bharti2021noisy}\ for
reviews of variational algorithms), it is possible in some cases to estimate
these quantities reliably. Identifying precise conditions under which a
quantum computer can estimate these quantities efficiently is an interesting
open question that we leave for future research. Already in \cite{WZCGFY21},
it was shown that estimating the fidelity of two quantum states is possible in
quantum polynomial time when one of the states is low rank, and the same is the case for estimating the trace distance under certain promises \cite{WGLZY22,WZ23}. See also
\cite{CPCC19, CSZW20, TV21}\ for variational algorithms that estimate fidelity
of states and \cite{CSZW20,LLSL21} for variational algorithms to estimate
trace distance. It is open to determine precise conditions under which
estimation is possible for channel and strategy distinguishability measures.

We perform noiseless and noisy simulations of several of the algorithms provided. We find that in the noiseless scenario, all algorithms converge, for the examples considered, to the true known value of the distinguishability measure under consideration. In the noisy simulations, the algorithms converge well, and the parameters obtained exhibit a noise resilience, as put forward in \cite{Sharma_2020}; i.e., the relevant quantity can be accurately estimated by inputting the parameters learned from the noisy simulator into the noiseless simulator.

Lastly, we discuss the computational complexity of various distance estimation algorithms. We prove that several fidelity and distance estimation algorithms are complete for well-known quantum complexity classes (see \cite{W09, VW15}\ for reviews of quantum computational complexity theory). In particular, we prove that estimating the fidelity between two pure states, a mixed state and a pure state, and estimating the Hilbert--Schmidt distance of two mixed states are BQP-complete problems. These aforementioned results follow by demonstrating that there is an efficient quantum algorithm for these tasks and by showing a reduction from an arbitrary BQP\ algorithm to one for these tasks. Thus, if we believe that there is a separation between the computational power of classical and quantum computers, then these estimation problems are those for which a quantum computer has an advantage. Several BQP-complete promise problems are known, including approximating the Jones polynomial \cite{AJL06}, estimating quadratically signed weight enumerators \cite{KL01}, estimating diagonal entries of powers of sparse matrices \cite{JW07}, a problem related to matrix inversion \cite{HHL09}, and deciding whether a pure bipartite state is entangled \cite{GHMW15}. See \cite{Zhang2012} for a 2012 review of BQP-complete promise problems. 

We then prove that the problem of estimating the fidelity between a channel with arbitrary input and a pure state is a QMA-complete promise problem. We show this by constructing an efficient quantum algorithm, augmented by a single all-powerful prover, to solve this problem, and by showing a reduction from an arbitrary QMA problem to one for this task. Lastly, we demonstrate that the problem of estimating the fidelity between a channel with separable input and a pure state is QMA(2)-complete. QMA(2) is the class of problems that can be efficiently solved when augmented by two all-powerful quantum provers who are guaranteed to be unentangled \cite{KMY01,HM10}.

In the rest of the paper, we provide details of the algorithms and results mentioned above. In particular, our paper proceeds as follows:

\begin{enumerate}
\item The various subsections of Section~\ref{sec:est-fidelity} are about estimating
the fidelity of states, channels, and strategies. We begin in
Section~\ref{sec:fid-pure-states}\ by establishing two quantum algorithms for
estimating the fidelity of pure states, one of which is based on a state
overlap test (Algorithm~\ref{alg:fid-pure-states-1}) and another that employs
Bell state preparation and measurement along with a controlled unitary
(Algorithm~\ref{alg:fid-pure-states-2}).

\item In Section~\ref{sec:fid-pure-mixed}, we generalize
Algorithm~\ref{alg:fid-pure-states-1} to estimate the fidelity of a pure state
and a mixed state (see Algorithm~\ref{alg:fid-pure-mixed}).

\item In Section~\ref{sec:fid-arbitrary-states}, we establish several quantum
algorithms for estimating the fidelity of two arbitrary states.
Algorithm~\ref{alg:fid-states} generalizes
Algorithm~\ref{alg:fid-pure-states-2}.
Algorithm~\ref{alg:mixed-state-swap-test} generalizes the well-known swap test
to the case of arbitrary states. Algorithm~\ref{alg:mixed-state-Bell-tests} is a
variational algorithm that employs Bell measurements, as a generalization of
the approach in \cite{GC13, Suba__2019} for pure states.
Algorithm~\ref{alg:mixed-state-FC-meas-min} is another variational algorithm
that attempts to simulate a fidelity-achieving measurement, such as the
Fuchs--Caves measurement \cite{FC95}, in order to estimate the fidelity.

\item In Section~\ref{sec:fid-channels}, we generalize
Algorithm~\ref{alg:fid-states} to a quantum algorithm for estimating the
fidelity of quantum channels (see Algorithm~\ref{alg:fid-channels}). This
algorithm involves interaction with competing quantum provers, and
interestingly, its acceptance probability is directly related to the
fidelity of channels, thus giving the latter an operational meaning. Later, we
replace the provers with parameterized circuits and arrive at a method for
estimating the fidelity of channels.

\item In Section~\ref{sec:strategy-fidelity}, we generalize the aforementioned
approach in order to estimate the fidelity of strategies (a strategy is a
sequence of quantum channels and thus generalizes the notion of a quantum channel).

\item In Section~\ref{sec:alt-methods-fid-channels}, we briefly discuss
alternative methods for estimating the fidelity of channels and strategies,
based on the approaches from Section~\ref{sec:fid-arbitrary-states} for
estimating the fidelity of states.

\item Section~\ref{sec:max-output-fid-channels} introduces a method for
estimating the maximum output fidelity of two quantum channels, which has an
application to generating a fixed point of a quantum channel (as discussed later on in Section~\ref{sec:fixed-points-of-channels}).

\item In Sections~\ref{sec:mult-states} and \ref{sec:mult-channels-strategies}%
, we generalize the whole development above to the case of testing similarity
of arbitrary ensembles of states, channels, or strategies. We find that the
acceptance probability of the corresponding algorithms is related to the
secrecy measure from \cite{KRS09}, which can be understood as a measure of
similarity of the states in an ensemble. We then establish generalizations of
this measure for an ensemble of channels and an ensemble of strategies and
remark how this has applications in private quantum reading \cite{BDW18,DBW20}.

\item We then move on in Section~\ref{sec:TD-based-measures} to estimating
trace-distance-based measures, for states, channels, and strategies. We stress
that these various algorithms were already known, and our goal here is to
investigate their performance using a variational approach. In
Sections~\ref{sec:est-trace-dist-states}, \ref{sec:est-diamond-dist}, and \ref{sec:est-strategy-dist},
Algorithms~\ref{alg:trace-distance-states}, \ref{alg:diamond-distance-channels}, and \ref{alg:strategy-distance-strategies} provide methods for estimating the
trace distance of states, 
the diamond distance of channels, and the strategy distance of strategies, respectively.

\item In Section~\ref{sec:est-min-TD-channels}, we provide two different but
related algorithms for estimating the minimum trace distance between two
quantum channels. The related approaches employ competing provers to do so.

\item In Section~\ref{sec:multiple-TD-based}, we generalize the whole development for trace-distance based algorithms to the case of multiple states, channels, and strategies.

\item In Section~\ref{sec:num-sims}, we discuss the results of numerical simulations of Algorithms~\ref{alg:fid-states}--\ref{alg:fid-channels}, Algorithms~\ref{alg:trace-distance-states}--\ref{alg:diamond-distance-channels}, and Algorithm~\ref{alg:mult-state-disc-qubit}. We use both noiseless and noisy quantum simulators and a variational approach with parameterized circuits. 

\item In Section~\ref{sec:compFidAlgs}, we prove that the problems of evaluating the fidelity between two pure states, a pure state and a mixed state, and evaluating the Hilbert--Schmidt distance of two mixed states are BQP-complete (Theorem~\ref{thm:BQP-comp-fid-pure},\ \ref{thm:BQP-comp-fid},\ \ref{thm:BQP-comp-HSD}). We then show that the problem of evaluating the fidelity between a channel with arbitrary input and a pure state is QMA-complete (Theorem~\ref{thm:QMA-comp-fid}). Finally, we demonstrate that  the problem of evaluating the fidelity between a channel with separable input and a pure state is QMA(2)-complete (Theorem~\ref{thm:QMA(2)-comp-fid}).

\item In Section~\ref{sec:fixed-points-of-channels}, we discuss how
Algorithm~\ref{alg:fid-channels-single-prover} can generate a fixed-point
state or an approximate fixed-point state of a quantum channel.

\end{enumerate}

\noindent We finally conclude in Section~\ref{sec:conclusion} with a summary and some open questions.

\section{Estimating fidelity}
\label{sec:est-fidelity}

\renewcommand{\arraystretch}{2.2}
\begin{table*}
\begin{center}
    \begin{tabular}{|M{3.6cm}||M{2.5cm}|M{3cm}|M{8.5cm}|}
        \hline 
         \multirow{2}{*}{Problem} & 
         \multirow{2}{*}{Algorithms} & 
         \multirow{2}{*}{Approach} &
         \multirow{2}{*}{Comparison}\\
         & & & \\
        \hline
        \multirow{2}{*}{$F(\psi, \phi)$} & Algorithm~\ref{alg:fid-pure-states-1} & 
        State Overlap & 
        \multirow{2}{=}{\centering Algorithm~\ref{alg:fid-pure-states-1} is simpler than Algorithm~\ref{alg:fid-pure-states-2}. Algorithm~\ref{alg:fid-pure-states-2} generalizes in a straightforward manner to testing fidelity of mixed states.} \\
        \cline{2-3}
        & Algorithm~\ref{alg:fid-pure-states-2} & Bell-State Overlap & \\
        \hline
        $F(\psi, \rho)$ & Algorithm~\ref{alg:fid-pure-mixed} & State Overlap & - \\
        \hline
        \multirow[c]{4}{*}[-0.2cm]{$F(\rho_0, \rho_1)$} & Algorithm~\ref{alg:fid-states} & 
        Bell-State Overlap & 
        \multirow[c]{4}{=}[-0.4cm]{\centering Algorithm~\ref{alg:fid-states} is a generalization of Algorithm~\ref{alg:fid-pure-states-2} for mixed state inputs. Algorithm~\ref{alg:mixed-state-swap-test} uses a controlled SWAP gate to generalize the SWAP Test. Requires more qubits, but no controlled unitaries to generate the states being tested. Algorithm~\ref{alg:mixed-state-Bell-tests} uses a variational unitary on the reference system of one state only. Algorithms~\ref{alg:fid-states}, \ref{alg:mixed-state-swap-test} and \ref{alg:mixed-state-Bell-tests} are based on learning the Uhlmann unitary and provides a lower bound. Algorithm~\ref{alg:mixed-state-FC-meas-min} is based on learning the optimal Fuchs--Caves measurement and provides an upper bound.} \\
        \cline{2-3}
        & \multirow{1}{*}[0.1cm]{Algorithm~\ref{alg:mixed-state-swap-test}} & Generalized SWAP Test & \\
        \cline{2-3}
        & Algorithm~\ref{alg:mixed-state-Bell-tests} & Bell Measurement & \\
        \cline{2-3}
        & \multirow{1}{*}[0.1cm]{Algorithm~\ref{alg:mixed-state-FC-meas-min}} & Fuchs--Caves Measurement & \\
        \hline
        $F(\mathcal{N}_0, \mathcal{N}_1)$ & Algorithm~\ref{alg:fid-channels} & Bell-State Overlap & - \\
        \hline
        $F(\mathcal{N}^{0, (n)}, \mathcal{N}^{1, (n)})$ & Algorithm~\ref{alg:fid-strategies} & Bell-State Overlap & - \\
        \hline
        $F_{\text{max}} (\mathcal{N}_0,\mathcal{N}_1)$ & Algorithm~\ref{alg:fid-channels-single-prover} & Bell-State Overlap & - \\
        \hline
        $p_{\text{sim}}(\left\{  p(x),\rho^{x}\right\}_{x\in\mathcal{X}})$ & Algorithm~\ref{alg:fid-multiple-states} & Bell-State Overlap & Generalization of Algorithm~\ref{alg:fid-states} to ensemble of states.\\
        \hline
        $p_{\text{sim}}(\{p(x),\mathcal{N}^{x}\}_{x\in\mathcal{X}})$ & Algorithm~\ref{alg:fid-multiple-channels} & Bell-State Overlap & Generalization of Algorithm~\ref{alg:fid-channels} to ensemble of channels. \\
        \hline
        $p_{\text{sim,max}}(\{p(x),\mathcal{N}^{x}\}_{x\in\mathcal{X}})$ & Algorithm~\ref{alg:fid-multiple-channels-single-prover} & Bell-State Overlap & 
        Generalization of Algorithm~\ref{alg:fid-channels-single-prover} to ensemble of channels. \\
        \hline
    \end{tabular}
\caption{List of fidelity problems and algorithms addressed in this work. Approach used for each algorithm and comparison within a type of fidelity problem is also presented.}
\label{tab:FidAlgs}
\end{center}
\end{table*}

In this section, we propose algorithms for several different fidelity problems. A summary of all algorithms presented in this section is available in Table~\ref{tab:FidAlgs}.

\subsection{Estimating fidelity of pure states}

\label{sec:fid-pure-states}

We begin by outlining two simple quantum algorithms for estimating fidelity
when both states are pure. A standard approach for doing so is to use the swap
test \cite{BBD+97,BCWW01} or Bell measurements \cite{GC13,Suba__2019}. The approaches
that we discuss below are different from these approaches. The first algorithm
is a special case of that proposed in \cite{W02} (see also \cite{CSZW20}), as
well as a special case of Algorithm~\ref{alg:fid-pure-mixed}\ presented later. 
The second algorithm involves a Bell-state preparation and projection, as well
as controlled interactions, and it is a special case of
Algorithm~\ref{alg:fid-states}\ presented later. We list both of these
algorithms here for completeness and because later algorithms build upon them.

Suppose that the goal is to estimate the fidelity of pure states $\psi^{0}$
and $\psi^{1}$, and we are given access to quantum circuits $U^{0}$ and
$U^{1}$ that prepare these states when acting on the all-zeros state. We now
detail a first quantum algorithm for estimating the fidelity%
\begin{equation}
F(\psi^{0},\psi^{1})\coloneqq \left\vert \langle\psi^{1}|\psi^{0}%
\rangle\right\vert ^{2}. \label{eq:fid-pure-states-def}%
\end{equation}

\begin{algorithm}
\label{alg:fid-pure-states-1}The algorithm proceeds as follows:

\begin{enumerate}
\item Act with the circuit $U^{0}$ on the all-zeros state $|0\rangle$.

\item Act with $U^{1\dag}$ and perform a measurement of all qubits in the
computational basis.

\item Accept if and only if the all-zeros outcome is observed.
\end{enumerate}
\end{algorithm}

\begin{figure}
\begin{center}
\includegraphics[
width=\linewidth
]{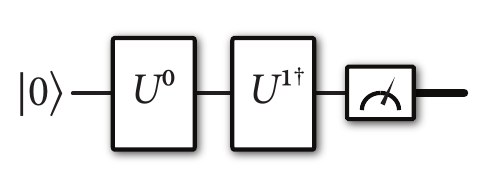}
\end{center}
\caption{This figure depicts Algorithm~\ref{alg:fid-pure-states-1} for
estimating the fidelity of pure states generated by quantum circuits $U^{0}$
and~$U^{1}$. In this, and all following figures, we use the convention that a bold line represents a classical register.}%
\label{fig:fid-pure-states}%
\end{figure}

Algorithm~\ref{alg:fid-pure-states-1} is depicted in
Figure~\ref{fig:fid-pure-states}. The acceptance probability of
Algorithm~\ref{alg:fid-pure-states-1}\ is precisely equal to $\left\vert
\langle0|U^{1\dag}U^{0}|0\rangle\right\vert ^{2}$, which by definition is
equal to the fidelity in \eqref{eq:fid-pure-states-def}. In fact,
Algorithm~\ref{alg:fid-pure-states-1}\ is a quantum computational
implementation of the well known operational interpretation of the fidelity as
the probability that the state $\psi^{0}$ passes a test for being the state
$\psi^{1}$.

Our next quantum algorithm for estimating fidelity makes use of a Bell-state
preparation and projection. Its acceptance probability is equal to%
\begin{equation}
\frac{1}{2}\left(  1+\sqrt{F}(\psi^{0},\psi^{1})\right)
\label{eq:acc-prob-simple-pure-state-alg}%
\end{equation}
and thus gives a way to estimate the fidelity through repetition. It is a
variational algorithm that optimizes over a phase $\phi$ and makes use of the
fact that%
\begin{equation}
\max_{\phi\in\lbrack0,2\pi]}\operatorname{Re}[e^{\im\phi}\langle\psi^{0}%
|\psi^{1}\rangle]=|\langle\psi^{0}|\psi^{1}\rangle|. \label{eq:optimize-phase}%
\end{equation}
This can be seen from the fact that the optimal phase $\phi$ picked is such that
\begin{equation}
e^{\im\phi} = \frac{\langle\psi^{1}|\psi^{0}\rangle}{|\langle\psi^{1}|\psi^{0}\rangle|}.
\end{equation}

Let $S$ denote the quantum system in which the states $\psi^{0}$ and $\psi
^{1}$ are prepared.

\begin{algorithm}
\label{alg:fid-pure-states-2}The algorithm proceeds as follows:

\begin{enumerate}
\item Prepare a Bell state%
\begin{equation}
|\Phi\rangle_{T^{\prime}T}\coloneqq\frac{1}{\sqrt{2}}(|00\rangle_{T^{\prime}%
T}+|11\rangle_{T^{\prime}T})
\end{equation}
on registers $T^{\prime}$ and $T$ and prepare system $S$ in the all-zeros
state $|0\rangle_{S}$.

\item Using the circuits $U_{S}^{0}$ and $U_{S}^{1}$, perform the following
controlled unitary:%
\begin{equation}
\sum_{i\in\left\{  0,1\right\}  }|i\rangle\!\langle i|_{T}\otimes U_{S}^{i}.
\label{eq:alg-pure-bell-c-Ui}%
\end{equation}

\item Act with the following unitary on system $T'$:%
\begin{equation}%
\begin{bmatrix}
1 & 0\\
0 & e^{\im\phi}%
\end{bmatrix}
.
\end{equation}

\item Perform a Bell measurement%
\begin{equation}
\{\Phi_{T^{\prime}T},I_{T^{\prime}T}-\Phi_{T^{\prime}T}\}
\end{equation}
on systems $T^{\prime}$ and $T$. Accept if and only if the outcome
$\Phi_{T^{\prime}T}$ occurs.
\end{enumerate}
\end{algorithm}

\begin{figure}
\begin{center}
\includegraphics[
width=\linewidth
]{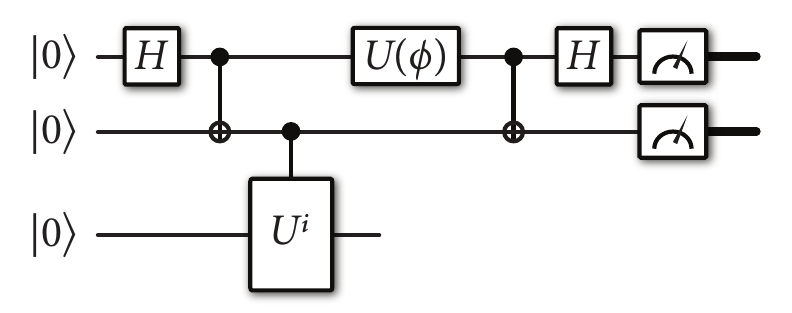}
\end{center}
\caption{This figure depicts Algorithm~\ref{alg:fid-pure-states-2} for
estimating the fidelity of pure states generated by quantum circuits $U^{0}$
and~$U^{1}$. The third gate with $U^{i}$ in the box is defined
in~\eqref{eq:alg-pure-bell-c-Ui}.}%
\label{fig:fid-pure-states-2}%
\end{figure}

Figure~\ref{fig:fid-pure-states-2} depicts
Algorithm~\ref{alg:fid-pure-states-2}. After Step~3 of
Algorithm~\ref{alg:fid-pure-states-2}, the overall state is as follows:%
\begin{equation}
\frac{1}{\sqrt{2}}\sum_{j\in\left\{  0,1\right\}  }|jj\rangle_{T^{\prime}%
T}e^{\im j\phi}|\psi^{j}\rangle_{S},
\end{equation}
and the acceptance probability is equal to%
\begin{align}
&  \left\Vert \langle\Phi|_{T^{\prime}T}\left(  \frac{1}{\sqrt{2}}\sum
_{j\in\left\{  0,1\right\}  }|jj\rangle_{T^{\prime}T}e^{\im j\phi}|\psi
^{j}\rangle_{S}\right)  \right\Vert _{2}^{2}\nonumber\\
&  =\frac{1}{4}\left\Vert \sum_{j,k\in\left\{  0,1\right\}  }\langle
kk|jj\rangle_{T^{\prime}T}e^{\im j\phi}|\psi^{j}\rangle_{S}\right\Vert _{2}^{2}\\
&  =\frac{1}{4}\left\Vert \sum_{j\in\left\{  0,1\right\}  }e^{\im j\phi}|\psi
^{j}\rangle_{S}\right\Vert _{2}^{2}\\
&  =\frac{1}{4}\left(  2+2\operatorname{Re}[e^{\im\phi}\langle\psi^{0}|\psi
^{1}\rangle]\right)  .
\end{align}
By choosing the optimal phase $\phi$ in \eqref{eq:optimize-phase}, we find
that the acceptance probability is equal to the expression
in~\eqref{eq:acc-prob-simple-pure-state-alg}. Note that, through repetition,
we can execute Algorithm~\ref{alg:fid-pure-states-2}\ in a variational way to
learn the optimal value of $\phi$. 

Later on, in Section~\ref{sec:compFidAlgs}, we prove that a promise version of
the problem of estimating the fidelity between two pure states is a BQP-complete promise problem.

\subsection{Estimating fidelity when one state is pure and the other is mixed}

\label{sec:fid-pure-mixed}

In this section, we outline a simple quantum
algorithm that estimates the fidelity between a mixed state $\rho_{S}$ and a
pure state $\psi_{S}$. It is a straightforward generalization of
Algorithm~\ref{alg:fid-pure-states-1}. 

Let $U_{RS}^{\rho}$ be a quantum circuit that generates a purification
$\varphi_{RS}$ of $\rho_{S}$ when acting on the all-zeros state of systems
$RS$, and let $U_{S}^{\psi}$ be a circuit that generates $\psi_{S}$ when
acting on the all-zeros state.

\begin{algorithm}
\label{alg:fid-pure-mixed}The algorithm proceeds as follows:

\begin{enumerate}
\item Act on the all-zeros state$~|0\rangle_{RS}$ with the circuit~$U_{RS}^{\rho}$.

\item Act with $U_{S}^{\psi\dag}$ on system $S$ and perform a measurement of
all qubits of system $S$ in the computational basis.

\item Accept if and only if the all-zeros outcome is observed.
\end{enumerate}
\end{algorithm}

\begin{figure}
\begin{center}
\includegraphics[
width=\linewidth
]{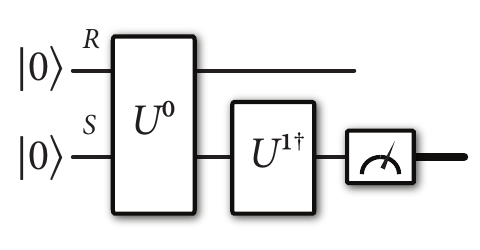}
\end{center}
\caption{This figure depicts Algorithm~\ref{alg:fid-pure-mixed} for estimating
the fidelity of a mixed state generated by a quantum circuit $U^{0}$ and a
pure state generated by~$U^{1}$.}%
\label{fig:fid-pure-mixed}%
\end{figure}

Figure~\ref{fig:fid-pure-mixed} depicts Algorithm~\ref{alg:fid-pure-mixed}.
The acceptance probability of Algorithm~\ref{alg:fid-pure-mixed} is equal to
the fidelity $F(\psi,\rho)=\langle\psi|\rho|\psi\rangle$, which follows
because%
\begin{align}
&  \left\Vert \langle0|_{S}U_{S}^{\psi\dag}U_{RS}^{\rho}|0\rangle
_{RS}\right\Vert _{2}^{2}\nonumber\\
&  =\operatorname{Tr}[\left(  I_{R}\otimes|\psi\rangle\!\langle\psi
|_{S}\right)  |\varphi\rangle\!\langle\varphi|_{RS}]\\
&  =\operatorname{Tr}[|\psi\rangle\!\langle\psi|_{S}\rho_{S}]\\
&  =\langle\psi|\rho|\psi\rangle.
\end{align}

We note here that it is not strictly necessary to have access to the reference
system~$R$ of $|\varphi\rangle_{RS}$ in order to execute
Algorithm~\ref{alg:fid-pure-mixed}. It is only necessary to have some method
of generating the reduced state $\rho_{S}$.

Later on, in Section~\ref{sec:compFidAlgs}, we prove that a promise version of
the problem of estimating the fidelity of a pure state and a mixed state is a BQP-complete promise problem.

\subsection{Estimating fidelity of arbitrary\ states}

\label{sec:fid-arbitrary-states}In this section, we outline several quantum
algorithms for estimating the fidelity of arbitrary states on a quantum
computer, some of which involve an interaction with a quantum prover (more
precisely, the algorithms involving interaction with a prover are QSZK
algorithms, where QSZK stands for ``quantum statistical zero knowledge''
\cite{W02,W09zkqa}). The algorithms are different from the algorithm proposed
in \cite{W02} (as also considered in \cite{CSZW20}), which is based on
Uhlmann's formula for fidelity \cite{U76}.

Suppose that the goal is to estimate the fidelity of states $\rho_{S}^{0}$ and
$\rho_{S}^{1}$, defined as \cite{U76}%
\begin{equation}
F(\rho_{S}^{0},\rho_{S}^{1})\coloneqq \left\Vert \sqrt{\rho_{S}^{0}}\sqrt
{\rho_{S}^{1}}\right\Vert _{1}^{2},
\end{equation}
where the trace norm of an operator $A$ is defined as $\left\Vert A
\right\Vert _{1} \coloneqq \operatorname{Tr}[\sqrt{A^{\dag}A}]$. Suppose also
that we are given access to quantum circuits $U_{RS}^{0}$ and $U_{RS}^{1}$
that prepare purifications $\psi_{RS}^{0}$ and $\psi_{RS}^{1}$ of $\rho
_{S}^{0}$ and $\rho_{S}^{1}$, respectively, when acting on the all-zeros state
$|0\rangle_{RS}$. Let us recall Uhlmann's formula for fidelity \cite{U76}:%
\begin{equation}
F(\rho_{S}^{0},\rho_{S}^{1})=\max_{|\psi^{0}\rangle_{RS},|\psi^{1}\rangle
_{RS}}\left\vert \langle\psi^{1}|\psi^{0}\rangle_{RS}\right\vert ^{2},
\label{eq:uhlmann-thm}%
\end{equation}
where the optimization is over all purifications $\psi_{RS}^{0}$ and
$\psi_{RS}^{1}$ of $\rho_{S}^{0}$ and $\rho_{S}^{1}$, respectively. We note
here that the fidelity can be computed by means of a semi-definite program
\cite{Wat13}. Also, the promise version of this problem, involving
descriptions of quantum circuits as input, is a QSZK-complete promise problem
\cite{W02}, where QSZK\ stands for quantum statistical zero knowledge (see
\cite{W02,W09zkqa} for details of this complexity class). Thus, it is unlikely
that anyone will find a general-purpose efficient quantum algorithm for
estimating fidelity (i.e., one that does not involve interaction with an
all-powerful prover).

We note that the algorithms in this subsection need the purification of the state of interest to be provided. In scenarios where the purification of a state is not available, there exist variational algorithms to learn the purification \cite{osti_1961893, CSZW20}.

\subsubsection{Controlled unitary and Bell state overlap}

\label{sec:contr-U-bell-overlap-alg}We now detail a QSZK algorithm for
estimating the following quantity:%
\begin{equation}
\frac{1}{2}\left(  1+\sqrt{F}(\rho_{S}^{0},\rho_{S}^{1})\right)  .
\end{equation}
It is a QSZK\ algorithm because, in the case that the fidelity $\sqrt{F}%
(\rho_{S}^{0},\rho_{S}^{1})\approx1$, the verifier does not learn anything by
interacting with the prover (i.e., the verifier only learns that the algorithm
accepts with high probability). This  algorithm is somewhat similar to
the quantum algorithm proposed in \cite{PhysRevA.94.022310}, which was used
for estimating a quantity known as fidelity of recovery \cite{SW14}. It is
also similar to the algorithm described in Figure~3 of \cite{KW00}. It can be
understood as a generalization of Algorithm~\ref{alg:fid-pure-states-2}\ from
pure states to arbitrary states.

\begin{algorithm}
\label{alg:fid-states} The algorithm proceeds as follows:

\begin{enumerate}
\item The verifier prepares a Bell state%
\begin{equation}
|\Phi\rangle_{T^{\prime}T}\coloneqq\frac{1}{\sqrt{2}}(|00\rangle_{T^{\prime}%
T}+|11\rangle_{T^{\prime}T})
\end{equation}
on registers $T^{\prime}$ and $T$ and prepares systems $RS$ in the all-zeros
state $|0\rangle_{RS}$.

\item Using the circuits $U_{RS}^{0}$ and $U_{RS}^{1}$, the verifier performs
the following controlled unitary:%
\begin{equation}
\sum_{i\in\left\{  0,1\right\}  }|i\rangle\!\langle i|_{T}\otimes U_{RS}^{i}.
\label{eq:controlled-U-fid-alg-bell-overlap}
\end{equation}

\item The verifier transmits systems $T^{\prime}$ and $R$ to the prover.

\item The prover prepares a system $F$ in the $|0\rangle_{F}$ state and acts
on systems $T^{\prime}$, $R$, and $F$ with a unitary $P_{T^{\prime
}RF\rightarrow T^{\prime\prime}F^{\prime}}$ to produce the output systems
$T^{\prime\prime}$ and $F^{\prime}$, where $T^{\prime\prime}$ is a qubit system.

\item The prover sends system $T^{\prime\prime}$ to the verifier, who then
performs a Bell measurement%
\begin{equation}
\{\Phi_{T^{\prime\prime}T},I_{T^{\prime\prime}T}-\Phi_{T^{\prime\prime}T}\}
\end{equation}
on systems $T^{\prime\prime}$ and $T$. The verifier accepts if and only if the
outcome $\Phi_{T^{\prime\prime}T}$ occurs.
\end{enumerate}
\end{algorithm}

\begin{figure*}
\begin{center}
\includegraphics[
width=\linewidth
]{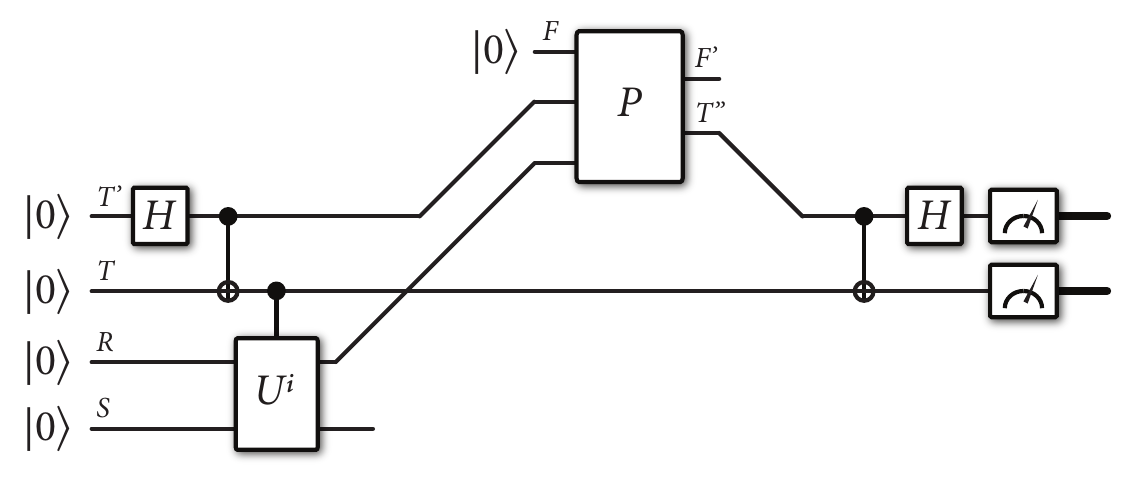}
\end{center}
\caption{This figure depicts Algorithm~\ref{alg:fid-states} for estimating the
fidelity of mixed states generated by quantum circuits $U^{0}_{RS}$ and~$U^{1}_{RS}$.}%
\label{fig:fid-alg-mixed-states-1}%
\end{figure*}

Figure~\ref{fig:fid-alg-mixed-states-1} depicts Algorithm~\ref{alg:fid-states}.

\begin{theorem}
\label{thm:acc-prob-fid-states}The acceptance probability of
Algorithm~\ref{alg:fid-states} is equal to%
\begin{equation}
\frac{1}{2}\left(  1+\sqrt{F}(\rho_{S}^{0},\rho_{S}^{1})\right)  .
\end{equation}

\end{theorem}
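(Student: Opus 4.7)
The plan is to carry the global pure state through the protocol, parametrize the prover's most general action as an isometry, reduce the amplitude of the final Bell projection to a sum over only the ``diagonal'' blocks of that isometry, and then identify the resulting maximization over provers with Uhlmann's theorem. After Step~2, the joint state is
\[
|\eta\rangle_{T'TRS}=\tfrac{1}{\sqrt 2}\sum_{i\in\{0,1\}}|ii\rangle_{T'T}|\psi^i\rangle_{RS},
\]
because the controlled unitary in \eqref{eq:controlled-U-fid-alg-bell-overlap} applies $U^i_{RS}$ conditioned on $T=i$. Since the prover starts with the fresh ancilla $|0\rangle_F$ and applies a unitary on $T'RF$, its most general action is equivalent to some isometry $V\colon T'R\to T''F'$, which I would expand in the computational bases of the qubits $T'$ and $T''$ as $V=\sum_{j,i\in\{0,1\}}|j\rangle\!\langle i|_{T''T'}\otimes V^{ji}_{R\to F'}$, with blocks obeying $\sum_j (V^{ji})^\dag V^{jk}=\delta_{ik}I_R$.

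Next I would apply the Bell projection $\langle\Phi|_{T''T}$ to the post-prover state. Because this projector selects only terms with equal indices on $T$ and $T''$, all off-diagonal blocks are annihilated and only the $V^{ii}$ survive, leaving the unnormalized vector $\tfrac12\sum_i(V^{ii}\otimes I_S)|\psi^i\rangle_{RS}$ on $SF'$. Hence
\[
p_{\text{acc}}(V)=\tfrac14\Big\|\sum_{i\in\{0,1\}}(V^{ii}\otimes I_S)|\psi^i\rangle_{RS}\Big\|_2^2.
\]
Writing $|\mu^i\rangle\coloneqq(V^{ii}\otimes I_S)|\psi^i\rangle_{RS}$ and expanding the squared norm, the diagonal terms obey $\langle\mu^i|\mu^i\rangle=\mathrm{Tr}[(V^{ii})^\dag V^{ii}\rho^i_R]\le 1$ by the isometry constraints, so $p_{\text{acc}}(V)\le \tfrac12\bigl(1+\mathrm{Re}\,\langle\mu^0|\mu^1\rangle\bigr)$.

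Finally I would optimize over $V$. The block-diagonal choice $V^{00}=A^0$, $V^{11}=A^1$, $V^{01}=V^{10}=0$ is a valid isometry whenever $A^0,A^1\colon R\to F'$ are isometries, and for this choice $\langle\mu^i|\mu^i\rangle=1$ while $|\mu^i\rangle$ is a purification of $\rho^i_S$; moreover, as $A^i$ varies over isometries into a sufficiently large $F'$, $|\mu^i\rangle$ ranges over all purifications of $\rho^i_S$. Uhlmann's formula \eqref{eq:uhlmann-thm} then identifies $\max|\langle\mu^0|\mu^1\rangle|$ with $\sqrt F(\rho^0_S,\rho^1_S)$, and absorbing a global phase into $A^0$ makes the real part attain this value, giving $p_{\text{acc}}=\tfrac12(1+\sqrt F)$. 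The step I expect to be most delicate is the feasibility check that an independent pair of purifications of $\rho^0_S$ and $\rho^1_S$ can be realized by a single isometry $V$; the block-diagonal construction is exactly what decouples the two choices, and it must be paired with the contraction bound on $(V^{ii})^\dag V^{ii}$ to certify that no other prover strategy does better.
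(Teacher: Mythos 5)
Your proposal follows essentially the same route as the paper's proof: track the pure state through the protocol, isolate the diagonal blocks $V^{ii}$ of the prover's isometry via the Bell projection, bound the diagonal terms of the norm expansion using $(V^{ii})^\dagger V^{ii}\leq I_R$, and identify the cross term with the root fidelity. The state-tracking, the reduction to $p_{\mathrm{acc}}(V)=\tfrac14\lVert\sum_i (V^{ii}\otimes I_S)|\psi^i\rangle\rVert_2^2$, and the achievability argument via the block-diagonal isometry are all correct.

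However, there is a genuine gap in the optimality (upper-bound) direction. After reducing to $p_{\mathrm{acc}}(V)\leq\tfrac12\bigl(1+\operatorname{Re}\langle\mu^0|\mu^1\rangle\bigr)$, you still must show that
\begin{equation*}
\bigl\vert\langle\psi^0|_{RS}(V^{00})^{\dagger}V^{11}\otimes I_S|\psi^1\rangle_{RS}\bigr\vert\leq\sqrt{F}(\rho_S^0,\rho_S^1)
\end{equation*}
for \emph{every} admissible prover isometry $V$. For a general $V$ with nonzero off-diagonal blocks, the diagonal blocks $V^{00},V^{11}$ are only contractions, so $|\mu^0\rangle,|\mu^1\rangle$ are subnormalized and are \emph{not} purifications of $\rho_S^0,\rho_S^1$; Uhlmann's formula \eqref{eq:uhlmann-thm}, which you invoke to identify $\max\vert\langle\mu^0|\mu^1\rangle\vert$ with $\sqrt{F}$, only covers the case where the blocks are isometries and hence only delivers the $\geq$ direction. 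The contraction bound $(V^{ii})^{\dagger}V^{ii}\leq I_R$ that you cite as closing the argument controls the diagonal terms $\langle\mu^i|\mu^i\rangle\leq 1$ but says nothing about the cross term. The paper fills exactly this hole with a separate argument: every contraction is a convex combination of isometries \cite[Theorem~5.10]{Z11}, so the maximum of the cross term over contractions cannot exceed its maximum over isometries, which is $\sqrt{F}$ by Uhlmann. (An alternative one-line fix: write the cross term as $\operatorname{Tr}[(V^{00})^{\dagger}V^{11}X_R]$ with $X_R\coloneqq\operatorname{Tr}_S[|\psi^1\rangle\!\langle\psi^0|_{RS}]$ and use $\vert\operatorname{Tr}[AX]\vert\leq\lVert A\rVert_\infty\lVert X\rVert_1$ together with $\lVert X_R\rVert_1=\sqrt{F}(\rho_S^0,\rho_S^1)$.) Without one of these arguments the claim that no prover strategy beats $\tfrac12(1+\sqrt{F})$ is unproven. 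Note also that the step you flagged as most delicate---realizing an independent pair of purifications by a single isometry---is the easy (achievability) direction; the real work is the converse.
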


\begin{proof}
The proof can be found in Appendix~\ref{app:ProofAlg4}.
\end{proof}

\subsubsection{Generalized swap test}

\label{sec:mixed-state-swap-test-alg}We now detail another quantum algorithm
for estimating the fidelity of arbitrary states, which is a generalization of the
well known swap test from \cite{BBD+97,BCWW01}. We note that this algorithm was used in \cite[Figure~3]{KW00} as part of their proof that QIP = QIP(3). A key difference between Algorithm~\ref{alg:mixed-state-swap-test} and \cite[Figure~3]{KW00} is that Algorithm~\ref{alg:mixed-state-swap-test} accepts if and only if both qubits at the end are measured to be in the all-zeros state, whereas it is written in \cite[Figure~3]{KW00} that their algorithm accepts if and only if the first qubit is measured to be in the zero state.

\begin{algorithm}
\label{alg:mixed-state-swap-test} The algorithm proceeds as follows:

\begin{enumerate}
\item The verifier prepares a Bell state%
\begin{equation}
|\Phi\rangle_{T^{\prime}T}\coloneqq\frac{1}{\sqrt{2}}(|00\rangle_{T^{\prime}%
T}+|11\rangle_{T^{\prime}T})
\end{equation}
on registers $T^{\prime}$ and $T$ and prepares systems $R_{1}S_{1}R_{2}S_{2}$
in the all-zeros state $|0\rangle_{R_{1}S_{1}R_{2}S_{2}}$.

\item Using the circuits $U_{RS}^{0}$ and $U_{RS}^{1}$, the verifier acts on
$R_{1}S_{1}R_{2}S_{2}$ to prepare the two pure states $|\psi^{\rho^{0}}%
\rangle_{R_{1}S_{1}}$ and $|\psi^{\rho^{1}}\rangle_{R_{2}S_{2}}$.

\item The verifier performs a controlled SWAP\ from qubit$~T$ to
systems$~S_{1}$ and$~S_{2}$, which applies the identity if the control qubit is
$|0\rangle$ and swaps $S_{1}$ with $S_{2}$ if the control qubit is $|1\rangle$.

\item The verifier transmits systems $T^{\prime}$, $R_{1}$, and $R_{2}$ to the prover.

\item The prover prepares a system$~F$ in the $|0\rangle_{F}$ state and acts
on systems $T^{\prime}$, $R_{1}$, $R_{2}$, and $F$ with a unitary
$P_{T^{\prime}R_{1}R_{2}F\rightarrow T^{\prime\prime}F^{\prime}}$ to produce
the output systems$~T^{\prime\prime}$ and$~F^{\prime}$, where~$T^{\prime
\prime}$ is a qubit system.

\item The prover sends system~$T^{\prime\prime}$ to the verifier, who then
performs a Bell measurement%
\begin{equation}
\{\Phi_{T^{\prime\prime}T},I_{T^{\prime\prime}T}-\Phi_{T^{\prime\prime}T}\}
\end{equation}
on systems$~T^{\prime\prime}$ and~$T$. The verifier accepts if and only if the
outcome $\Phi_{T^{\prime\prime}T}$ occurs.
\end{enumerate}
\end{algorithm}

Figure~\ref{fig:fid-alg-mixed-states-swap-test} depicts
Algorithm~\ref{alg:mixed-state-swap-test}.

\begin{figure}[ptb]
\begin{center}
\includegraphics[
width=\linewidth
]{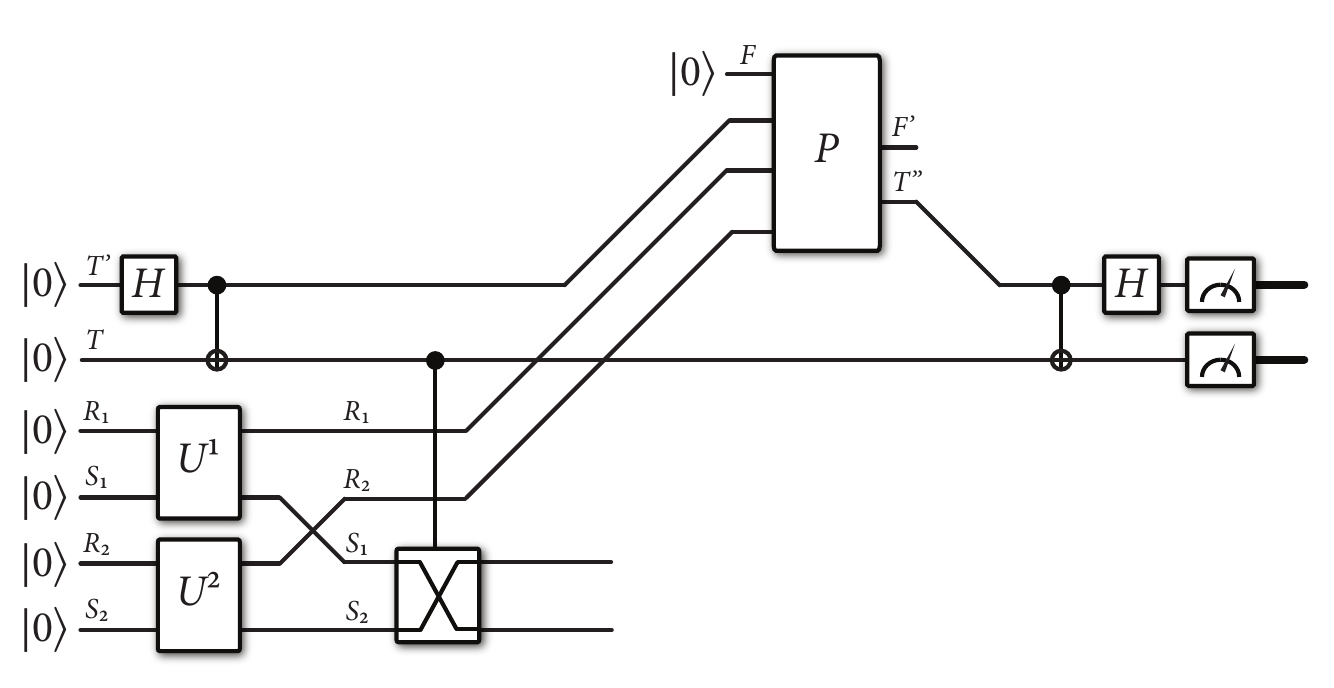}
\end{center}
\caption{This figure depicts Algorithm~\ref{alg:mixed-state-swap-test} for
estimating the fidelity of mixed states generated by quantum circuits
$U^{0}_{RS}$ and $U^{1}_{RS}$. Algorithm~\ref{alg:mixed-state-swap-test}
represents a generalization of the well known swap test for estimating the
fidelity of pure states.}%
\label{fig:fid-alg-mixed-states-swap-test}%
\end{figure}

\begin{theorem}
\label{thm:acc-prob-mixed-state-swap-test}
The acceptance probability of Algorithm~\ref{alg:mixed-state-swap-test}\ is
equal to%
\begin{equation}
\frac{1}{2}\left(  1+F(\rho_{S}^{0},\rho_{S}^{1})\right)  .
\label{eq:accept-prob-mixed-swap-test}
\end{equation}

\end{theorem}

\begin{proof}
The proof can be found in Appendix~\ref{app:ProofAlg5}.
\end{proof}

\subsubsection{Variational algorithm with Bell measurements}

\label{sec:mixed-state-Bell-tests-alg}A third method for estimating the
fidelity of arbitrary multi-qubit states is a variational algorithm that is
based on a generalization of the approach outlined in \cite{GC13,Suba__2019}.
The approach from \cite{GC13,Suba__2019} employs Bell measurements to estimate
the expectation of the SWAP\ observable, which in turn allows for estimating
the fidelity of multi-qubit pure states. See also \cite{Brun04}.

We begin in this section by recalling the basic idea from
\cite{GC13,Suba__2019} for estimating fidelity of pure states. Let $\psi_{S}$
and $\varphi_{S}$ be $m$-qubit pure states of a system$~S$ (so that
$S=S_{1}\cdots S_{m}$, where each $S_{i}$ is a qubit system, for $i\in\left\{
1,\ldots,m\right\}  $). Let $F_{S\tilde{S}}$ denote the unitary swap operator
that swaps systems $S$ and $\tilde{S}$, and recall that%
\begin{equation}
\operatorname{Tr}[F_{S\tilde{S}}(\psi_{S}\otimes\varphi_{\tilde{S}%
})]=\left\vert \langle\psi|\varphi\rangle\right\vert ^{2}=F(\psi_{S}%
,\varphi_{S}).
\end{equation}
Consider that%
\begin{equation}
F_{S\tilde{S}}=F_{S_{1}\tilde{S}_{1}}\otimes F_{S_{2}\tilde{S}_{2}}%
\otimes\cdots\otimes F_{S_{m}\tilde{S}_{m}}.
\end{equation}
Now observe that%
\begin{equation}
F_{S_{i}\tilde{S}_{i}}=\sum_{x,z\in\left\{  0,1\right\}  }\left(  -1\right)
^{x\cdot z}\Phi_{S_{i}\tilde{S}_{i}}^{x,z},
\end{equation}
where the Bell states are defined as%
\begin{align}
|\Phi^{0,0}\rangle &  \coloneqq\frac{1}{\sqrt{2}}\left(  |00\rangle
+|11\rangle\right)  ,\\
|\Phi^{0,1}\rangle &  \coloneqq\frac{1}{\sqrt{2}}\left(  |00\rangle
-|11\rangle\right)  ,\\
|\Phi^{1,0}\rangle &  \coloneqq\frac{1}{\sqrt{2}}\left(  |01\rangle
+|10\rangle\right)  ,\\
|\Phi^{1,1}\rangle &  \coloneqq\frac{1}{\sqrt{2}}\left(  |01\rangle
-|10\rangle\right)  .
\end{align}
We then conclude that%
\begin{align}
&  F(\psi_{S},\varphi_{S})\nonumber\\
&  =\operatorname{Tr}\!\left[  \left(  \bigotimes\limits_{i=1}^{m}%
F_{S_{i}\tilde{S}_{i}}\right)  \left(  \psi_{S}\otimes\varphi_{\tilde{S}%
}\right)  \right] \\
&  =\operatorname{Tr}\!\left[  \left(  \bigotimes\limits_{i=1}^{m}\sum
_{x_{i},z_{i}\in\left\{  0,1\right\}  }\left(  -1\right)  ^{x_{i}\cdot z_{i}%
}\Phi_{S_{i}\tilde{S}_{i}}^{x_{i},z_{i}}\right)  \left(  \psi_{S}%
\otimes\varphi_{\tilde{S}}\right)  \right] \\
&  =\sum_{\substack{x_{1},z_{1},\ldots,\\x_{m},z_{m}\in\left\{  0,1\right\}
}}\left(  -1\right)  ^{\overrightarrow{x}\cdot\overrightarrow{z}%
}\operatorname{Tr}\!\left[  \left(  \bigotimes\limits_{i=1}^{m}\Phi
_{S_{i}\tilde{S}_{i}}^{x_{i},z_{i}}\right)  \left(  \psi_{S}\otimes
\varphi_{\tilde{S}}\right)  \right]  ,
\end{align}
where%
\begin{equation}
\overrightarrow{x}\cdot\overrightarrow{z}\equiv\sum_{i=1}^{m}x_{i}\cdot z_{i}.
\end{equation}
Thus, the approach of \cite{GC13,Suba__2019} is to estimate $F(\psi
_{S},\varphi_{S})$ by repeatedly performing Bell measurements on corresponding
qubits of $\psi_{S}$ and $\varphi_{\tilde{S}}$ followed by classical
postprocessing of the outcomes. In particular, for $j\in\left\{
1,\ldots,n\right\}  $, set $Y_{j}=\left(  -1\right)  ^{\sum_{i=1}^{m}%
x_{i}\cdot z_{i}}$, where $x_{1},z_{1},\ldots,x_{m},z_{m}\in\left\{
0,1\right\}  $ are the outcomes of the Bell measurements on the $j$th
iteration. Then set $\overline{Y^{n}}\coloneqq\frac{1}{n}\sum_{j=1}^{n}Y_{j}$.
By the Hoeffding inequality \cite{H63}, for accuracy $\varepsilon\in(0,1)$ and
failure probability $\delta\in(0,1)$, we are guaranteed that
\begin{equation}
\Pr[\left\vert \overline{Y^{n}}-F(\psi_{S},\varphi_{S})\right\vert
\leq\varepsilon]\geq1-\delta,
\end{equation}
as long as $n\geq\frac{2}{\varepsilon^{2}}\ln\!\left(  \frac{2}{\delta
}\right)  $. Thus, the algorithm is polynomial in the inverse accuracy and
logarithmic in the inverse failure probability.

We now form a simple generalization of this algorithm to estimate the fidelity
of arbitrary states $\rho_{S}^{0}$ and $\rho_{S}^{1}$, in which we perform a
variational optimization over unitaries that act on the reference system of
one of the states. For $i\in\left\{  0,1\right\}  $, let $U_{RS}^{i}$ be an
$m$-qubit unitary that acts on $|0\rangle_{RS}$ to generate the $m$-qubit
state $|\psi^{\rho^{i}}\rangle_{RS}$; i.e.,%
\begin{equation}
|\psi^{\rho^{i}}\rangle_{RS}=U_{RS}^{i}|0\rangle_{RS},
\end{equation}
such that%
\begin{equation}
\rho_{S}^{i}=\operatorname{Tr}_{R}[|\psi^{\rho^{i}}\rangle\!\langle\psi
^{\rho^{i}}|_{RS}].
\end{equation}

\begin{algorithm}
\label{alg:mixed-state-Bell-tests} Set the error tolerance $\varepsilon>0$.
Set $\eta,\delta\in(0,1)$. The algorithm proceeds as follows:

\begin{enumerate}
\item Prepare systems $R_{1}S_{1}R_{2}S_{2}$ in the all-zeros state
$|0\rangle_{R_{1}S_{1}R_{2}S_{2}}$.

\item Act with the circuits $U_{RS}^{0}$ and $U_{RS}^{1}$ on systems
$R_{1}S_{1}R_{2}S_{2}$ to prepare the two pure states $|\psi^{\rho^{0}}%
\rangle_{R_{1}S_{1}}$ and $|\psi^{\rho^{1}}\rangle_{R_{2}S_{2}}$.

\item Perform a unitary $V_{R_{1}}(\mathbf{\theta})$ on system $R_{1}$.

\item For $j\in\left\{  1,\ldots,n\right\}  $, where $n\geq\frac{2}{\eta^{2}%
}\ln\!\left(  \frac{2}{\delta}\right)  $, for $i\in\left\{  1,\ldots
,m\right\}  $, perform a Bell measurement on qubit $i$ of system $R_{1}$ and
qubit $i$ of system $R_{2}$, with outcomes $x_{R}^{i}$ and $z_{R}^{i}$, and
perform a Bell measurement on qubit $i$ of system $S_{1}$ and qubit~$i$ of
system $S_{2}$, with outcomes $x_{S}^{i}$ and $z_{S}^{i}$. Set $Y_{j}%
(\mathbf{\theta})=\left(  -1\right)  ^{\sum_{i=1}^{m}x_{R}^{i}\cdot z_{R}%
^{i}+x_{S}^{i}\cdot z_{S}^{i}}$.

\item Set
\begin{equation}
\overline{Y^{n}}(\mathbf{\theta})\coloneqq\frac{1}{n}\sum_{j=1}^{n}%
Y_{j}(\mathbf{\theta}),
\end{equation}
as an estimate of
\begin{equation}
F_{\mathbf{\theta}}\equiv\left\vert \langle\psi^{\rho^{1}}|_{RS}%
V_{R}(\mathbf{\theta})\otimes I_{S}|\psi^{\rho^{0}}\rangle_{RS}\right\vert
^{2},
\end{equation}
so that%
\begin{equation}
\Pr\!\left[  \left\vert \overline{Y^{n}}(\mathbf{\theta})-F_{\mathbf{\theta}%
}\right\vert \leq\eta\right]  \geq1-\delta.
\end{equation}

\item Perform a maximization of the reward function $\overline{Y^{n}%
}(\mathbf{\theta})$ and update the parameters in $\mathbf{\theta}$.

\item Repeat 1-6 until the reward function $\overline{Y^{n}}(\mathbf{\theta})$
converges with tolerance $\varepsilon$, so that $\left\vert \Delta
\overline{Y^{n}}(\mathbf{\theta})\right\vert \leq\varepsilon$, or until some
maximum number of iterations is reached. (Here $\Delta\overline{Y^{n}%
}(\mathbf{\theta})$ represents the difference in $\overline{Y^{n}%
}(\mathbf{\theta})$ from the previous and current iteration.)

\item Output the final $\overline{Y^{n}}(\mathbf{\theta})$ as an estimate of
the fidelity $F(\rho_{S}^{0},\rho_{S}^{1})$.
\end{enumerate}
\end{algorithm}

\begin{figure}
\begin{center}
\includegraphics[
width=\linewidth
]{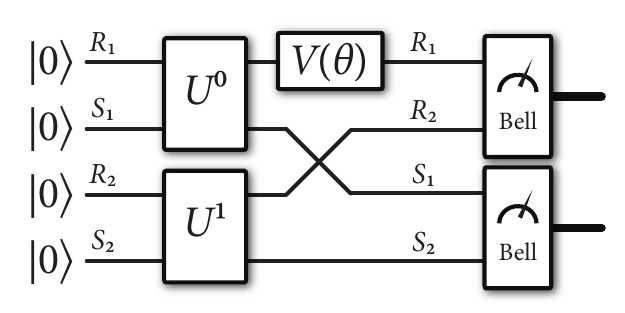}
\end{center}
\caption{This figure depicts Algorithm~\ref{alg:mixed-state-Bell-tests} for
estimating the fidelity of quantum states generated by quantum circuits
$U^{0}_{ RS}$ and $U^{1}_{RS}$.}%
\label{fig:fid-alg-mixed-states-Bell-test}%
\end{figure}

Figure~\ref{fig:fid-alg-mixed-states-Bell-test} depicts
Algorithm~\ref{alg:mixed-state-Bell-tests}. Since this is a variational
algorithm, it is not guaranteed to converge or have a specified runtime, other than running for a maximum number of iterations.
However, it is clearly a generalization of the algorithm from
\cite{GC13,Suba__2019}, in which we estimate the fidelity%
\begin{multline}
\left\vert \langle\psi^{\rho^{1}}|_{RS}V_{R}(\mathbf{\theta})\otimes
I_{S}|\psi^{\rho^{0}}\rangle_{RS}\right\vert ^{2}\\
=F(\psi_{RS}^{\rho^{1}},V_{R}(\mathbf{\theta})\psi_{RS}^{\rho^{0}}%
V_{R}(\mathbf{\theta})^{\dag})
\end{multline}
at each iteration of the algorithm. If we could actually optimize over all
possible unitaries acting on the reference system~$R$, then the algorithm
would indeed estimate the fidelity, as a consequence of Uhlmann's theorem
\cite{U76}:%
\begin{equation}
F(\rho_{S}^{0},\rho_{S}^{1})=\sup_{V_{R}}F(\psi_{RS}^{\rho^{1}},V_{R}\psi
_{RS}^{\rho^{0}}V_{R}^{\dag}).
\end{equation}
However, by optimizing over only a subset of all unitaries,
Algorithm~\ref{alg:mixed-state-Bell-tests}\ estimates a lower bound on the
fidelity $F(\rho_{S}^{0},\rho_{S}^{1})$.

\subsubsection{Variational algorithm for Fuchs--Caves measurement}

Algorithm~\ref{alg:fid-states}\ from
Section~\ref{sec:contr-U-bell-overlap-alg} is based on Uhlmann's formula for
fidelity in \eqref{eq:uhlmann-thm}, and the same is true for
Algorithm~\ref{alg:mixed-state-swap-test}\ from
Section~\ref{sec:mixed-state-swap-test-alg} and
Algorithm~\ref{alg:mixed-state-Bell-tests}\ from
Section~\ref{sec:mixed-state-Bell-tests-alg}. An alternate optimization
formula for the fidelity of states $\rho_{S}^{0}$ and $\rho_{S}^{1}$ is as
follows \cite{FC95}:%
\begin{equation}
F(\rho_{S}^{0},\rho_{S}^{1})=\left[  \min_{\left\{  \Lambda_{S}^{x}\right\}
_{x}}\sum_{x}\sqrt{\operatorname{Tr}[\Lambda_{S}^{x}\rho_{S}^{0}%
]\operatorname{Tr}[\Lambda_{S}^{x}\rho_{S}^{1}]}\right]  ^{2},
\label{eq:fid-min-opt-FC-meas}%
\end{equation}
where the minimization is over every positive operator-valued measure
$\left\{  \Lambda_{S}^{x}\right\}  _{x}$ (i.e., the operators satisfy
$\Lambda_{S}^{x}\geq0$ for all $x$ and $\sum_{x}\Lambda_{S}^{x}=I_{S}$). A
measurement achieving the optimal value of the fidelity is known as the
Fuchs--Caves measurement \cite{FC95} and has the form $\{|\varphi_{x}%
\rangle\!\langle\varphi_{x}|\}_{x}$, where $|\varphi_{x}\rangle$ is an
eigenvector, with eigenvalue $\lambda_{x}$, of the following operator
geometric mean of $\rho^{0}$ and $(\rho^{1})^{-1}$ (also called
\textquotedblleft quantum likelihood ratio\textquotedblright\ operator in
\cite{F96}):%
\begin{equation}
M\coloneqq \left(  \rho^{1}\right)  ^{-1/2}\sqrt{\left(  \rho^{1}\right)
^{1/2}\rho^{0}\left(  \rho^{1}\right)  ^{1/2}}\left(  \rho^{1}\right)
^{-1/2},
\end{equation}
so that%
\begin{equation}
M=\sum_{x}\lambda_{x}|\varphi_{x}\rangle\!\langle\varphi_{x}|.
\end{equation}
That is, it is known from \cite{FC95,F96} that%
\begin{equation}
F(\rho_{S}^{0},\rho_{S}^{1})=\left[  \sum_{x}\sqrt{\operatorname{Tr}%
[|\varphi_{x}\rangle\!\langle\varphi_{x}|\rho_{S}^{0}]\operatorname{Tr}%
[|\varphi_{x}\rangle\!\langle\varphi_{x}|\rho_{S}^{1}]}\right]  ^{2}.
\end{equation}

Thus, we can build a variational algorithm around this formulation of
fidelity, with the idea being to optimize over parameterized measurements in
an attempt to optimize the fidelity, while at the same time learn the
Fuchs--Caves measurement (or a different fidelity-achieving measurement). In contrast to the other variational algorithms
presented in previous sections, this alternate approach leads to an upper
bound on the fidelity.

Before detailing the algorithm, recall the Naimark extension theorem
\cite{N40} (see also \cite{Wbook17,Wat18,KW20book}), which states that a
general POVM\ $\{\Lambda_{S}^{x}\}_{x}$\ with $m$ outcomes, acting on a
quantum state $\rho$ of a $d$-dimensional system $S$, can be realized as a
unitary interaction $U_{SP}$ of the system $S$ with an $m$-dimensional probe
system $P$, followed by a projective measurement $\{|x\rangle\!\langle
x|_{P}\}_{x}$ acting on the probe system. That is,%
\begin{equation}
\operatorname{Tr}[\Lambda_{S}^{x}\rho_{S}]=\operatorname{Tr}[(I_{S}%
\otimes|x\rangle\!\langle x|_{P})U_{SP}(\rho_{S}\otimes|0\rangle
\!\langle0|_{P})U_{SP}^{\dag}].
\end{equation}
It suffices to choose $U_{SP}$ so that%
\begin{equation}
U_{SP}|\psi\rangle_{S}|0\rangle_{P}=\sum_{x}\sqrt{\Lambda_{S}^{x}}|\psi
\rangle_{S}|x\rangle_{P}.
\end{equation}
Thus, we can express the optimization problem in
\eqref{eq:fid-min-opt-FC-meas} as follows:%
\begin{multline}
\sqrt{F}(\rho_{S}^{0},\rho_{S}^{1})=\label{eq:fid-min-all-unitaries}\\
  \min_{U_{SP}}\sum_{x}\sqrt{%
\begin{array}
[c]{c}%
\operatorname{Tr}[(I_{S}\otimes|x\rangle\!\langle x|_{P})U_{SP}(\rho_{S}%
^{0}\otimes|0\rangle\!\langle0|_{P})U_{SP}^{\dag}]\times\\
\operatorname{Tr}[(I_{S}\otimes|x\rangle\!\langle x|_{P})U_{SP}(\rho_{S}%
^{1}\otimes|0\rangle\!\langle0|_{P})U_{SP}^{\dag}]
\end{array}
}.
\end{multline}
By replacing the optimization in \eqref{eq:fid-min-all-unitaries}\ over all
unitaries with an optimization over parameterized ones, we arrive at a
variational algorithm for estimating fidelity:

\begin{algorithm}
\label{alg:mixed-state-FC-meas-min} Set $n \in\mathbb{N}$ and the error
tolerance $\varepsilon>0$. The algorithm proceeds as follows:

\begin{enumerate}
\item For $j\in\left\{  1,\ldots,n\right\}  $, prepare system $S_{1}$ in the
state $\rho_{S_{1}}^{0}$ and system $S_{2}$ in the state $\rho_{S_{2}}^{1}$,
and prepare systems $P_{1}$ and $P_{2}$ in the all-zeros state $|0\rangle
_{P_{1}}\otimes|0\rangle_{P_{2}}$.

\item Act with the circuit $U_{S_{1}P_{1}}(\mathbf{\theta})$ on systems
$S_{1}P_{1}$ and act with the same circuit $U_{S_{2}P_{2}}(\mathbf{\theta})$
on systems $S_{2}P_{2}$.

\item Measure system $P_{1}$ in the computational basis and record the outcome
as $y_{j}$, and measure system $P_{2}$ in the computational basis and record
the outcome as $z_{j}$.

\item Using the measurement data $\left\{  y_{j}\right\}  _{j=1}^{n}$ and
$\left\{  z_{j}\right\}  _{j=1}^{n}$, calculate the empirical distributions
$\tilde{p}_{\mathbf{\theta}}(x)$ and $\tilde{q}_{\mathbf{\theta}}(x)$, where
$\tilde{p}_{\mathbf{\theta}}(x)$ is the empirical distribution resulting from%
\begin{multline}
p_{\mathbf{\theta}}(x)\coloneqq\\
\operatorname{Tr}[(I_{S}\otimes|x\rangle\!\langle x|_{P})U_{SP}(\mathbf{\theta
})(\rho_{S}^{0}\otimes|0\rangle\!\langle0|_{P})U_{SP}^{\dag}(\mathbf{\theta
})],
\end{multline}
and $\tilde{q}_{\mathbf{\theta}}(x)$ is the empirical distribution resulting
from%
\begin{multline}
q_{\mathbf{\theta}}(x)\coloneqq\\
\operatorname{Tr}[(I_{S}\otimes|x\rangle\!\langle x|_{P})U_{SP}(\mathbf{\theta
})(\rho_{S}^{1}\otimes|0\rangle\!\langle0|_{P})U_{SP}^{\dag}(\mathbf{\theta
})].
\end{multline}

\item Output
\begin{equation}
F(\tilde{p}_{\mathbf{\theta}},\tilde{q}_{\mathbf{\theta}})\coloneqq \left[
\sum_{x}\sqrt{\tilde{p}_{\mathbf{\theta}}(x)\tilde{q}_{\mathbf{\theta}}%
(x)}\right]  ^{2} \label{eq:fid-estimator-FC-alg}%
\end{equation}
as an estimate of $F(p_{\mathbf{\theta}},q_{\mathbf{\theta}})$.

\item Perform a minimization of the cost function $F(\tilde{p}_{\mathbf{\theta}%
},\tilde{q}_{\mathbf{\theta}})$ and update the parameters in $\mathbf{\theta}$.

\item Repeat 1-6 until the cost function $F(\tilde{p}_{\mathbf{\theta}}%
,\tilde{q}_{\mathbf{\theta}})$ converges with tolerance $\varepsilon$, so that
$\left\vert \Delta F(\tilde{p}_{\mathbf{\theta}},\tilde{q}_{\mathbf{\theta}%
})\right\vert \leq\varepsilon$, or until some maximum number of iterations is
reached. (Here $\Delta F(\tilde{p}_{\mathbf{\theta}},\tilde{q}_{\mathbf{\theta
}})$ represents the difference in $F(\tilde{p}_{\mathbf{\theta}},\tilde
{q}_{\mathbf{\theta}})$ from the previous and current iteration.)

\item Output the final value of $F(\tilde{p}_{\mathbf{\theta}},\tilde
{q}_{\mathbf{\theta}})$ as an estimate of the fidelity $F(\rho_{S}^{0}%
,\rho_{S}^{1})$.
\end{enumerate}
\end{algorithm}

\begin{figure}
\begin{center}
\includegraphics[
width=\linewidth
]{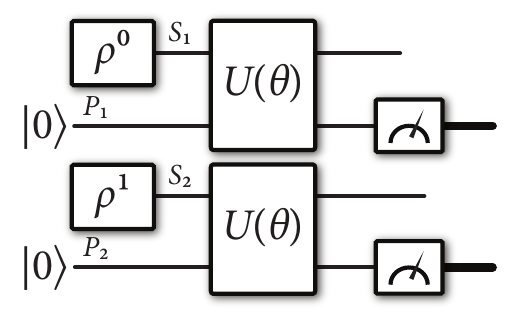}
\end{center}
\caption{This figure depicts Algorithm~\ref{alg:mixed-state-FC-meas-min} for
estimating the fidelity of quantum states $\rho^{0} _{ S}$ and $\rho^{1}_{S}$.
The boxes enclosing $\rho^{0}$ and $\rho^{1}$ indicate that these are some
mechanisms by which these states are prepared.}%
\label{fig:fid-alg-mixed-states-FC-meas}%
\end{figure}

Figure~\ref{fig:fid-alg-mixed-states-FC-meas} depicts
Algorithm~\ref{alg:mixed-state-FC-meas-min}. As before, since this is a
variational algorithm, it is not guaranteed to converge or have a specified
runtime, other than running for a maximum number of iterations. One advantage of this algorithm is that it does not require
purifications of the states $\rho_{S}^{0}$ and $\rho_{S}^{1}$. All it requires
is a circuit or method to prepare these states, and then it performs
measurements on these states, in an attempt to learn an optimal measurement
with respect to the cost function $F(\tilde{p}_{\mathbf{\theta}},\tilde
{q}_{\mathbf{\theta}})$.

In Algorithm~\ref{alg:mixed-state-FC-meas-min}, we did not specify how large
$n$ should be in order to get a desired accuracy of the estimator in
\eqref{eq:fid-estimator-FC-alg} for the classical fidelity
$F(p_{\mathbf{\theta}},q_{\mathbf{\theta}})$. This estimator is called a
``plug-in estimator'' in the literature on this topic, and it is a biased
estimator, which however converges to $F(p_{\mathbf{\theta}},q_{\mathbf{\theta
}})$ in the asymptotic limit $n \to\infty$. As a consequence of the estimator
in \eqref{eq:fid-estimator-FC-alg} being biased, the Hoeffding inequality does
not readily apply in this case. As far as we can tell, it is an open question
to determine the rate of convergence of this estimator to $F(p_{\mathbf{\theta
}},q_{\mathbf{\theta}})$. Related work on this topic has been considered in
\cite{JVHW15,AOST17}.

\subsection{Estimating fidelity of channels}

\label{sec:fid-channels}In this section, we outline a method for estimating
the fidelity of channels on a quantum computer, by means of an interaction
with competing quantum provers \cite{GW05,G05,GW06,G09,GW13}. The goal of one
prover is to maximize the acceptance probability, while the goal of the other
prover is to minimize the acceptance probability. We refer to the first prover
as the max-prover and the second as the min-prover. The specific setting that
we deal with is called a double quantum interactive proof (DQIP) \cite{GW13},
due to the fact that the min-prover goes first and then the max-prover goes
last. The class of promise problems that can be solved in this model is
equivalent to PSPACE \cite{GW13}, which is the class of problems that can be
decided on a classical computer with polynomial memory.

Let us recall that the fidelity of channels $\mathcal{N}_{A\rightarrow B}^{0}$
and $\mathcal{N}_{A\rightarrow B}^{1}$ is defined as follows \cite{GLN04}:%
\begin{equation}
F(\mathcal{N}_{A\rightarrow B}^{0},\mathcal{N}_{A\rightarrow B}^{1}%
)\coloneqq\inf_{\rho_{RA}}F(\mathcal{N}_{A\rightarrow B}^{0}(\rho
_{RA}),\mathcal{N}_{A\rightarrow B}^{1}(\rho_{RA})),
\label{eq:def-fid-channels}%
\end{equation}
where the infimum is over every state $\rho_{RA}$, with the reference system
$R$ arbitrarily large. It is known that the infimum is achieved by a pure
state $\psi_{RA}$ with the reference system $R$ isomorphic to the channel
input system~$A$, so that%
\begin{equation}
F(\mathcal{N}_{A\rightarrow B}^{0},\mathcal{N}_{A\rightarrow B}^{1}%
)\coloneqq\min_{\psi_{RA}}F(\mathcal{N}_{A\rightarrow B}^{0}(\psi
_{RA}),\mathcal{N}_{A\rightarrow B}^{1}(\psi_{RA})).
\label{eq:def-fid-channels-simplify}%
\end{equation}
It is also known that it is possible to calculate the fidelity of channels by
means of a semi-definite program \cite{Yuan2017,KW20}, which provides a way to
verify the output of our proposed algorithm for sufficiently small examples.

Suppose that the goal is to estimate the fidelity of channels $\mathcal{N}%
_{A\rightarrow B}^{0}$ and $\mathcal{N}_{A\rightarrow B}^{1}$, and we are
given access to quantum circuits $U_{AE^{\prime}\rightarrow BE}^{0}$ and
$U_{AE^{\prime}\rightarrow BE}^{1}$ that realize isometric extensions of the
channels $\mathcal{N}_{A\rightarrow B}^{0}$ and $\mathcal{N}_{A\rightarrow
B}^{1}$, respectively, in the sense that%
\begin{multline}
\mathcal{N}_{A\rightarrow B}^{i}(\omega_{A})=\\
\operatorname{Tr}_{E}[U_{AE^{\prime}\rightarrow BE}^{i}(\omega_{A}%
\otimes|0\rangle\!\langle0|_{E^{\prime}})(U_{AE^{\prime}\rightarrow BE}%
^{i})^{\dag}],
\end{multline}
for $i\in\left\{  0,1\right\}  $.

We now provide a DQIP algorithm for estimating the following quantity:%
\begin{equation}
\frac{1}{2}\left(  1+\sqrt{F}(\mathcal{N}_{A\rightarrow B}^{0},\mathcal{N}%
_{A\rightarrow B}^{1})\right)  ,
\end{equation}
which is based in part on Algorithm~\ref{alg:fid-states}\ but instead features
an optimization over input states of the min-prover.

\begin{algorithm}
\label{alg:fid-channels} The algorithm proceeds as follows:

\begin{enumerate}
\item The verifier prepares a Bell state
\begin{equation}
|\Phi\rangle_{T^{\prime}T}\coloneqq\frac{1}{\sqrt{2}}(|00\rangle_{T^{\prime}%
T}+|11\rangle_{T^{\prime}T})
\end{equation}
on registers $T^{\prime}$ and $T$ and prepares system $E^{\prime}$ in the
all-zeros state $|0\rangle_{E^{\prime}}$.

\item The min-prover transmits the system $A$ of the state $|\psi\rangle_{RA}$
to the verifier.

\item Using the circuits $U_{AE^{\prime}\rightarrow BE}^{0}$ and
$U_{AE^{\prime}\rightarrow BE}^{1}$, the verifier performs the following
controlled unitary:%
\begin{equation}
\label{eq:controlled-U-fid-channels}
\sum_{i\in\left\{  0,1\right\}  }|i\rangle\!\langle i|_{T}\otimes
U_{AE^{\prime}\rightarrow BE}^{i}.
\end{equation}

\item The verifier transmits systems $T^{\prime}$ and $E$ to the max-prover.

\item The max-prover prepares a system $F$ in the $|0\rangle_{F}$ state and
acts on systems $T^{\prime}$, $E$, and $F$ with a unitary $P_{T^{\prime
}EF\rightarrow T^{\prime\prime}F^{\prime}}$ to produce the output systems
$T^{\prime\prime}$ and $F^{\prime}$, where $T^{\prime\prime}$ is a qubit system.

\item The max-prover sends system $T^{\prime\prime}$ to the verifier, who then
performs a Bell measurement%
\begin{equation}
\{\Phi_{T^{\prime\prime}T},I_{T^{\prime\prime}T}-\Phi_{T^{\prime\prime}T}\}
\end{equation}
on systems $T^{\prime\prime}$ and $T$. The verifier accepts if and only if the
outcome $\Phi_{T^{\prime\prime}T}$ occurs.
\end{enumerate}
\end{algorithm}

Figure~\ref{fig:fid-alg-channels-min-max-1} depicts
Algorithm~\ref{alg:fid-channels}.

\begin{figure*}
\begin{center}
\includegraphics[
width=\linewidth
]{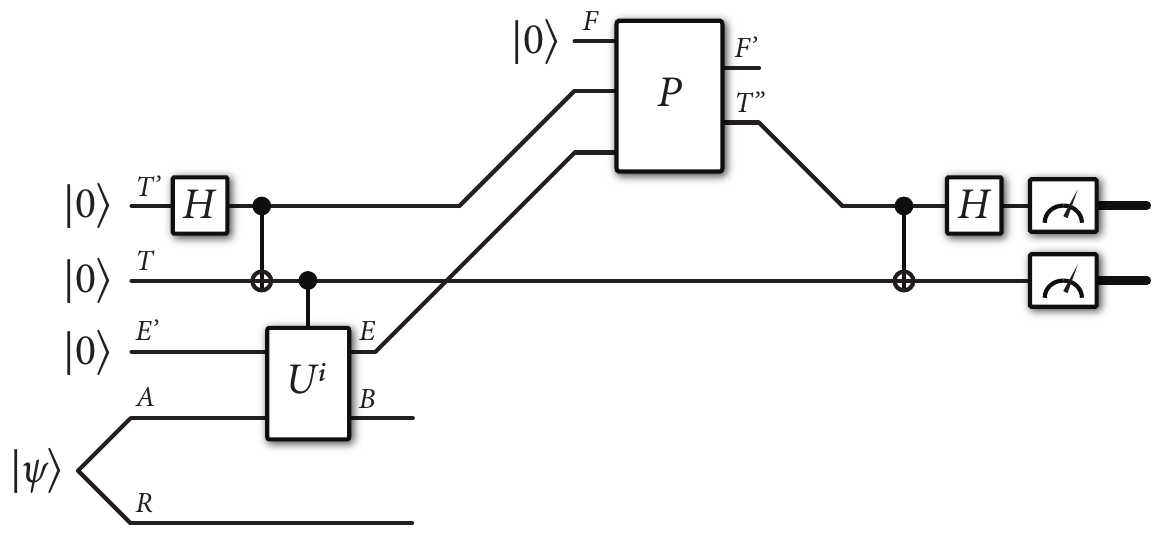}
\end{center}
\caption{This figure depicts Algorithm~\ref{alg:fid-channels} for estimating
the fidelity of quantum channels generated by quantum circuits $U_{AE^{\prime
}\rightarrow BE}^{0}$ and $U_{AE^{\prime}\rightarrow BE}^{1}$. The min-prover
prepares the state $|\psi\rangle_{RA}$ and the max-prover acts with the
unitary $P_{T^{\prime}EF\rightarrow T^{\prime\prime}F^{\prime}}$.}%
\label{fig:fid-alg-channels-min-max-1}%
\end{figure*}

\begin{theorem}
\label{thm:acc-prob-fid-channels}
The acceptance probability of Algorithm~\ref{alg:fid-channels} is equal to%
\begin{equation}
\frac{1}{2}\left(  1+\sqrt{F}(\mathcal{N}_{A\rightarrow B}^{0},\mathcal{N}%
_{A\rightarrow B}^{1})\right)  . \label{eq:acc-prob-fid-channels}%
\end{equation}

\end{theorem}

\begin{proof}
The proof can be found in Appendix~\ref{app:ProofAlg8}.
\end{proof}

\begin{proposition}
\label{prop:minimax-acc-prob-fid-channels}An alternative expression for the
acceptance probability of Algorithm~\ref{alg:fid-channels} is%
\begin{multline}
\min_{\rho_{RA}}\max_{\mathcal{P}_{T^{\prime}E\rightarrow T^{\prime\prime}}%
}\operatorname{Tr}[\Phi_{T^{\prime\prime}T}\mathcal{P}_{T^{\prime}E\rightarrow
T^{\prime\prime}}(\mathcal{M}_{A\rightarrow T^{\prime}TBE}(\rho_{RA}%
))]\label{eq:min-max-acc-prob-fid-channels}\\
=\max_{\mathcal{P}_{T^{\prime}E\rightarrow T^{\prime\prime}}}\min_{\rho_{RA}%
}\operatorname{Tr}[\Phi_{T^{\prime\prime}T}\mathcal{P}_{T^{\prime}E\rightarrow
T^{\prime\prime}}(\mathcal{M}_{A\rightarrow T^{\prime}TBE}(\rho_{RA}))],
\end{multline}
where $\rho_{RA}$ is a quantum state, $\mathcal{P}_{T^{\prime}E\rightarrow
T^{\prime\prime}}$ is a quantum channel, and $\mathcal{M}_{A\rightarrow
T^{\prime}TBE}$ is a quantum channel defined as%
\begin{multline}
\mathcal{M}_{A\rightarrow T^{\prime}TBE}(\rho_{RA}%
)\coloneqq\label{eq:M-qchannel-def}\\
\frac{1}{2}\sum_{i,j\in\left\{  0,1\right\}  }|ii\rangle\!\langle
jj|_{T^{\prime}T}\otimes U^{i}(\rho_{RA}\otimes|0\rangle\!\langle
0|_{E^{\prime}})(U^{j})^{\dag},
\end{multline}
with $U^{i}\equiv U_{AE^{\prime}\rightarrow BE}^{i}$.
\end{proposition}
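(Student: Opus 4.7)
The plan is to recast Algorithm~\ref{alg:fid-channels} in terms of abstract channels acting on the min-prover's state, and then deduce the minimax equality from Sion's minimax theorem. First, I would verify that steps~1 and~3 of Algorithm~\ref{alg:fid-channels} (preparation of $|\Phi\rangle_{T^{\prime}T}$ and $|0\rangle_{E^{\prime}}$, followed by the controlled unitary in~\eqref{eq:controlled-U-fid-channels}) implement exactly the CPTP map $\mathcal{M}_{A\rightarrow T^{\prime}TBE}$ from~\eqref{eq:M-qchannel-def}; a direct expansion shows that $\mathcal{M}_{A\rightarrow T^{\prime}TBE}(\psi_{RA})$ coincides with the global state produced after step~3. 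The max-prover's action---appending $|0\rangle_{F}$, applying $P_{T^{\prime}EF\rightarrow T^{\prime\prime}F^{\prime}}$, and returning only $T^{\prime\prime}$ (so that $F^{\prime}$ is discarded by the verifier)---is precisely a Stinespring dilation of an arbitrary CPTP map $\mathcal{P}_{T^{\prime}E\rightarrow T^{\prime\prime}}$, and every such channel admits a dilation of this form. Hence maximizing over prover strategies is equivalent to maximizing over all such $\mathcal{P}$.

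Next, I would verify that the outer minimum can be extended from pure states $\psi_{RA}$ to arbitrary density operators $\rho_{RA}$ without changing its value. For fixed $\rho_{RA}$, applying the analysis of Theorem~\ref{thm:acc-prob-fid-states} to purifications of $\mathcal{N}^{0}(\rho_{RA})$ and $\mathcal{N}^{1}(\rho_{RA})$ shows that the inner maximum over $\mathcal{P}$ equals $\frac{1}{2}(1+\sqrt{F}(\mathcal{N}_{A\rightarrow B}^{0}(\rho_{RA}),\mathcal{N}_{A\rightarrow B}^{1}(\rho_{RA})))$. By joint concavity of the root fidelity in its two arguments, together with linearity of $\mathcal{N}^{0}$ and $\mathcal{N}^{1}$, the minimum of this quantity over $\rho_{RA}$ is attained at an extreme point, i.e., a pure state, recovering~\eqref{eq:def-fid-channels-simplify}. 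This shows that the first expression in~\eqref{eq:min-max-acc-prob-fid-channels} genuinely equals the acceptance probability of Algorithm~\ref{alg:fid-channels}.

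For the minimax equality itself, I would invoke Sion's minimax theorem applied to the functional
\begin{equation*}
(\rho_{RA},\mathcal{P}_{T^{\prime}E\rightarrow T^{\prime\prime}})\mapsto\operatorname{Tr}[\Phi_{T^{\prime\prime}T}\,\mathcal{P}_{T^{\prime}E\rightarrow T^{\prime\prime}}(\mathcal{M}_{A\rightarrow T^{\prime}TBE}(\rho_{RA}))].
\end{equation*}
This functional is bilinear: it depends linearly on $\rho_{RA}$ for fixed $\mathcal{P}$, and linearly on the Choi operator of $\mathcal{P}$ for fixed $\rho_{RA}$. Taking $R\simeq A$ (justified by the previous paragraph), the set of density operators on $RA$ is compact and convex, as is the set of CPTP maps from $T^{\prime}E$ to $T^{\prime\prime}$ in its Choi representation. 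Sion's theorem then delivers the equality of min-max and max-min. I expect the main obstacle to be careful bookkeeping of auxiliary dimensions so that compactness holds---specifically, restricting $R$ to be isomorphic to $A$ and bounding the Stinespring environment of $\mathcal{P}$, both of which are routine.
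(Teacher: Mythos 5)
Your proposal is correct and follows essentially the same route as the paper's proof: recast the max-prover's action as an arbitrary channel $\mathcal{P}_{T'E\to T''}$ via Stinespring, note that allowing mixed inputs $\rho_{RA}$ does not change the value (the paper cites the known fact around \eqref{eq:def-fid-channels}--\eqref{eq:def-fid-channels-simplify} where you supply the joint-concavity argument explicitly), and then apply Sion's minimax theorem using convexity of the state and channel sets and bilinearity of the objective. The extra detail you provide (Choi representation for compactness/convexity, bounding the dilation dimension) is a fleshing-out of the same argument rather than a different approach.
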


\begin{proof}
In Step~2 of Algorithm~\ref{alg:fid-channels}, the min-prover could send a
mixed quantum state $\rho_{RA}$ instead of sending a pure state. The
acceptance probability does not change under this modification due to the
argument around
\eqref{eq:def-fid-channels}--\eqref{eq:def-fid-channels-simplify}.
Furthermore, due to the Stinespring dilation theorem \cite{S55}, the actions of
tensoring in $|0\rangle_{F}$, performing the unitary $P_{T^{\prime
}EF\rightarrow T^{\prime\prime}F^{\prime}}$, and tracing over system
$F^{\prime}$ are equivalent to performing a quantum channel $\mathcal{P}%
_{T^{\prime}E\rightarrow T^{\prime\prime}}$. Under these observations,
consider that the acceptance probability is then equal to%
\begin{equation}
\operatorname{Tr}[\Phi_{T^{\prime\prime}T}\mathcal{P}_{T^{\prime}E\rightarrow
T^{\prime\prime}}(\mathcal{M}_{A\rightarrow T^{\prime}TBE}(\rho_{RA}))],
\label{eq:obj-func-fid-channels}%
\end{equation}
where the quantum channel $\mathcal{M}_{A\rightarrow T^{\prime}TBE}$ is
defined in \eqref{eq:M-qchannel-def}. Performing the optimizations $\min
_{\rho_{RA}}\max_{\mathcal{P}_{T^{\prime}E\rightarrow T^{\prime\prime}}}$ then
leads to the first expression in \eqref{eq:min-max-acc-prob-fid-channels}.
Considering that the set of channels is convex and the set of states is
convex, and the objective function in \eqref{eq:obj-func-fid-channels} is
linear in $\rho_{RA}$ for fixed $\mathcal{P}_{T^{\prime}E\rightarrow
T^{\prime\prime}}$ and linear in $\mathcal{P}_{T^{\prime}E\rightarrow
T^{\prime\prime}}$ for fixed $\rho_{RA}$, the minimax theorem \cite{Sion1958}%
\ applies and we can exchange the optimizations.
\end{proof}

Proposition~\ref{prop:minimax-acc-prob-fid-channels} indicates that if the
provers involved can optimize over all possible states and channels, then
indeed the order of optimization can be exchanged. However, in a variational
algorithm, the optimization is generally dependent upon the order in which it
is conducted because we are not optimizing over all possible states and
channels, but instead optimizing over parameterized circuits. In this latter
case, the state space is no longer convex and the objective function no longer
linear in these parameters. However, we can still attempt the following
``see-saw'' strategy in a variational algorithm: first minimize the objective
function with respect to the input state $\psi_{RA}$ while keeping the unitary
$P_{T^{\prime}EF\rightarrow T^{\prime\prime}F^{\prime}}$ fixed. Then maximize
the objective function with respect to the unitary $P_{T^{\prime}EF\rightarrow
T^{\prime\prime}F^{\prime}}$ while keeping the state $\psi_{RA}$ fixed. Then
repeat this process some number of times. We consider this approach in Section~\ref{sec:numerics-fid-chs}. 

\subsection{Estimating fidelity of strategies}

\label{sec:strategy-fidelity}In this section, we extend
Algorithm~\ref{alg:fid-channels}\ beyond estimating the fidelity of channels
to estimating the fidelity of general strategies
\cite{Gutoski2018fidelityofquantum}, by conducting several rounds of
interaction with the min-prover followed by a single interaction with the
max-prover at the end.

We now develop this idea in detail. Let us first recall the definition of a
quantum strategy from
\cite{GW06,CDP08a,CDP09,G09,G12,Gutoski2018fidelityofquantum}. An $n$-turn
quantum strategy $\mathcal{N}^{(n)}$, with $n\geq1$, input systems $A_{1}$,
\ldots, $A_{n}$, and output systems $B_{1}$, \ldots, $B_{n}$ consists of the following:

\begin{enumerate}
\item memory systems $M_{1}$, \ldots, $M_{n-1}$, and

\item quantum channels $\mathcal{N}_{A_{1}\rightarrow M_{1}B_{1}}^{1}$,
$\mathcal{N}_{M_{1}A_{2}\rightarrow M_{2}B_{2}}^{2}$, \ldots, $\mathcal{N}%
_{M_{n-2}A_{n-1}\rightarrow M_{n-1}B_{n-1}}^{n-1}$, and $\mathcal{N}%
_{M_{n-1}A_{n}\rightarrow B_{n}}^{n}$.
\end{enumerate}

\noindent It is implicit that any of the systems involved can be trivial
systems, which means that state preparation and measurements are included as
special cases.

A co-strategy interacts with a strategy; co-strategies are in fact strategies
also, but it is useful conceptually to provide an explicit means by which an
agent can interact with a strategy. An $(n-1)$-turn co-strategy
$\mathcal{S}^{(n-1)}$, with input systems $B_{1}$, \ldots, $B_{n}$ and output
systems $A_{1}$, \ldots, $A_{n}$ consists of the following:

\begin{enumerate}
\item memory systems $R_{1}$, \ldots, $R_{n}$,

\item a quantum state $\rho_{R_{1}A_{1}}$, and

\item quantum channels $\mathcal{S}_{R_{1}B_{1}\rightarrow R_{2}A_{2}}^{1}$,
$\mathcal{S}_{R_{2}B_{2}\rightarrow R_{3}A_{3}}^{2}$, \ldots, and
$\mathcal{S}_{R_{n-1}B_{n-1}\rightarrow R_{n}A_{n}}^{n-1}$.
\end{enumerate}

\noindent The result of the interaction of the strategy $\mathcal{N}^{(n)}$
with the co-strategy $\mathcal{S}^{(n-1)}$ is a quantum state on systems
$R_{n}B_{n}$, and we employ the shorthand%
\begin{equation}
\mathcal{N}^{(n)}\circ\mathcal{S}^{(n-1)}
\label{eq:state-after-strategy-co-strategy}%
\end{equation}
to denote this quantum state. Figure~\ref{fig:strategy-co-strategy} depicts a
three-turn strategy interacting with a two-turn co-strategy.

\begin{figure}
\begin{center}
\includegraphics[
width=\linewidth
]{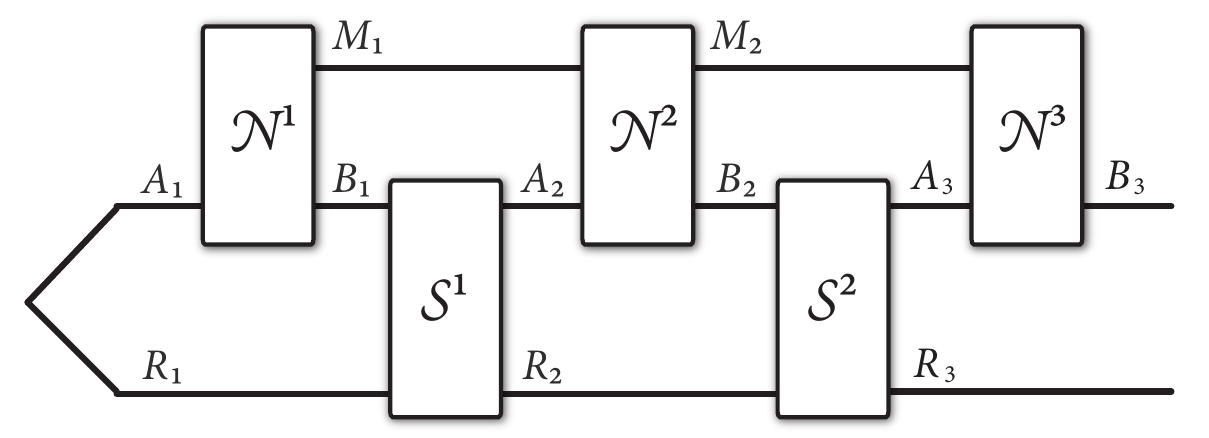}
\end{center}
\caption{Interaction of a three-turn strategy $\mathcal{N}^{(3)}$ with a
two-turn co-strategy $\mathcal{S}^{(2)}$.}%
\label{fig:strategy-co-strategy}%
\end{figure}

Let $\mathcal{N}^{0,(n)}$ and $\mathcal{N}^{1,(n)}$ denote two compatible, $n$-turn
quantum strategies, meaning that all systems involved in these strategies are
the same but the channels that make up the strategies are possibly different.
The fidelity of the strategies $\mathcal{N}^{0,(n)}$ and $\mathcal{N}^{1,(n)}$
is defined as \cite{Gutoski2018fidelityofquantum}%
\begin{multline}
F(\mathcal{N}^{0,(n)},\mathcal{N}^{1,(n)})\coloneqq\\
\inf_{\mathcal{S}^{(n-1)}}F(\mathcal{N}^{0,(n)}\circ\mathcal{S}^{(n-1)}%
,\mathcal{N}^{1,(n)}\circ\mathcal{S}^{(n-1)}), \label{eq:def-fid-strategies}%
\end{multline}
where the optimization is over every co-strategy $\mathcal{S}^{(n-1)}$. One
can interpret the strategy fidelity in \eqref{eq:def-fid-strategies} as a
generalization of the fidelity of channels in \eqref{eq:def-fid-channels}, in
which the idea is to optimize the fidelity measure over all possible
co-strategies that can be used to distinguish the strategies $\mathcal{N}%
^{0,(n)}$ and $\mathcal{N}^{1,(n)}$. It follows from a standard
data-processing argument that it suffices to perform the optimization in
\eqref{eq:def-fid-strategies} over co-strategies involving an initial pure
state $\rho_{R_{1}A_{1}}$ and channels $\mathcal{S}_{R_{1}B_{1}\rightarrow
R_{2}A_{2}}^{1}$, $\mathcal{S}_{R_{2}B_{2}\rightarrow R_{3}A_{3}}^{2}$,
\ldots, and $\mathcal{S}_{R_{n-1}B_{n-1}\rightarrow R_{n}A_{n}}^{n-1}$ that
are each isometric channels (these are called pure co-strategies in
\cite{Gutoski2018fidelityofquantum}). We also note here that the measure in
\eqref{eq:def-fid-strategies} is generalized by the generalized strategy
divergence of~\cite{Wang2019a}.

The goal of this section is to delineate a DQIP algorithm for estimating the
fidelity of strategies $\mathcal{N}^{0,(n)}$ and $\mathcal{N}^{1,(n)}$. To do
so, we suppose that the verifier has access to unitary circuits that realize
isometric extensions of all channels involved in the strategies. That is, for
$i\in\left\{  0,1\right\}  $, there exists a unitary channel $\mathcal{U}%
_{A_{1}E_{1}^{\prime}\rightarrow M_{1}B_{1}E_{1}}^{i,1}$ such that%
\begin{multline}
\mathcal{N}_{A_{1}\rightarrow M_{1}B_{1}}^{i,1}(\rho_{A_{1}})=\\
\operatorname{Tr}_{E_{1}}[\mathcal{U}_{A_{1}E_{1}^{\prime}\rightarrow
M_{1}B_{1}E_{1}}^{i,1}(\rho_{A_{1}}\otimes|0\rangle\!\langle0|_{E_{1}^{\prime
}})]
\end{multline}
for every input state $\rho_{A_{1}}$; for $j\in\left\{  2,\ldots,n-1\right\}
$, there exists a unitary channel $\mathcal{U}_{M_{j-1}A_{j}E_{j}^{\prime
}\rightarrow M_{j}B_{j}E_{j}}^{i,j}$\ such that%
\begin{multline}
\mathcal{N}_{M_{j-1}A_{j}\rightarrow M_{j}B_{j}}^{i,j}(\rho_{A_{j}})=\\
\operatorname{Tr}_{E_{j}}[\mathcal{U}_{M_{j-1}A_{j}E_{j}^{\prime}\rightarrow
M_{j}B_{j}E_{j}}^{i,j}(\rho_{A_{j}}\otimes|0\rangle\!\langle0|_{E_{j}^{\prime
}})],
\end{multline}
for every input state $\rho_{A_{j}}$; and there exists a unitary channel
$\mathcal{U}_{M_{n-1}A_{n}E_{n}^{\prime}\rightarrow B_{n}E_{n}}^{i,n}$ such
that%
\begin{multline}
\mathcal{N}_{M_{n-1}A_{n}\rightarrow B_{n}}^{i,n}(\rho_{A_{n}})=\\
\operatorname{Tr}_{E_{n}}[\mathcal{U}_{M_{n-1}A_{n}E_{n}^{\prime}\rightarrow
B_{n}E_{n}}^{i,n}(\rho_{A_{n}}\otimes|0\rangle\!\langle0|_{E_{n}^{\prime}})],
\end{multline}
for every input state $\rho_{A_{n}}$. We use the notation $U_{A_{1}%
E_{1}^{\prime}\rightarrow M_{1}B_{1}E_{1}}^{i,1}$, $U_{M_{j-1}A_{j}%
E_{j}^{\prime}\rightarrow M_{j}B_{j}E_{j}}^{i,j}$, and $U_{M_{n-1}A_{n}%
E_{n}^{\prime}\rightarrow B_{n}E_{n}}^{i,n}$ to refer to the unitary circuits.

We now provide a DQIP algorithm for estimating the following quantity:%
\begin{equation}
\frac{1}{2}\left(  1+\sqrt{F}(\mathcal{N}^{0,(n)},\mathcal{N}^{1,(n)})\right)
,
\end{equation}
which is based in part on Algorithm~\ref{alg:fid-channels}\ but instead
features an optimization over all co-strategies of the min-prover.

\begin{algorithm}
\label{alg:fid-strategies} The algorithm proceeds as follows:

\begin{enumerate}
\item The verifier prepares a Bell state
\begin{equation}
|\Phi\rangle_{T^{\prime}T}\coloneqq\frac{1}{\sqrt{2}}(|00\rangle_{T^{\prime}%
T}+|11\rangle_{T^{\prime}T})
\end{equation}
on registers $T^{\prime}$ and $T$ and prepares systems $E_{1}^{\prime}\cdots
E_{n}^{\prime}$ in the all-zeros state $|0\rangle_{E_{1}^{\prime}\cdots
E_{n}^{\prime}}$.

\item The min-prover transmits the system $A$ of the state $|\psi\rangle_{RA}$
to the verifier.

\item Using the circuits $U_{A_{1}E_{1}^{\prime}\rightarrow M_{1}B_{1}E_{1}%
}^{0,1}$ and $U_{A_{1}E_{1}^{\prime}\rightarrow M_{1}B_{1}E_{1}}^{1,1}$, the
verifier performs the following controlled unitary:%
\begin{equation}
\sum_{i\in\left\{  0,1\right\}  }|i\rangle\!\langle i|_{T}\otimes
U_{A_{1}E_{1}^{\prime}\rightarrow M_{1}B_{1}E_{1}}^{i,1}.
\end{equation}

\item The verifier transmits system $B_{1}$ to the min-prover, who
subsequently acts with the isometric quantum channel $\mathcal{S}_{R_{1}%
B_{1}\rightarrow R_{2}A_{2}}^{1}$ and then sends system $A_{2}$ to the verifier.

\item For $j\in\left\{  2,\ldots,n-1\right\}  $, using the circuits
$U_{M_{j-1}A_{j}E_{j}^{\prime}\rightarrow M_{j}B_{j}E_{j}}^{0,j}$ and
$U_{M_{j-1}A_{j}E_{j}^{\prime}\rightarrow M_{j}B_{j}E_{j}}^{1,j}$, the
verifier performs the following controlled unitary:%
\begin{equation}
\sum_{i\in\left\{  0,1\right\}  }|i\rangle\!\langle i|_{T}\otimes
U_{M_{j-1}A_{j}E_{j}^{\prime}\rightarrow M_{j}B_{j}E_{j}}^{i,j}.
\end{equation}
The verifier transmits system $B_{j}$ to the min-prover, who subsequently acts
with the isometric quantum channel $\mathcal{S}_{R_{j}B_{j}\rightarrow
R_{j+1}A_{j+1}}^{j}$ and then sends system $A_{j+1}$ to the verifier.

\item Using the circuits $U_{M_{n-1}A_{n}E_{n}^{\prime}\rightarrow B_{n}E_{n}%
}^{0,n}$ and $U_{M_{n-1}A_{n}E_{n}^{\prime}\rightarrow B_{n}E_{n}}^{1,n}$, the
verifier performs the following controlled unitary:%
\begin{equation}
\sum_{i\in\left\{  0,1\right\}  }|i\rangle\!\langle i|_{T}\otimes
U_{M_{n-1}A_{n}E_{n}^{\prime}\rightarrow B_{n}E_{n}}^{i,n}.
\end{equation}

\item The verifier transmits systems $T^{\prime}$, $E_{1}$, \ldots, $E_{n}$ to
the max-prover.

\item The max-prover prepares a system $F$ in the $|0\rangle_{F}$ state and
acts on systems $T^{\prime}$, $E_{1}$, \ldots, $E_{n}$, and $F$ with a unitary
$P_{T^{\prime}E_{1}\cdots E_{n}F\rightarrow T^{\prime\prime}F^{\prime}}$ to
produce the output systems $T^{\prime\prime}$ and $F^{\prime}$, where
$T^{\prime\prime}$ is a qubit system.

\item The max-prover sends system $T^{\prime\prime}$ to the verifier, who then
performs a Bell measurement%
\begin{equation}
\{\Phi_{T^{\prime\prime}T},I_{T^{\prime\prime}T}-\Phi_{T^{\prime\prime}T}\}
\end{equation}
on systems $T^{\prime\prime}$ and $T$. The verifier accepts if and only if the
outcome $\Phi_{T^{\prime\prime}T}$ occurs.
\end{enumerate}
\end{algorithm}

Figure~\ref{fig:fid-alg-strategies-min-max-1} depicts
Algorithm~\ref{alg:fid-strategies}.

\begin{figure*}
\begin{center}
\includegraphics[
width=\linewidth
]{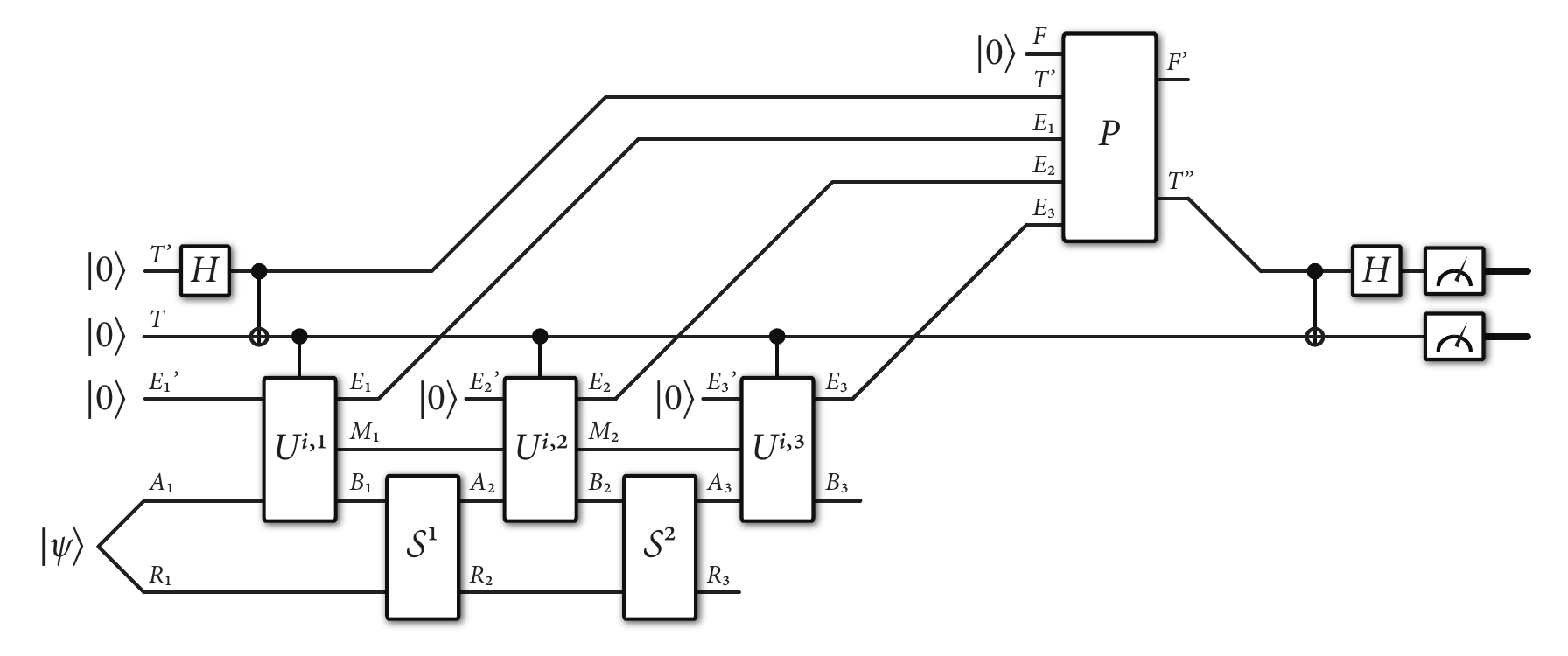}
\end{center}
\caption{This figure depicts Algorithm~\ref{alg:fid-strategies} for estimating
the fidelity of quantum strategies $\mathcal{N}^{0,(n)}$ and $\mathcal{N}%
^{1,(n)}$ generated by quantum circuits $U_{A_{1}E_{1}^{\prime}\rightarrow
M_{1}B_{1}E_{1}}^{i,1}$, $\,\{U_{M_{j-1}A_{j}E_{j}^{\prime}\rightarrow
M_{j}B_{j}E_{j}}^{i,j}\}_{j=2}^{n-1}$, and $U_{M_{n-1}A_{n}E_{n}^{\prime
}\rightarrow B_{n}E_{n}}^{i,n}$ for $i\in\{0,1\}$ and $n=3$. The min-prover
prepares the state $|\psi\rangle_{RA}$ and acts with a co-strategy, while the
max-prover acts with the unitary $P_{T^{\prime}E_{1}\cdots E_{n}F\rightarrow
T^{\prime\prime}F^{\prime}}$.}%
\label{fig:fid-alg-strategies-min-max-1}%
\end{figure*}

\begin{theorem}
\label{thm:acc-prob-fid-strategies}
The acceptance probability of Algorithm~\ref{alg:fid-strategies} is equal to%
\begin{equation}
\frac{1}{2}\left(  1+\sqrt{F}(\mathcal{N}^{0,(n)},\mathcal{N}^{1,(n)})\right)
, \label{eq:acc-prob-fid-strategies}%
\end{equation}
where $\sqrt{F}(\mathcal{N}^{0,(n)},\mathcal{N}^{1,(n)})$ is the strategy
fidelity defined in \eqref{eq:def-fid-strategies}.
\end{theorem}

\begin{proof}
The proof can be found in Appendix~\ref{app:ProofAlg9}.
\end{proof}

\subsection{Alternate methods of estimating the fidelity of channels and
strategies}

\label{sec:alt-methods-fid-channels}

We note briefly here that other methods for estimating fidelity of channels
can be based on Algorithms~\ref{alg:mixed-state-swap-test},
\ref{alg:mixed-state-Bell-tests}, and~\ref{alg:mixed-state-FC-meas-min}. It is
not clear how to phrase them in the language of quantum interactive proofs, in
such a way that the acceptance probability is a simple function of the channel
fidelity. However, we can employ variational algorithms in which we repeat the
circuit for determining an optimal input state $\psi_{RA}$ for the channel
fidelity. Then these variational algorithms employ an extra minimization step
in order to approximate an optimal input state for the channel fidelity.

Similarly, we can estimate the fidelity of strategies by employing a sequence
of parameterized circuits to function as a co-strategy and then minimize
over them, in conjunction with any of the previous methods for estimating
fidelity of states.

\subsection{Estimating maximum output fidelity of channels}

\label{sec:max-output-fid-channels}

In this section, we show how a simple variation of
Algorithm~\ref{alg:fid-channels}, in which we combine the actions of the
min-prover and max-prover into a single max-prover, leads to a QIP\ algorithm
for estimating the following fidelity function of two quantum channels
$\mathcal{N}_{A\rightarrow B}^{0}$ and $\mathcal{N}_{A\rightarrow B}^{1}$:%
\begin{equation}
F_{\text{max}}(\mathcal{N}^0, \mathcal{N}^1) := \sup_{\rho_{A}}F(\mathcal{N}_{A\rightarrow B}^{0}(\rho_{A}),\mathcal{N}%
_{A\rightarrow B}^{1}(\rho_{A})), \label{eq:max-fid-channels-fixed-point}%
\end{equation}
where the optimization is over every input state $\rho_{A}$. This algorithm is
based in part on Algorithm~\ref{alg:fid-states}\ but instead features an
optimization over input states of the prover.

\begin{algorithm}
\label{alg:fid-channels-single-prover} The algorithm proceeds as follows:

\begin{enumerate}
\item The verifier prepares a Bell state
\begin{equation}
|\Phi\rangle_{T^{\prime}T}\coloneqq\frac{1}{\sqrt{2}}(|00\rangle_{T^{\prime}%
T}+|11\rangle_{T^{\prime}T})
\end{equation}
on registers $T^{\prime}$ and $T$ and prepares system $E^{\prime}$ in the
all-zeros state $|0\rangle_{E^{\prime}}$.

\item The prover transmits the system $A$ of the state $|\psi\rangle_{RA}$ to
the verifier.

\item Using the circuits $U_{AE^{\prime}\rightarrow BE}^{0}$ and
$U_{AE^{\prime}\rightarrow BE}^{1}$, the verifier performs the following
controlled unitary:%
\begin{equation}
\sum_{i\in\left\{  0,1\right\}  }|i\rangle\!\langle i|_{T}\otimes
U_{AE^{\prime}\rightarrow BE}^{i}.
\end{equation}

\item The verifier transmits systems $T^{\prime}$ and $E$ to the prover.

\item The prover prepares a system $F$ in the $|0\rangle_{F}$ state and acts
on systems $T^{\prime}$, $E$, and $F$ with a unitary $P_{T^{\prime
}EF\rightarrow T^{\prime\prime}F^{\prime}}$ to produce the output systems
$T^{\prime\prime}$ and $F^{\prime}$, where $T^{\prime\prime}$ is a qubit system.

\item The prover sends system $T^{\prime\prime}$ to the verifier, who then
performs a Bell measurement%
\begin{equation}
\{\Phi_{T^{\prime\prime}T},I_{T^{\prime\prime}T}-\Phi_{T^{\prime\prime}T}\}
\end{equation}
on systems $T^{\prime\prime}$ and $T$. The verifier accepts if and only if the
outcome $\Phi_{T^{\prime\prime}T}$ occurs.
\end{enumerate}
\end{algorithm}

Figure~\ref{fig:fid-alg-fixed-point-channels} depicts
Algorithm~\ref{alg:fid-channels-single-prover}.

\begin{figure*}
\begin{center}
\includegraphics[
width=\linewidth
]{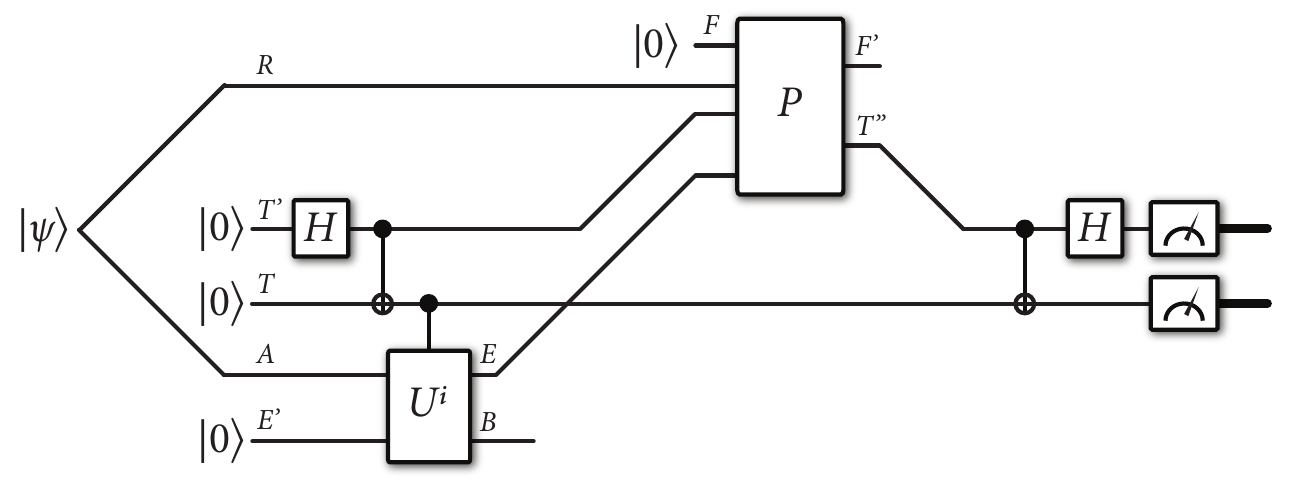}
\end{center}
\caption{This figure depicts Algorithm~\ref{alg:fid-channels-single-prover}
for generating a state $\rho_{A}$ that maximizes the fidelity of quantum
channels generated by quantum circuits $U_{AE^{\prime}\rightarrow BE}^{0}$ and
$U_{AE^{\prime}\rightarrow BE}^{1}$.}%
\label{fig:fid-alg-fixed-point-channels}%
\end{figure*}

\begin{theorem}
\label{thm:max-fid-channels}
The acceptance probability of Algorithm~\ref{alg:fid-channels-single-prover}%
\ is equal to%
\begin{equation}
\frac{1}{2}\left( 1 + \sqrt{F_{\operatorname{max}}}(\mathcal{N}_{A\rightarrow B}%
^{0},\mathcal{N}_{A\rightarrow B}^{1})\right)  .
\label{eq:max-fid-channels}%
\end{equation}

\end{theorem}

\begin{proof}
The proof can be found in Appendix~\ref{app:ProofAlg10}.
\end{proof}

\subsection{Generalization to multiple states}

\label{sec:mult-states}

In this section, we generalize Algorithm~\ref{alg:fid-states}\ to multiple
states, by devising a quantum algorithm that tests how similar all the states
of an ensemble are to each other.

Suppose that we are given an ensemble $\left\{  p(x),\rho_{S}^{x}\right\}
_{x\in\mathcal{X}}$ of states of system $S$, with $d=\left\vert \mathcal{X}%
\right\vert $, and we would like to know how similar they are to each other.
Then we can perform a test like that given in Algorithm~\ref{alg:fid-states},
but it is a multiple-state similarity test. The main difference is that the
verifier prepares an initial entangled state that encodes the prior
probabilities $\left\{  p(x)\right\}  _{x\in\mathcal{X}}$ and the algorithm
employs $d$-dimensional control systems throughout, instead of qubit control
systems.\ We suppose that, for all $x\in\mathcal{X}$, there is a circuit
$U_{RS}^{x}$ that generates a purification $|\psi^{x}\rangle_{RS}$ as follows:%
\begin{align}
|\psi^{x}\rangle_{RS}  &  \coloneqq U_{RS}^{x}|0\rangle_{RS},\\
\rho_{S}^{x}  &  =\operatorname{Tr}_{R}[|\psi^{x}\rangle\!\langle\psi
^{x}|_{RS}].
\end{align}

\begin{algorithm}
\label{alg:fid-multiple-states} The algorithm proceeds as follows:

\begin{enumerate}
\item The verifier prepares a state%
\begin{equation}
|\Phi^{p}\rangle_{T^{\prime}T}\coloneqq\sum_{x\in\mathcal{X}}\sqrt
{p(x)}|xx\rangle_{T^{\prime}T}%
\end{equation}
on registers $T^{\prime}$ and $T$ and prepares systems $RS$ in the all-zeros
state $|0\rangle_{RS}$.

\item Using the circuits in the set $\{U_{RS}^{x}\}_{x\in\mathcal{X}}$, the
verifier performs the following controlled unitary:%
\begin{equation}
\sum_{x\in\mathcal{X}}|x\rangle\!\langle x|_{T}\otimes U_{RS}^{x}.
\end{equation}

\item The verifier transmits systems $T^{\prime}$ and $R$ to the prover.

\item The prover prepares a system $F$ in the $|0\rangle_{F}$ state and acts
on systems $T^{\prime}$, $R$, and $F$ with a unitary $P_{T^{\prime
}RF\rightarrow T^{\prime\prime}F^{\prime}}$ to produce the output systems
$T^{\prime\prime}$ and $F^{\prime}$, where $T^{\prime\prime}$ is a qudit system.

\item The prover sends system $T^{\prime\prime}$ to the verifier, who then
performs a qudit Bell measurement%
\begin{equation}
\{\Phi_{T^{\prime\prime}T},I_{T^{\prime\prime}T}-\Phi_{T^{\prime\prime}T}\}
\end{equation}
on systems $T^{\prime\prime}$ and $T$, where%
\begin{align}
\Phi_{T^{\prime\prime}T}  &  =|\Phi\rangle\!\langle\Phi|_{T^{\prime\prime}%
T},\label{eq:qudit-bell-state}\\
|\Phi\rangle_{T^{\prime\prime}T}  &  \coloneqq\frac{1}{\sqrt{d}}\sum
_{x\in\mathcal{X}}|xx\rangle_{T^{\prime\prime}T}.
\end{align}
The verifier accepts if and only if the outcome $\Phi_{T^{\prime\prime}T}$ occurs.
\end{enumerate}
\end{algorithm}

\begin{theorem}
\label{thm:fid-multiple-states}
The acceptance probability of
Algorithm~\ref{alg:fid-multiple-states}\ is equal to%
\begin{equation}
 p_{\operatorname{sim}}(\left\{  p(x),\rho_{S}^{x}\right\}_{x\in\mathcal{X}}) \coloneqq \frac{1}{d}\left[  \sup_{\sigma_{S}}\sum_{x\in\mathcal{X}}\sqrt{p(x)}\sqrt
{F}(\rho_{S}^{x},\sigma_{S})\right]  ^{2}, \label{eq:acc-prob-mult-states}%
\end{equation}
where the optimization is over every density operator $\sigma_{S}$. This
acceptance probability is bounded from above by%
\begin{equation}
\frac{1}{d}+\frac{2}{d}\sum_{x,y\in\mathcal{X}:x<y}\sqrt{p(x)p(y)}\sqrt
{F}(\rho_{S}^{x},\rho_{S}^{y}). \label{eq:upper-bound-fid-multiple-states}%
\end{equation}
When $d=2$, this upper bound is tight.
\end{theorem}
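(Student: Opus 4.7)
The plan is to mirror the proof of Theorem~\ref{thm:acc-prob-fid-states} with a $d$-dimensional control register in place of a qubit. After Steps~1 and 2, the global state is $\sum_{x\in\mathcal{X}} \sqrt{p(x)}\,|xx\rangle_{T'T}|\psi^x\rangle_{RS}$, and after the prover acts with $P \equiv P_{T'RF\to T''F'}$ on this state together with $|0\rangle_F$, projecting onto $|\Phi\rangle_{T''T}$ from \eqref{eq:qudit-bell-state} yields the amplitude $\tfrac{1}{\sqrt{d}}\sum_x \sqrt{p(x)}\,P^x|\psi^x\rangle_{RS}$, where $P^x_{R\to F'}\coloneqq \langle x|_{T''}P|x\rangle_{T'}|0\rangle_F$ is a contraction for each~$x$. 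Maximizing over the prover's strategy, the acceptance probability equals $\tfrac{1}{d}\sup_{\{P^x\}}\left\|\sum_x \sqrt{p(x)}\,P^x|\psi^x\rangle\right\|_2^2$, where the supremum ranges over all tuples of contractions (with output space chosen sufficiently large).

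For \eqref{eq:acc-prob-mult-states}, I would next argue that the supremum is unchanged when each $P^x$ is restricted to an isometry $W^x\colon R\to F'$: the forward inequality is trivial since isometries are contractions, and the reverse uses the decomposition of every contraction as a convex combination of isometries \cite[Theorem~5.10]{Z11} together with the triangle inequality, exactly as in the proof of Theorem~\ref{thm:acc-prob-fid-states}. Under this reduction, $|\varphi^x\rangle_{F'S}\coloneqq W^x|\psi^x\rangle_{RS}$ is an arbitrary purification of $\rho^x_S$, so the optimization becomes $\sup_{\{|\varphi^x\rangle\}}\left\|\sum_x \sqrt{p(x)}\,|\varphi^x\rangle_{F'S}\right\|^2$. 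I would then rewrite this norm as $\sup_{|\eta\rangle}\bigl|\langle\eta|\sum_x\sqrt{p(x)}|\varphi^x\rangle\bigr|^2$ over unit vectors $|\eta\rangle_{F'S}$, interchange the suprema, fix $|\eta\rangle$ (equivalently, fix its reduced state $\sigma_S$), and invoke Uhlmann's theorem to choose, independently for each~$x$, a purification $|\varphi^x\rangle$ of $\rho^x_S$ with $\langle\eta|\varphi^x\rangle = \sqrt{F}(\rho^x_S,\sigma_S)\ge 0$. The matching upper bound follows from $|\langle\eta|\varphi^x\rangle|\le \sqrt{F}(\rho^x_S,\eta_S)$ (Uhlmann) and the triangle inequality, yielding \eqref{eq:acc-prob-mult-states}.

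For the upper bound \eqref{eq:upper-bound-fid-multiple-states}, I would instead expand
\begin{equation}
\left\|\sum_x \sqrt{p(x)}\,P^x|\psi^x\rangle\right\|_2^2 = \sum_{x,y}\sqrt{p(x)p(y)}\langle\psi^y|(P^y)^\dag P^x|\psi^x\rangle,
\end{equation}
bound the diagonal terms by $p(x)$ using $\|P^x|\psi^x\rangle\|\le 1$, take real parts on the off-diagonal, and bound each cross term by $|\langle\psi^y|(P^y)^\dag P^x|\psi^x\rangle|\le \sqrt{F}(\rho^x_S,\rho^y_S)$ via the same contraction-decomposition argument already used in Theorem~\ref{thm:acc-prob-fid-states}. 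Summing gives \eqref{eq:upper-bound-fid-multiple-states} directly. For $d=2$, only the pair $(x,y)=(0,1)$ contributes, and tightness follows by choosing $P^0=I_R$ and $P^1$ to be any Uhlmann-optimal isometry for $F(\rho^0_S,\rho^1_S)$, so that $\langle\psi^1|(P^1)^\dag P^0|\psi^0\rangle = \sqrt{F}(\rho^0_S,\rho^1_S)$ saturates every inequality above.

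The main delicacy I anticipate is the interchange of suprema over $|\eta\rangle$ and $\{|\varphi^x\rangle\}$ in the derivation of \eqref{eq:acc-prob-mult-states}: Uhlmann's theorem must be applied with $|\eta\rangle$ fixed first so that each $|\varphi^x\rangle$ can be chosen independently to make $\langle\eta|\varphi^x\rangle$ real and equal to $\sqrt{F}(\rho^x_S,\sigma_S)$. This viewpoint also clarifies conceptually why \eqref{eq:upper-bound-fid-multiple-states} is generally loose for $d\ge 3$: no single $\sigma_S$ need admit purifications that simultaneously achieve all pairwise Uhlmann maxima $\sqrt{F}(\rho^x_S,\rho^y_S)$, whereas for $d=2$ there is only a single pair and Uhlmann's theorem alone is enough.
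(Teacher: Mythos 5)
Your proposal is correct, and its second half (the upper bound \eqref{eq:upper-bound-fid-multiple-states} and its tightness for $d=2$) is essentially identical to the paper's argument: the same blocks $P^x_{R\to F'}=\langle x|_{T^{\prime\prime}}P|x\rangle_{T^{\prime}}|0\rangle_F$, the same diagonal/off-diagonal split with $(P^x)^\dagger P^x\le I$ on the diagonal, the same contraction-to-isometry decomposition from Theorem~\ref{thm:acc-prob-fid-states} to bound each cross term by $\sqrt{F}(\rho^x_S,\rho^y_S)$, and the same choice of Uhlmann isometries to saturate when $d=2$. Where you genuinely diverge is in deriving the exact expression \eqref{eq:acc-prob-mult-states}. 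The paper writes the acceptance probability as $\sup_{|\varphi\rangle_{F'S}}\left\vert\langle\Phi|_{T^{\prime\prime}T}\langle\varphi|_{F'S}P(\cdots)\right\vert^2$, applies Uhlmann's theorem once across the cut separating the prover's systems from the verifier's retained systems $TS$, identifies the reduced states as $\sum_x p(x)|x\rangle\!\langle x|_T\otimes\rho^x_S$ and $\pi_T\otimes\sigma_S$, and then invokes the direct-sum property of the root fidelity to land on \eqref{eq:acc-prob-mult-states}. You instead reduce to tuples of isometries, reinterpret $W^x|\psi^x\rangle_{RS}$ as an arbitrary purification of $\rho^x_S$, and apply Uhlmann term by term after interchanging the two suprema. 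The routes are equivalent; yours is more self-contained (it effectively re-proves the direct-sum property inline rather than citing it), while the paper's is shorter and makes the link to the fidelity of classical--quantum states, and hence to the secrecy/symmetric-distinguishability interpretation, explicit. One minor remark: the interchange of suprema that you flag as the main delicacy is automatic since both optimizations are suprema; the only point requiring care is that, for fixed $|\eta\rangle$, the purifications $|\varphi^x\rangle$ may be chosen independently and with phases rendering every inner product nonnegative, which you do address.
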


\begin{proof}
The proof can be found in Appendix~\ref{app:ProofAlg11}.
\end{proof}

\begin{corollary}
The fact that the upper bound is achieved in
Theorem~\ref{thm:fid-multiple-states}\ for $d=2$ leads to the following
identity for states $\rho_{S}^{0}$ and $\rho_{S}^{1}$ and probability
$p\in\left[  0,1\right]  $:%
\begin{multline}
\left[  \sup_{\sigma_{S}}\sqrt{p}\sqrt{F}(\rho_{S}^{0},\sigma_{S})+\sqrt
{1-p}\sqrt{F}(\rho_{S}^{1},\sigma_{S})\right]  ^{2}\\
=1+2\sqrt{p\left(  1-p\right)  }\sqrt{F}(\rho_{S}^{0},\rho_{S}^{1}),
\end{multline}
where the optimization is over every density operator $\sigma_{S}$.
\end{corollary}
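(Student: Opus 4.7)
The plan is to observe that the corollary is an immediate specialization of Theorem~\ref{thm:fid-multiple-states} to the case $d=2$, invoking the tightness statement at the end of that theorem. Set $\mathcal{X} = \{0,1\}$ with $p(0) = p$ and $p(1) = 1-p$, and let $\rho_S^0$ and $\rho_S^1$ be the two states of the ensemble.

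First I would write down, in this specialized setting, the exact formula from \eqref{eq:acc-prob-mult-states} for the acceptance probability of Algorithm~\ref{alg:fid-multiple-states}; with $d=2$ this becomes
\begin{equation}
\frac{1}{2}\left[\sup_{\sigma_S}\sqrt{p}\sqrt{F}(\rho_S^0,\sigma_S) + \sqrt{1-p}\sqrt{F}(\rho_S^1,\sigma_S)\right]^2.
\end{equation}
Next I would write down the upper bound from \eqref{eq:upper-bound-fid-multiple-states} in the same specialization, which reduces to
\begin{equation}
\frac{1}{2} + \sqrt{p(1-p)}\sqrt{F}(\rho_S^0,\rho_S^1),
\end{equation}
since the only $x<y$ term is $(x,y)=(0,1)$ and the $2/d$ prefactor becomes $1$.

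Then I would invoke the last sentence of Theorem~\ref{thm:fid-multiple-states}, which asserts that the upper bound in \eqref{eq:upper-bound-fid-multiple-states} is tight when $d=2$. This forces the two displayed expressions above to be equal. Multiplying both sides by $2$ yields precisely the identity claimed in the corollary.

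There is no genuine obstacle: the content of the corollary has already been done inside the proof of Theorem~\ref{thm:fid-multiple-states}, and the only task is bookkeeping — specializing indices, absorbing the $1/d$ factor, and noting that the single off-diagonal cross-term $\sqrt{p(1-p)}\sqrt{F}(\rho_S^0,\rho_S^1)$ is what remains after pulling out the diagonal contribution $1/d = 1/2$. If anything required care, it would be verifying that the tightness argument for $d=2$ at the end of the theorem's proof genuinely applies to an arbitrary probability $p \in [0,1]$ (including the degenerate endpoints $p \in \{0,1\}$, where one of the states drops out of the ensemble); but since Uhlmann's theorem trivializes in those cases, both sides reduce to $F(\rho_S^0,\rho_S^0) = 1$ or $F(\rho_S^1,\rho_S^1)=1$ as appropriate, so the identity persists.
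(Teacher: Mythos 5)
Your proposal is correct and follows exactly the route the paper intends: the corollary is stated as an immediate consequence of the tightness claim at the end of Theorem~\ref{thm:fid-multiple-states}, and your specialization of \eqref{eq:acc-prob-mult-states} and \eqref{eq:upper-bound-fid-multiple-states} to $d=2$ together with the factor-of-$2$ bookkeeping is precisely that argument. The extra check of the endpoints $p\in\{0,1\}$ is a harmless and correct addition.
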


The acceptance probability in \eqref{eq:acc-prob-mult-states} is proportional
to the secrecy measure discussed in \cite[Eq.~(19)]{KRS09}, which is the same
as the max-conditional entropy of the following classical--quantum state:%
\begin{equation}
\sum_{x\in\mathcal{X}}p(x)|x\rangle\!\langle x|_{T}\otimes\rho_{S}^{x}.
\label{eq:cq-state-multiple}%
\end{equation}
Indeed, it is a measure of secrecy because if an eavesdropper has access to
system $S$ and if $\rho_{S}^{x}\approx\sigma$ for all $x\in\mathcal{X}$ and if
$p(x)\approx1/d$, then it is difficult for the eavesdropper to guess the
classical message in system $T$ (also, the fidelity is close to one). According to \cite[Remark~2.7]{SDGWW21} and
the expression in \eqref{eq:sym-dist-meas} of Appendix~\ref{app:ProofAlg11}, the acceptance probability in
\eqref{eq:acc-prob-mult-states} is also a measure of the symmetric
distinguishability of the classical--quantum state in
\eqref{eq:cq-state-multiple}, and thus gives this measure an operational meaning.

The upper bound in \eqref{eq:upper-bound-fid-multiple-states}\ on the
acceptance probability has some conceptual similarity with known upper bounds
on the success probability in state discrimination \cite{Mont08,Qiu08}, in the
sense that we employ the fidelity of pairs of states in the upper bound.
Finally, we note some similarities between the problem outlined here and
coherent channel discrimination considered recently in \cite{Wilde20}.
However, these two problems are ultimately different in their objectives.

\subsection{Generalization to multiple channels and strategies}

\label{sec:mult-channels-strategies}

We now generalize Algorithms~\ref{alg:fid-channels}\ and
\ref{alg:fid-multiple-states} to the case of testing the similarity of an
ensemble of channels. The resulting algorithm thus has applications in the
context of private quantum reading \cite{BDW18,DBW20}, in which one goal of
such a protocol is to encode a classical message into a channel selected
randomly from an ensemble of channels such that it is indecipherable by an
eavesdropper who has access to the output of the channel. We also remark at
the end of this section about a generalization of
Algorithms~\ref{alg:fid-strategies} and \ref{alg:fid-multiple-channels} to the
case of an ensemble of $n$-turn quantum strategies.

Let us first consider the case of channels. In more detail, let
$\{p(x),\mathcal{N}_{A\rightarrow B}^{x}\}_{x\in\mathcal{X}}$ be an ensemble
of quantum channels. Set $d=\left\vert \mathcal{X}\right\vert $. We suppose that, for all
$x\in\mathcal{X}$, there is a circuit $U_{AE^{\prime}\rightarrow BE}^{x}$ that
generates an isometric extension of the channel $\mathcal{N}_{A\rightarrow
B}^{x}$, in the following sense:%
\begin{multline}
\mathcal{N}_{A\rightarrow B}^{x}(\omega_{A})= \label{eq:iso-ext-channels-mult}%
\\
\operatorname{Tr}_{E}[U_{AE^{\prime}\rightarrow BE}^{x}(\omega_{A}%
\otimes|0\rangle\!\langle0|_{E^{\prime}})(U_{AE^{\prime}\rightarrow BE}%
^{x})^{\dag}].
\end{multline}
The following
algorithm employs competing provers, similar to how
Algorithm~\ref{alg:fid-channels}\ does. 

\begin{algorithm}
\label{alg:fid-multiple-channels} The algorithm proceeds as follows:

\begin{enumerate}
\item The verifier prepares a state%
\begin{equation}
|\Phi^{p}\rangle_{T^{\prime}T}\coloneqq\sum_{x\in\mathcal{X}}\sqrt
{p(x)}|xx\rangle_{T^{\prime}T}%
\end{equation}
on registers $T^{\prime}$ and $T$ and prepares system $E^{\prime}$ in the
all-zeros state $|0\rangle_{RS}$.

\item The min-prover transmits the system $A$ of the state $|\psi\rangle_{RA}$
to the verifier.

\item Using the circuits in the set $\{U_{AE^{\prime}\rightarrow BE}%
^{x}\}_{x\in\mathcal{X}}$, the verifier performs the following controlled
unitary:%
\begin{equation}
\sum_{x\in\mathcal{X}}|x\rangle\!\langle x|_{T}\otimes U_{AE^{\prime
}\rightarrow BE}^{x}.
\end{equation}

\item The verifier transmits systems $T^{\prime}$ and $E$ to the max-prover.

\item The max-prover prepares a system $F$ in the $|0\rangle_{F}$ state and
acts on systems $T^{\prime}$, $R$, and $F$ with a unitary $P_{T^{\prime
}EF\rightarrow T^{\prime\prime}F^{\prime}}$ to produce the output systems
$T^{\prime\prime}$ and $F^{\prime}$, where $T^{\prime\prime}$ is a qudit system.

\item The max-prover sends system $T^{\prime\prime}$ to the verifier, who then
performs a qudit Bell measurement%
\begin{equation}
\{\Phi_{T^{\prime\prime}T},I_{T^{\prime\prime}T}-\Phi_{T^{\prime\prime}T}\}
\end{equation}
on systems $T^{\prime\prime}$ and $T$, where $\Phi_{T^{\prime\prime}T}$ is
defined in~\eqref{eq:qudit-bell-state}. The verifier accepts if and only if
the outcome $\Phi_{T^{\prime\prime}T}$ occurs.
\end{enumerate}
\end{algorithm}

\begin{theorem}
\label{thm:fid-multiple-channels}
The acceptance probability of
Algorithm~\ref{alg:fid-multiple-channels}\ is equal to%
\begin{multline}
p_{\operatorname{sim}}(\{p(x),\mathcal{N}^{x}\}_{x\in\mathcal{X}}) = \\
\frac{1}{d}\left[  \inf_{\psi_{RA}}\sup_{\sigma_{RB}}\sum_{x\in\mathcal{X}%
}\sqrt{p(x)}\sqrt{F}(\mathcal{N}_{A\rightarrow B}^{x}(\psi_{RA}),\sigma
_{RB})\right]  ^{2}. \label{eq:accept-prob-mult-channels}%
\end{multline}
This acceptance probability is bounded from above by%
\begin{multline}
\frac{1}{d}+\frac{2}{d}\times\label{eq:upper-bound-accept-prob-mult-ch}\\
\inf_{\psi_{RA}}\sum_{\substack{x,y\in\mathcal{X}:\\x<y}}\sqrt{p(x)p(y)}%
\sqrt{F}(\mathcal{N}_{A\rightarrow B}^{x}(\psi_{RA}),\mathcal{N}_{A\rightarrow
B}^{y}(\psi_{RA})).
\end{multline}
When $d=2$, this upper bound is tight.
\end{theorem}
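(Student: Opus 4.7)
My plan is to combine the analysis of Theorem~\ref{thm:acc-prob-fid-states} (which handles the max-prover's optimization) with the analysis of Theorem~\ref{thm:fid-multiple-states} (which handles the qudit control register and yields the root-fidelity sum), while inserting an outer infimum corresponding to the min-prover's choice of input state $|\psi\rangle_{RA}$. So I would first propagate the global state through Steps~1--3 of Algorithm~\ref{alg:fid-multiple-channels}, yielding
\begin{equation}
\sum_{x\in\mathcal{X}}\sqrt{p(x)}\,|xx\rangle_{T^{\prime}T}\,U^{x}_{AE^{\prime}\to BE}|\psi\rangle_{RA}|0\rangle_{E^{\prime}},
\end{equation}
and then append $|0\rangle_F$ and apply the max-prover's unitary $P\equiv P_{T^{\prime}EF\to T^{\prime\prime}F^{\prime}}$. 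The acceptance probability for a fixed input state and fixed prover is then $\|\langle\Phi|_{T^{\prime\prime}T}P\sum_{x}\sqrt{p(x)}|xx\rangle_{T^{\prime}T}U^{x}|\psi\rangle_{RA}|0\rangle_{E^{\prime}F}\|_{2}^{2}$, and the overall acceptance probability is $\inf_{\psi_{RA}}\sup_{P}$ of this quantity, since the min-prover moves first.

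Next, for fixed $\psi_{RA}$ I would run exactly the Uhlmann-based argument from Theorem~\ref{thm:fid-multiple-states} (equations \eqref{eq:uhlmann-steps-1}--\eqref{eq:uhlmann-steps-last}). The key observation is that the reduced state of $\sum_{x}\sqrt{p(x)}|xx\rangle_{T^{\prime}T}U^{x}|\psi\rangle_{RA}|0\rangle_{E^{\prime}F}$ on systems $TRB$ (after tracing the prover's systems, which is equivalent to maximizing over a purification on $F^{\prime}S$, which here becomes $F^{\prime}RB$) is the classical--quantum state $\sum_{x}p(x)|x\rangle\!\langle x|_{T}\otimes\mathcal{N}^{x}_{A\to B}(\psi_{RA})$, while the purification of $|\Phi\rangle_{T^{\prime\prime}T}|\varphi\rangle_{F^{\prime}RB}$ has reduced state $\pi_{T}\otimes\sigma_{RB}$ for some $\sigma_{RB}$. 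Applying Uhlmann's theorem and the direct-sum property of the root fidelity \cite[Proposition~4.29]{KW20book}, the max over $P$ equals
\begin{equation}
\frac{1}{d}\left[\sup_{\sigma_{RB}}\sum_{x\in\mathcal{X}}\sqrt{p(x)}\sqrt{F}\!\left(\mathcal{N}^{x}_{A\to B}(\psi_{RA}),\sigma_{RB}\right)\right]^{2}.
\end{equation}
Taking the infimum over $\psi_{RA}$ yields \eqref{eq:accept-prob-mult-channels}.

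For the upper bound in \eqref{eq:upper-bound-accept-prob-mult-ch}, I would mimic the latter half of the proof of Theorem~\ref{thm:fid-multiple-states}: define $P^{x}_{E\to F^{\prime}}\coloneqq\langle x|_{T^{\prime\prime}}P_{T^{\prime}EF\to T^{\prime\prime}F^{\prime}}|x\rangle_{T^{\prime}}|0\rangle_F$, expand the squared norm $\frac{1}{d}\|\sum_{x}\sqrt{p(x)}P^{x}U^{x}|\psi\rangle_{RA}|0\rangle_{E^{\prime}}\|_{2}^{2}$, split diagonal from off-diagonal terms, use $(P^{x})^{\dagger}P^{x}\leq I$ on the diagonal, and bound each off-diagonal term $|\langle\psi|(U^{x})^{\dagger}(P^{x})^{\dagger}P^{y}U^{y}|\psi\rangle|$ by $\sqrt{F}(\mathcal{N}^{x}_{A\to B}(\psi_{RA}),\mathcal{N}^{y}_{A\to B}(\psi_{RA}))$ via the contraction-to-isometry decomposition argument used in Theorem~\ref{thm:acc-prob-fid-states}. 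Finally, the infimum over $\psi_{RA}$ then gives \eqref{eq:upper-bound-accept-prob-mult-ch}. Tightness at $d=2$ follows by choosing $P^{0},P^{1}$ to be the Uhlmann-optimal isometries, exactly as at the end of the proof of Theorem~\ref{thm:fid-multiple-states}.

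The main obstacle I expect is cleanly justifying the exchange of optimizations: we must be careful that the $\sup_{\sigma_{RB}}$ arising from Uhlmann's theorem remains inside the $\inf_{\psi_{RA}}$ (as in \eqref{eq:accept-prob-mult-channels}) rather than being turned into a minimax, since the order of prover actions is fixed (min-prover first, max-prover last). For the upper bound step, the other subtlety is that the contractions $P^{x}$ depend on $\psi_{RA}$, so the bound $|\langle\psi|(U^x)^\dagger (P^x)^\dagger P^y U^y|\psi\rangle|\leq \sqrt{F}(\mathcal{N}^x(\psi_{RA}),\mathcal{N}^y(\psi_{RA}))$ must be argued uniformly using the fact that the maximum over all contractions of the inner product between any two purifications of a pair of states equals the root fidelity of those states (via the convex decomposition of contractions into isometries from \cite[Theorem~5.10]{Z11}).
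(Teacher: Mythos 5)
Your proposal is correct and follows essentially the same route as the paper: the paper simply observes that, for a fixed min-prover state $\psi_{RA}$, the situation reduces to Algorithm~\ref{alg:fid-multiple-states} applied to the ensemble $\{p(x),\mathcal{N}^{x}_{A\to B}(\psi_{RA})\}_{x}$, invokes Theorem~\ref{thm:fid-multiple-states} as a black box for the expression, the upper bound, and the $d=2$ tightness, and then takes the infimum over $\psi_{RA}$. Your version unrolls that reduction explicitly (re-running the Uhlmann and contraction-to-isometry arguments in the channel setting), which is fine, and your worry about the order of optimizations is moot since the claimed answer is already the $\inf\sup$ dictated by the fixed order of prover moves.
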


\begin{proof}
The proof can be found in Appendix~\ref{app:ProofAlg12}.
\end{proof}

\begin{corollary}
The following identity holds in the special case of two channels
$\mathcal{N}_{A\rightarrow B}^{0}$ and $\mathcal{N}_{A\rightarrow B}^{1}$ and
probability $p\in\left[  0,1\right]  $:%
\begin{multline}
\left[  \inf_{\psi_{RA}}\sup_{\sigma_{RB}}\left(
\begin{array}
[c]{c}%
\sqrt{p}\sqrt{F}(\mathcal{N}_{A\rightarrow B}^{0}(\psi_{RA}),\sigma_{RB})\\
+\sqrt{1-p}\sqrt{F}(\mathcal{N}_{A\rightarrow B}^{1}(\psi_{RA}),\sigma_{RB})
\end{array}
\right)  \right]  ^{2}\\
=1+2\sqrt{p\left(  1-p\right)  }\inf_{\psi_{RA}}\sqrt{F}(\mathcal{N}%
_{A\rightarrow B}^{0}(\psi_{RA}),\mathcal{N}_{A\rightarrow B}^{1}(\psi_{RA})),
\end{multline}
where the supremum is with respect to every density operator $\sigma_{RB}$.
\end{corollary}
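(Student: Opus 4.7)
The plan is to derive this corollary as a direct specialization of Theorem~\ref{thm:fid-multiple-channels} to the case $d = |\mathcal{X}| = 2$ with $\mathcal{X} = \{0,1\}$ and $p(0) = p$, $p(1) = 1-p$. In this setting, the theorem provides two distinct expressions for the acceptance probability of Algorithm~\ref{alg:fid-multiple-channels}: the exact formula in \eqref{eq:accept-prob-mult-channels} and the upper bound in \eqref{eq:upper-bound-accept-prob-mult-ch}. The theorem furthermore asserts that the upper bound is tight when $d=2$, meaning the two expressions must coincide.

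First I would substitute $d=2$ into \eqref{eq:accept-prob-mult-channels}, observing that the single-term sum $\sum_{x} \sqrt{p(x)} \sqrt{F}(\mathcal{N}^x(\psi_{RA}), \sigma_{RB})$ reduces exactly to the bracketed expression on the left-hand side of the corollary, so that the acceptance probability equals
\begin{equation}
\frac{1}{2}\left[ \inf_{\psi_{RA}} \sup_{\sigma_{RB}} \left( \sqrt{p}\sqrt{F}(\mathcal{N}^{0}(\psi_{RA}),\sigma_{RB}) + \sqrt{1-p}\sqrt{F}(\mathcal{N}^{1}(\psi_{RA}),\sigma_{RB}) \right) \right]^{2} .
\end{equation}
Next I would specialize \eqref{eq:upper-bound-accept-prob-mult-ch}: with only one pair $x<y$ (namely $(0,1)$), the sum collapses and yields
\begin{equation}
\frac{1}{2} + \sqrt{p(1-p)} \inf_{\psi_{RA}} \sqrt{F}(\mathcal{N}^{0}(\psi_{RA}), \mathcal{N}^{1}(\psi_{RA})) .
\end{equation}
Invoking the tightness statement at $d=2$ equates these two expressions, and multiplying through by $2$ yields the claimed identity.

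The one place to be careful is in justifying that the outer infimum over $\psi_{RA}$ can be moved across the squaring and across the constant shift on the right-hand side. This is routine: the tightness for $d=2$ is a pointwise statement in $\psi_{RA}$ (the inner supremum over $\sigma_{RB}$ equals its upper bound for each fixed input state, as one sees by tracing through the proof of Theorem~\ref{thm:fid-multiple-states} with the isometries from Uhlmann's theorem), and since the map $t \mapsto t^2$ is monotone on nonnegative reals and the map $t \mapsto 1 + 2\sqrt{p(1-p)}\, t$ is affine and nondecreasing, we may freely interchange the infimum with the squaring on the left and with the affine transformation on the right. Hence the main obstacle --- really the only nontrivial ingredient --- is the tightness at $d=2$ already proved inside Theorem~\ref{thm:fid-multiple-channels}, and beyond that the corollary is essentially a bookkeeping specialization.
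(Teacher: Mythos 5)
Your proposal is correct and follows essentially the same route as the paper, which presents this corollary as an immediate consequence of the $d=2$ tightness claim in Theorem~\ref{thm:fid-multiple-channels} (itself inherited pointwise in $\psi_{RA}$ from Theorem~\ref{thm:fid-multiple-states}). The extra care you take about interchanging the infimum with the squaring and the affine shift is sound and harmless, though the paper treats this as routine bookkeeping.
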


\begin{remark}
We note here that we can generalize the developments in this section and the
previous one to the case of quantum strategies, in order to test how similar
strategies in a set are to each other. Let $\{p(x),\mathcal{N}^{x,(n)}%
\}_{x\in\mathcal{X}}$ be an ensemble of quantum strategies, each of which has
$n$ turns. Then the acceptance probability of an algorithm that is the obvious
generalization of Algorithms~\ref{alg:fid-strategies} and
\ref{alg:fid-multiple-channels}\ is given by%
\begin{equation}
\frac{1}{d}\left[  \inf_{\mathcal{S}^{(n-1)}}\sup_{\sigma}\sum_{x\in
\mathcal{X}}\sqrt{p(x)}\sqrt{F}(\mathcal{N}^{x,(n)}\circ\mathcal{S}%
^{(n-1)},\sigma_{R_{n}B_{n}})\right]  ^{2},
\label{eq:similarity-multiple-strategies}%
\end{equation}
where the infimum is with respect to every $(n-1)$-turn pure co-strategy that
leads to a quantum state $\mathcal{N}^{x,(n)}\circ\mathcal{S}^{(n-1)}$ (as
discussed around \eqref{eq:state-after-strategy-co-strategy}) and the supremum
is with respect to every state $\sigma_{R_{n}B_{n}}$. The expression in
\eqref{eq:similarity-multiple-strategies} is a similarity measure for the
strategies in the ensemble $\{p(x),\mathcal{N}^{x,(n)}\}_{x\in\mathcal{X}}$.
\end{remark}

We can also generalize Algorithm~\ref{alg:fid-channels-single-prover}\ from
Section~\ref{sec:max-output-fid-channels}, to estimate the following
similarity measure for an ensemble $\{p(x),\mathcal{N}_{A\rightarrow B}%
^{x}\}_{x\in\mathcal{X}}$\ of channels:%
\begin{equation}
\frac{1}{d}\left[  \sup_{\rho_{A},\sigma_{B}}\sum_{x\in\mathcal{X}}\sqrt
{p(x)}\sqrt{F}(\mathcal{N}_{A\rightarrow B}^{x}(\rho_{A}),\sigma_{B})\right]
^{2},
\end{equation}
where the optimization is over all density operators $\rho_{A}$ and
$\sigma_{B}$. As is the case with
Algorithm~\ref{alg:fid-channels-single-prover}, there is a single prover who
is trying to make all of the channel outputs look like the same state. Again
we suppose that there is a circuit $U_{AE^{\prime}\rightarrow BE}^{x}$ that
generates an isometric extension of the channel $\mathcal{N}_{A\rightarrow
B}^{x}$, in the sense of \eqref{eq:iso-ext-channels-mult}.

\begin{algorithm}
\label{alg:fid-multiple-channels-single-prover} The algorithm proceeds as follows:

\begin{enumerate}
\item The verifier prepares a state%
\begin{equation}
|\Phi^{p}\rangle_{T^{\prime}T}\coloneqq\sum_{x\in\mathcal{X}}\sqrt
{p(x)}|xx\rangle_{T^{\prime}T}%
\end{equation}
on registers $T^{\prime}$ and $T$ and prepares system $E^{\prime}$ in the
all-zeros state $|0\rangle_{E^{\prime}}$.

\item The prover transmits the system $A$ of the state $|\psi\rangle_{RA}$ to
the verifier.

\item Using the circuits in the set $\{U_{AE^{\prime}\rightarrow BE}%
^{x}\}_{x\in\mathcal{X}}$, the verifier performs the following controlled
unitary:%
\begin{equation}
\sum_{x\in\mathcal{X}}|x\rangle\!\langle x|_{T}\otimes U_{AE^{\prime
}\rightarrow BE}^{x}.
\end{equation}

\item The verifier transmits systems $T^{\prime}$ and $E$ to the max-prover.

\item The prover prepares a system $F$ in the $|0\rangle_{F}$ state and acts
on systems $T^{\prime}$, $R$, and $F$ with a unitary $P_{T^{\prime
}EF\rightarrow T^{\prime\prime}F^{\prime}}$ to produce the output systems
$T^{\prime\prime}$ and $F^{\prime}$, where $T^{\prime\prime}$ is a qudit system.

\item The prover sends system $T^{\prime\prime}$ to the verifier, who then
performs a qudit Bell measurement%
\begin{equation}
\{\Phi_{T^{\prime\prime}T},I_{T^{\prime\prime}T}-\Phi_{T^{\prime\prime}T}\}
\end{equation}
on systems $T^{\prime\prime}$ and $T$, where $\Phi_{T^{\prime\prime}T}$ is
defined in \eqref{eq:qudit-bell-state}. The verifier accepts if and only if
the outcome $\Phi_{T^{\prime\prime}T}$ occurs.
\end{enumerate}
\end{algorithm}

\begin{theorem}
\label{thm:fid-multiple-channels-single-prover}The acceptance probability of
Algorithm~\ref{alg:fid-multiple-channels-single-prover}\ is equal to%
\begin{multline}
p_{\operatorname{sim,max}}(\{p(x),\mathcal{N}^{x}\}_{x\in\mathcal{X}}) = \\
\frac{1}{d}\left[  \sup_{\rho_{A},\sigma_{B}}\sum_{x\in\mathcal{X}}\sqrt
{p(x)}\sqrt{F}(\mathcal{N}_{A\rightarrow B}^{x}(\rho_{A}),\sigma_{B})\right]
^{2}.
\end{multline}
This acceptance probability is bounded from above by%
\begin{multline}
\frac{1}{d}+\frac{2}{d}\times\\
\sup_{\rho_{A}}\sum_{x,y\in\mathcal{X}:x<y}\sqrt{p(x)p(y)}\sqrt{F}%
(\mathcal{N}_{A\rightarrow B}^{x}(\rho_{A}),\mathcal{N}_{A\rightarrow B}%
^{y}(\rho_{A})).
\end{multline}
When $d=2$, this upper bound is tight.
\end{theorem}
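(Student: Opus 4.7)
The plan is to mirror the proof of Theorem~\ref{thm:fid-multiple-channels}, with the single prover playing the role of both input-preparer and final-unitary-chooser. The crucial structural difference from Theorem~\ref{thm:fid-multiple-channels} is that, in the single-prover setting, the prover's final unitary has access both to the channel environment~$E$ \emph{and} to the kept reference register~$R$ of the input state, so the joint purifying register available to the prover is $RE$ rather than only~$E$. This is the same freedom that was used in the analysis of Algorithm~\ref{alg:fid-channels-single-prover}.

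After Steps 1--3 of Algorithm~\ref{alg:fid-multiple-channels-single-prover}, the global state is $\sum_{x\in\mathcal{X}}\sqrt{p(x)}|xx\rangle_{T'T}|\phi^x\rangle_{RBE}$ with $|\phi^x\rangle_{RBE}\coloneqq U^x_{AE'\to BE}|\psi\rangle_{RA}|0\rangle_{E'}$; this is simultaneously a purification of $\mathcal{N}^x_{A\to B}(\rho_A)$ on $B$ with purifying register $RE$, where $\rho_A\coloneqq\operatorname{Tr}_R[\psi_{RA}]$. For fixed $\psi_{RA}$, I would then replay the Uhlmann-based argument from the proof of Theorem~\ref{thm:fid-multiple-states}, with the role of the system~$S$ played here by~$B$ and the role of the purifying register played by~$RE$: writing the acceptance probability as $\sup_{|\varphi\rangle,P}|\langle\Phi|_{T''T}\langle\varphi|P(\sum_x\sqrt{p(x)}|xx\rangle_{T'T}|\phi^x\rangle_{RBE}|0\rangle_F)|^2$ and using that the pair $(P,|\varphi\rangle)$ ranges over all purifications of all pairs of the form (reduced state of state~1 on $TB$, $\pi_T\otimes\sigma_B$), Uhlmann's theorem collapses this supremum to $\sup_{\sigma_B}F\!\left(\sum_x p(x)|x\rangle\!\langle x|_T\otimes\mathcal{N}^x_{A\to B}(\rho_A),\,\pi_T\otimes\sigma_B\right)$. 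The direct-sum property of root fidelity then rewrites this as $\tfrac{1}{d}\bigl[\sup_{\sigma_B}\sum_x\sqrt{p(x)}\sqrt{F}(\mathcal{N}^x_{A\to B}(\rho_A),\sigma_B)\bigr]^2$. Taking the supremum over $\psi_{RA}$, which parametrizes $\rho_A$ freely, produces the stated equality.

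For the upper bound, I would expand $\tfrac{1}{d}\|\sum_x\sqrt{p(x)}P^x|\phi^x\rangle_{RBE}\|_2^2$, bound the diagonal contribution by $1/d$, and control each off-diagonal term $|\langle\phi^x|(P^x)^\dagger P^y|\phi^y\rangle|$ by $\sqrt{F}(\mathcal{N}^x_{A\to B}(\rho_A),\mathcal{N}^y_{A\to B}(\rho_A))$ using the contraction form of Uhlmann's theorem applied to the purifications $|\phi^x\rangle,|\phi^y\rangle$ on $RE$, exactly as in the proof of Theorem~\ref{thm:fid-multiple-states}. Supremizing over $\psi_{RA}$ then yields the claimed upper bound, and tightness at $d=2$ follows, as in Theorem~\ref{thm:fid-multiple-states}, by taking $P$ to be a unitary extension of Uhlmann-achieving isometries for the single pair $(\mathcal{N}^0_{A\to B}(\rho_A),\mathcal{N}^1_{A\to B}(\rho_A))$.

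The main technical point to verify is that the Uhlmann step genuinely reduces the inner supremum to one over $\sigma_B$ rather than over $\sigma_{RB}$; this reduction depends on the prover's purifying register including~$R$, which is precisely the point at which the single-prover analysis departs from that of Theorem~\ref{thm:fid-multiple-channels}.
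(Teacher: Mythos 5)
Your proof is correct and follows essentially the same route as the paper's, which simply observes that for a fixed input $\psi_{RA}$ the protocol reduces to Algorithm~\ref{alg:fid-multiple-states} applied to the ensemble $\{p(x),\mathcal{N}^{x}_{A\rightarrow B}(\rho_A)\}$ of states on $B$ (purified by $RE$) and then optimizes over all input states. Your explicit remark that the inner supremum collapses to one over $\sigma_B$ rather than $\sigma_{RB}$ precisely because the single prover retains the reference register $R$ is exactly the point the paper leaves implicit.
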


\begin{proof}
For a fixed state $\psi_{RA}$ of the prover, the problem is equivalent to that
specified by Algorithm~\ref{alg:fid-multiple-states}, for the ensemble
$\{p(x),F(\mathcal{N}_{A\rightarrow B}^{x}(\rho_{A})\}_{x\in\mathcal{X}}$,
where $\rho_{A}=\operatorname{Tr}_{A}[\psi_{RA}]$. Thus, all of the statements
from Theorem~\ref{thm:fid-multiple-states}\ apply for this fixed state. We
arrive at the statement of the theorem after optimizing over all input states.
\end{proof}

\section{Estimating trace distance, diamond distance, and strategy distance}

\label{sec:TD-based-measures}

We now review several well known algorithms for estimating trace distance \cite{W02}, diamond distance \cite{RW05}, and strategy distance \cite{GW06,G09,G12}\ by interacting with quantum provers. Later on, we replace the provers with parameterized circuits to see how well this approach can perform in estimating these distinguishability measures. A summary of the algorithms is presented in Table~\ref{tab:TdAlgs}.

\renewcommand{\arraystretch}{2.3}
\begin{table*}
\begin{center}
    \begin{tabular}{|M{3.5cm}||M{2.5cm}|M{8.5cm}|}
        \hline 
         \multirow{2}{*}{Problem} & 
         \multirow{2}{*}{Algorithms} & 
         \multirow{2}{*}{Comparison}\\
         & & \\
        \hline
        \multirow{1}{*}[0.15cm]{$\left\Vert \rho_0 - \rho_1 \right\Vert_1$} & \multirow{1}{*}[0.1cm]{Algorithm~\ref{alg:trace-distance-states}} & 
        Algorithm~\ref{alg:trace-distance-states} does not require the purifying system, unlike fidelity algorithms.  \\
        \hline
        \multirow{1}{*}[-0.05cm]{$\left\Vert \mathcal{N}_0 - \mathcal{N}_1 \right\Vert_\diamond $} & \multirow{1}{*}[-0.05cm]{Algorithm~\ref{alg:diamond-distance-channels}} & - \\
        \hline
        \multirow{1}{*}[-0.05cm]{$\left\Vert \mathcal{N}^{0, (n)} - \mathcal{N}^{1, (n)} \right\Vert_{\diamond n} $} & \multirow{1}{*}[-0.05cm]{Algorithm~\ref{alg:strategy-distance-strategies}} & - \\
        \hline
        
        \multirow{2}{*}[-0.1cm]{$\left\Vert \mathcal{N}_0 - \mathcal{N}_1 \right\Vert_{\diamond, \text{min}} $} & Algorithm~\ref{alg:min-trace-distance-channels} & \multirow[c]{2}{8.5cm}{\centering Algorithm~\ref{alg:min-trace-distance-channels-swap-min-max} swaps the role of the max-prover and min-prover from Algorithm~\ref{alg:min-trace-distance-channels}.} \\
        \cline{2-2}
        & Algorithm~\ref{alg:min-trace-distance-channels-swap-min-max} & \\
        \hline
        \multirow{1}{*}[-0.05cm]{$ p_g(\{p(x), \rho^x\}_{x \in \mathcal{X}})  $} & Algorithm~\ref{alg:mult-state-disc-qubit} & 
        \multirow{1}{*}[-0.05cm]{Generalizes Algorithm~\ref{alg:trace-distance-states} to ensemble of states. }\\
        \hline
        
    \end{tabular}
\caption{List of trace distance problems and algorithms addressed in this work. Approach used for each algorithm and comparison within a type of trace distance problem is also presented.}
\label{tab:TdAlgs}
\end{center}
\end{table*}

\subsection{Estimating trace distance}

\label{sec:est-trace-dist-states}

The trace distance between quantum states $\rho_{S}^{0}$ and $\rho_{S}^{1}$ is
defined as $\left\Vert \rho_{S}^{0}-\rho_{S}^{1}\right\Vert _{1}$, where
$\left\Vert A\right\Vert _{1}=\operatorname{Tr}[\sqrt{A^{\dag}A}]$. It is a
well known and operationally motivated measure of distinguishability for
quantum states.

We suppose, as is the case in Section~\ref{sec:fid-arbitrary-states}, that
quantum circuits $U_{RS}^{0}$\ and $U_{RS}^{1}$ are available for generating
purifications of the states $\rho_{S}^{0}$ and $\rho_{S}^{1}$. That is, for
$i\in\left\{  0,1\right\}  $,%
\begin{equation}
\rho_{S}^{i}=\operatorname{Tr}_{R}[U_{RS}^{i}|0\rangle\!\langle0|_{RS}%
(U_{RS}^{i})^{\dag}].
\end{equation}
However, the purifying systems are not strictly necessary in the operation of
the algorithm given below, which is an advantage over some of the algorithms from
Section~\ref{sec:fid-arbitrary-states}.

The following QSZK\ algorithm allows for estimating the trace distance
\cite{W02}, in the sense that its acceptance probability is a simple function
of the trace distance:

\begin{algorithm}
[\cite{H67,H69,Hol72,W02}]
\label{alg:trace-distance-states}The algorithm
proceeds as follows:

\begin{enumerate}
\item The verifier picks a classical bit $i\in\left\{  0,1\right\}  $
uniformly at random, prepares the state $\rho_{S}^{i}$, and sends system~$S$
to the prover.

\item The prover prepares a system $F$ in the $|0\rangle_{F}$ state and acts
on systems $S$ and $F$ with a unitary $P_{SF\rightarrow TF^{\prime}}$ to
produce the output systems $T$ and $F^{\prime}$, where $T$ is a qubit system.

\item The prover sends system $T$ to the verifier, who then performs a
measurement on system $T$, with outcome $j \in \{0,1\}$. The verifier accepts if and only
if $i=j$.
\end{enumerate}
\end{algorithm}

This algorithm has been well known for some time \cite{H67,H69,Hol72,W02} and
its maximum acceptance probability is equal to%
\begin{multline}
\max_{\Lambda:0\leq\Lambda\leq I}\frac{1}{2}\operatorname{Tr}[\Lambda\rho
_{S}^{0}]+\frac{1}{2}\operatorname{Tr}[(I-\Lambda)\rho_{S}^{1}]\\
=\frac{1}{2}\left(  1+\frac{1}{2}\left\Vert \rho_{S}^{0}-\rho_{S}%
^{1}\right\Vert _{1}\right)  .
\end{multline}
This follows because the acceptance probability can be written as follows, for
a fixed unitary $P\equiv P_{SF\rightarrow TF^{\prime}}$ of the prover:%
\begin{align}
&  \frac{1}{2}\sum_{i\in\left\{  0,1\right\}  }\operatorname{Tr}%
[(|i\rangle\!\langle i|_{T}\otimes I_{F^{\prime}})P(\rho_{S}^{i}%
\otimes|0\rangle\!\langle0|_{F})P^{\dag}]\nonumber\\
&  =\frac{1}{2}\sum_{i\in\left\{  0,1\right\}  }\operatorname{Tr}%
[\langle0|_{F}P^{\dag}(|i\rangle\!\langle i|_{T}\otimes I_{F^{\prime}%
})P|0\rangle_{F}\rho_{S}^{i}]\\
&  =\frac{1}{2}\sum_{i\in\left\{  0,1\right\}  }\operatorname{Tr}[\Lambda
_{S}^{i}\rho_{S}^{i}],
\end{align}
where we have defined the measurement operator $\Lambda_{S}^{i}$, for
$i\in\left\{  0,1\right\}  $, as%
\begin{equation}
\Lambda_{S}^{i}\coloneqq \langle0|_{F}(P_{SF\rightarrow TF^{\prime}})^{\dag
}(|i\rangle\!\langle i|_{T}\otimes I_{F^{\prime}})P_{SF\rightarrow TF^{\prime
}}|0\rangle_{F},
\end{equation}
and it is clear that $\sum_{i\in\left\{  0,1\right\}  }\Lambda_{S}^{i}=I_{S}$.
By the Naimark extension theorem \cite{N40} (see also \cite{KW20book}), every
measurement can be realized in this way, so that%
\begin{multline}
\max_{P}\frac{1}{2}\sum_{i\in\left\{  0,1\right\}  }\operatorname{Tr}%
[(|i\rangle\!\langle i|_{T}\otimes I_{F^{\prime}})P(\rho_{S}^{i}%
\otimes|0\rangle\!\langle0|_{F})P^{\dag}]\\
=\max_{\Lambda:0\leq\Lambda\leq I}\frac{1}{2}\operatorname{Tr}[\Lambda\rho
_{S}^{0}]+\frac{1}{2}\operatorname{Tr}[(I-\Lambda)\rho_{S}^{1}].
\end{multline}

Thus, by replacing the actions of the prover with a parameterized circuit and
repeating the algorithm, we can use a quantum computer to estimate a lower
bound on the trace distance of the states $\rho_{S}^{0}$ and $\rho_{S}^{1}$.
An approach similar to this has been adopted in \cite{CSZW20}.

We note here that the following identity holds also \cite{H67,H69,Hol72} (see
also \cite[Theorem~3.13]{KW20book}):%
\begin{multline}
\min_{\Lambda:0\leq\Lambda\leq I}\frac{1}{2}\operatorname{Tr}[\Lambda\rho
_{S}^{0}]+\frac{1}{2}\operatorname{Tr}[(I-\Lambda)\rho_{S}^{1}%
]\label{eq:min-formula-trace-distance}\\
=\frac{1}{2}\left(  1-\frac{1}{2}\left\Vert \rho_{S}^{0}-\rho_{S}%
^{1}\right\Vert _{1}\right)  .
\end{multline}

\subsection{Estimating diamond distance}

\label{sec:est-diamond-dist}

The diamond distance between quantum channels $\mathcal{N}_{A\rightarrow
B}^{0}$ and $\mathcal{N}_{A\rightarrow B}^{1}$ is defined as \cite{Kit97}%
\begin{multline}
\left\Vert \mathcal{N}_{A\rightarrow B}^{0}-\mathcal{N}_{A\rightarrow B}%
^{1}\right\Vert _{\diamond}\coloneqq\\
\sup_{\rho_{RA}}\left\Vert \mathcal{N}_{A\rightarrow B}^{0}(\rho
_{RA})-\mathcal{N}_{A\rightarrow B}^{1}(\rho_{RA})\right\Vert _{1},
\end{multline}
where the optimization is over every bipartite state $\rho_{RA}$ and the
system $R$ can be arbitrarily large. By a well known data processing argument,
the following equality holds%
\begin{multline}
\left\Vert \mathcal{N}_{A\rightarrow B}^{0}-\mathcal{N}_{A\rightarrow B}%
^{1}\right\Vert _{\diamond}\coloneqq\\
\max_{\psi_{RA}}\left\Vert \mathcal{N}_{A\rightarrow B}^{0}(\psi
_{RA})-\mathcal{N}_{A\rightarrow B}^{1}(\psi_{RA})\right\Vert _{1},
\end{multline}
where the optimization is over every pure bipartite state $\psi_{RA}$ and the
system $R$ is isomorphic to the channel input system $A$. The diamond distance
is a well known and operationally motivated measure of distinguishability for
quantum channels \cite{RW05,GLN04}.

We suppose, as is the case in Section~\ref{sec:fid-channels}, that quantum
circuits $U_{AE^{\prime}\rightarrow BE}^{0}$\ and $U_{AE^{\prime}\rightarrow
BE}^{1}$ are available for generating isometric extensions of the channels
$\mathcal{N}_{A\rightarrow B}^{0}$ and $\mathcal{N}_{A\rightarrow B}^{1}$.
That is, for $i\in\left\{  0,1\right\}  $,%
\begin{equation}
\mathcal{N}_{A\rightarrow B}^{i}(\cdot)=\operatorname{Tr}_{E}[U_{AE^{\prime
}\rightarrow BE}^{i}((\cdot)\otimes|0\rangle\!\langle0|_{E^{\prime}%
})(U_{AE^{\prime}\rightarrow BE}^{i})^{\dag}].
\end{equation}
However, the environment systems are not strictly necessary in the operation
of the algorithm given below, which is an advantage over some of  the algorithms from
Section~\ref{sec:fid-channels}.

The following QIP\ algorithm allows for estimating the diamond distance
\cite{RW05}, in the sense that its acceptance probability is a simple function
of the diamond distance:

\begin{algorithm}
[\cite{RW05}]\label{alg:diamond-distance-channels}The algorithm proceeds as follows:

\begin{enumerate}
\item The prover prepares a pure state $\psi_{RA}$ and sends system $A$ to the verifier.

\item The verifier picks a classical bit $i\in\left\{  0,1\right\}  $
uniformly at random, applies the channel $\mathcal{N}_{A\rightarrow B}^{i}$,
and sends system $B$ to the prover.

\item The prover prepares a system $F$ in the $|0\rangle_{F}$ state and acts
on systems $R$, $B$, and $F$ with a unitary $P_{RBF\rightarrow TF^{\prime}}$
to produce the output systems $T$ and $F^{\prime}$, where $T$ is a qubit system.

\item The prover sends system $T$ to the verifier, who then performs a
measurement on system $T$, with outcome $j \in \{0,1\}$. The verifier accepts if and only
if $i=j$.
\end{enumerate}
\end{algorithm}

This algorithm has been well known for some time \cite{RW05} and its maximum
acceptance probability is equal to%
\begin{equation}
\frac{1}{2}\left(  1+\frac{1}{2}\left\Vert \mathcal{N}_{A\rightarrow B}%
^{0}-\mathcal{N}_{A\rightarrow B}^{1}\right\Vert _{\diamond}\right)  .
\end{equation}
Thus, by replacing the actions of the prover with a parameterized circuit and
repeating the algorithm, we can use a quantum computer to estimate a lower
bound on the diamond distance of the channels $\mathcal{N}_{A\rightarrow
B}^{0}$ and $\mathcal{N}_{A\rightarrow B}^{1}$.

\subsection{Estimating strategy distance}

\label{sec:est-strategy-dist}

We already provided the definition of a quantum strategy in
Section~\ref{sec:strategy-fidelity}, and therein, we discussed the strategy
fidelity (see Eq.~\eqref{eq:def-fid-strategies}). The strategy distance
\cite{GW06,CDP08a,G12}\ is conceptually similar, but it is defined with the
trace distance as the underlying metric:%
\begin{multline}
\left\Vert \mathcal{N}^{0,(n)}-\mathcal{N}^{1,(n)}\right\Vert _{\diamond
n}\coloneqq\\
\sup_{\mathcal{S}^{(n-1)}}\left\Vert \mathcal{N}^{0,(n)}\circ\mathcal{S}%
^{(n-1)}-\mathcal{N}^{1,(n)}\circ\mathcal{S}^{(n-1)}\right\Vert _{1},
\end{multline}
where the supremum is with respect to every co-strategy $\mathcal{S}^{(n-1)}$
that leads to the quantum states $\mathcal{N}^{0,(n)}\circ\mathcal{S}^{(n-1)}$
and $\mathcal{N}^{1,(n)}\circ\mathcal{S}^{(n-1)}$ (here we have employed the same notation used in \eqref{eq:state-after-strategy-co-strategy}). The strategy distance is an
operationally motivated measure of distinguishability for quantum strategies.

The following QIP algorithm allows for estimating the strategy distance
\cite{GW06}, in the sense that its acceptance probability is a simple function
of the strategy distance:

\begin{algorithm}
[\cite{GW06}]\label{alg:strategy-distance-strategies}The algorithm proceeds as follows:

\begin{enumerate}
\item The prover prepares a pure state $\psi_{RA}$ and sends system $A$ to the verifier.

\item The verifier picks a classical bit $i\in\left\{  0,1\right\}  $
uniformly at random, applies the channel $\mathcal{N}_{A_{1}\rightarrow
M_{1}B_{1}}^{i,1}$, and sends system $B_{1}$ to the prover.

\item The prover acts with the isometric channel $\mathcal{S}_{R_{1}%
B_{1}\rightarrow R_{2}A_{2}}^{1}$ and then sends system $A_{2}$ to the verifier.

\item For $k\in\left\{  2,\ldots,n-1\right\}  $, the verifier applies the
channel $\mathcal{N}_{M_{k-1}A_{k}\rightarrow M_{k}B_{k}}^{i,k}$ and transmits
system $B_{k}$ to the prover, who subsequently acts with the isometric channel
$\mathcal{S}_{R_{k}B_{k}\rightarrow R_{k+1}A_{k+1}}^{k}$ and then sends system
$A_{k+1}$ to the verifier.

\item The verifier applies the channel $\mathcal{N}_{M_{n-1}A_{n}\rightarrow
B_{n}}^{i,n}$ and sends system $B_{n}$ to the prover.

\item The prover prepares a system $F$ in the $|0\rangle_{F}$ state and acts
on systems $R_{n}$, $B_{n}$, and $F$ with a unitary $P_{R_{n}B_{n}F\rightarrow
TF^{\prime}}$ to produce the output systems $T$ and $F^{\prime}$, where $T$ is
a qubit system.

\item The prover sends system $T$ to the verifier, who then performs a
measurement on system $T$, with outcome $j \in \{0,1\}$. The verifier accepts if and only
if $i=j$.
\end{enumerate}
\end{algorithm}

This algorithm has been well known since \cite{GW06} and its maximum
acceptance probability is equal to%
\begin{equation}
\frac{1}{2}\left(  1+\frac{1}{2}\left\Vert \mathcal{N}^{0,(n)}-\mathcal{N}%
^{1,(n)}\right\Vert _{\diamond n}\right)  .
\end{equation}
Thus, by replacing the actions of the prover with a parameterized circuit and
repeating the algorithm, we can use a quantum computer to estimate a lower
bound on the strategy distance of the strategies $\mathcal{N}^{0,(n)}$ and
$\mathcal{N}^{1,(n)}$. See \cite{G12,KW20adv}\ for semi-definite programs for
evaluating the strategy distance of two strategies.

\subsection{Estimating minimum trace distance of channels}

\label{sec:est-min-TD-channels}

In this section, we show how to estimate the following trace distance function
of channels $\mathcal{N}_{A\rightarrow B}^{0}$ and $\mathcal{N}_{A\rightarrow
B}^{1}$ by means of a short quantum game (SQG) algorithm:%
\begin{equation}
\inf_{\rho_{A}}\left\Vert \mathcal{N}_{A\rightarrow B}^{0}(\rho_{A}%
)-\mathcal{N}_{A\rightarrow B}^{1}(\rho_{A})\right\Vert _{1},
\label{eq:min-trace-distance-channels}%
\end{equation}
where the optimization is over every input state $\rho_{A}$. The algorithm
features a min-prover and a max-prover. Short quantum games were defined and
studied in \cite{GW05,G05}.

\begin{algorithm}
\label{alg:min-trace-distance-channels}The algorithm proceeds as follows:

\begin{enumerate}
\item The min-prover prepares a state $\psi_{RA}$ and sends system $A$ to the verifier.

\item The verifier picks a classical bit $i\in\left\{  0,1\right\}  $
uniformly at random, applies the channel $\mathcal{N}_{A\rightarrow B}^{i}$,
and sends system $B$ to the max-prover.

\item The max-prover prepares a system $F$ in the $|0\rangle_{F}$ state and
acts on systems $R$, $B$, and $F$ with a unitary $P_{RBF\rightarrow
TF^{\prime}}$ to produce the output systems $T$ and $F^{\prime}$, where $T$ is
a qubit system.

\item The max-prover sends system $T$ to the verifier, who then performs a
measurement on system $T$, with outcome $j \in \{0,1\}$. The verifier accepts if and only
if $i=j$.
\end{enumerate}
\end{algorithm}

For a fixed state $\psi_{RA}$ of the min-prover, it follows from
Algorithm~\ref{alg:trace-distance-states}\ that the acceptance probability is
equal to%
\begin{equation}
\frac{1}{2}\left(  1+\frac{1}{2}\left\Vert \mathcal{N}_{A\rightarrow B}%
^{0}(\rho_{A})-\mathcal{N}_{A\rightarrow B}^{1}(\rho_{A})\right\Vert
_{1}\right)  ,
\end{equation}
where $\rho_{A}=\operatorname{Tr}_{R}[\psi_{RA}]$. Since the min-prover plays
first and his goal is to minimize the acceptance probability, it follows that
the acceptance probability of Algorithm~\ref{alg:min-trace-distance-channels}
is given by%
\begin{equation}
\frac{1}{2}\left( 1+\Vert \mathcal{N}_0 - \mathcal{N}_1 \Vert_{\diamond, \text{min}}\right),
\end{equation}
where 
\begin{equation}
    \Vert \mathcal{N}_0 - \mathcal{N}_1 \Vert_{\diamond, \text{min}} \coloneqq \frac{1}{2}\inf_{\rho_{A}}\left\Vert \mathcal{N}_{A\rightarrow B}^{0}(\rho_{A})-\mathcal{N}_{A\rightarrow B}^{1}(\rho_{A})\right\Vert _{1}.
\end{equation}

Another way to estimate the minimum trace distance of channels in
\eqref{eq:min-trace-distance-channels} is to swap the roles of the max-prover
and min-prover in Algorithm~\ref{alg:min-trace-distance-channels}:

\begin{algorithm}
\label{alg:min-trace-distance-channels-swap-min-max}The algorithm proceeds as follows:

\begin{enumerate}
\item The max-prover prepares a state $\psi_{RA}$ and sends system $A$ to the verifier.

\item The verifier picks a classical bit $i\in\left\{  0,1\right\}  $
uniformly at random, applies the channel $\mathcal{N}_{A\rightarrow B}^{i}$,
and sends system $B$ to the min-prover.

\item The min-prover prepares a system $F$ in the $|0\rangle_{F}$ state and
acts on systems $R$, $B$, and $F$ with a unitary $P_{RBF\rightarrow
TF^{\prime}}$ to produce the output systems $T$ and $F^{\prime}$, where $T$ is
a qubit system.

\item The min-prover sends system $T$ to the verifier, who then performs a
measurement on system $T$, with outcome $j \in \{0,1\}$. The verifier accepts if and only
if $i=j$.
\end{enumerate}
\end{algorithm}

For a fixed state $\psi_{RA}$ of the max-prover, it follows from
\eqref{eq:min-formula-trace-distance}\ that the acceptance probability is
equal to%
\begin{equation}
\frac{1}{2}\left(  1-\frac{1}{2}\left\Vert \mathcal{N}_{A\rightarrow B}%
^{0}(\rho_{A})-\mathcal{N}_{A\rightarrow B}^{1}(\rho_{A})\right\Vert
_{1}\right)  ,
\end{equation}
where $\rho_{A}=\operatorname{Tr}_{R}[\psi_{RA}]$. Since the max-prover plays
first and his goal is to maximize the acceptance probability, it follows that
the acceptance probability of Algorithm~\ref{alg:min-trace-distance-channels}
is given by%
\begin{equation}
\frac{1}{2}\left(  1-\frac{1}{2}\inf_{\rho_{A}}\left\Vert \mathcal{N}%
_{A\rightarrow B}^{0}(\rho_{A})-\mathcal{N}_{A\rightarrow B}^{1}(\rho
_{A})\right\Vert _{1}\right)  .
\end{equation}

Although the quantities estimated by
Algorithms~\ref{alg:fid-channels-single-prover} and \ref{alg:min-trace-distance-channels} or 
\ref{alg:min-trace-distance-channels-swap-min-max} are similar (and related to
each other by standard inequalities relating trace distance and fidelity
\cite{FvG99}), the algorithms are very different in that the channel output is
available at the end of Algorithm~\ref{alg:fid-channels-single-prover},
whereas it is not at the end of
Algorithms~\ref{alg:min-trace-distance-channels} and \ref{alg:min-trace-distance-channels-swap-min-max}. This has
implications for applications in which it is helpful to have access to the
channel output, for example, when one is trying to find the fixed point of a
quantum channel.

\subsection{Generalization to multiple states, channels, and strategies}

\label{sec:multiple-TD-based}

Each of the algorithms from the previous subsections has a generalization to
multiple states, channels, and strategies. We go through them briefly here.
The main idea is that, rather than randomly picking from a set of two resources,
the verifier picks randomly from a set of multiple resources and then a prover
has to guess which one was chosen. The main difference with the binary case is
that there is not a closed-form expression for the acceptance probability in
terms of a metric like the trace distance or derived metrics, but rather the
optimization is phrased as a semi-definite program that can be solved
numerically or used in some cases to obtain analytical solutions (for
example, if there is sufficient symmetry).

Suppose that we are given an ensemble $\{p(x),\rho_{S}^{x}\}_{x\in\mathcal{X}}$ of quantum states.  The verifier picks $x$
randomly according to $p(x)$, prepares $\rho_{S}^{x}$, and the prover has to
guess which state was prepared. The  acceptance probability is given by
\begin{equation}
\label{eq:mult-state-discr}
p_g(\{p(x),\rho^x\}_{x\in \mathcal{X}}) \coloneqq 
\sup_{\left\{  \Lambda_{S}^{x}\right\}  _{x\in\mathcal{X}}}\sum_{x\in
\mathcal{X}}p(x)\operatorname{Tr}[\Lambda_{S}^{x}\rho_{S}^{x}],
\end{equation}
where the optimization is over every POVM\ $\left\{ \Lambda_{S}^{x}\right\}
_{x\in\mathcal{X}}$.  In the case that $\left\vert \mathcal{X}\right\vert =2$,
this acceptance probability has the explicit form%
\begin{equation}
\frac{1}{2}\left(  1+\left\Vert p\rho_{S}^{0}-\left(  1-p\right)  \rho_{S}%
^{1}\right\Vert _{1}\right)  .
\end{equation}

To account for multiple states, we modify Algorithm~\ref{alg:trace-distance-states} as follows: the verifier's variable $i\in\left\{  0, \dotsc, \vert \mathcal{X} \vert-1\right\} $ is randomly selected and the prover's guess $j $ is chosen from the same set. System $T$ therein is generalized to be a $\lceil \log_2 \vert \mathcal{X} \vert\rceil$-qubit system.
When $\vert \mathcal{X} \vert$ is a power of two, there is a perfect match between the number $\vert \mathcal{X} \vert$ of measurement outcomes and the dimension of system $T$. The verifier accepts if the outcome $j$ equals the state $i$ that was picked. If $\vert \mathcal{X} \vert$ is not a power of two, the following algorithm handles this case by coarse graining some of the measurement outcomes together. This is relevant because most quantum computers are qubit-based.

\begin{algorithm}
\label{alg:mult-state-disc-qubit}The~algorithm
proceeds as follows:

\begin{enumerate}
\item The verifier selects an  integer $i\in\left\{  0, \dotsc, \vert \mathcal{X} \vert - 1 \right\}  $
 at random according to $p(i)$, prepares the state $\rho_{S}^{i}$, and sends system~$S$
to the prover. 


\item The prover prepares a system $F$ composed of $\lceil \log_2\vert \mathcal{X} \vert \rceil$ qubits in the $|0\rangle_{F}$ state. The prover then acts
on systems $S$ and $F$ with a unitary $P_{SF\rightarrow TF^{\prime}}$, 
producing the output systems $F^{\prime}$ and $T$, where $T$ is a system of $\lceil \log_2\vert \mathcal{X} \vert \rceil$ qubits.

\item The prover sends system $T$ to the verifier, who then performs a computational basis measurement on system $T$, with outcome $j \in \{0,\ldots, 2^{\lceil \log_2\vert \mathcal{X} \vert \rceil}-1\}$. 

\item The verifier accepts under two conditions.
\begin{itemize}
    \item $j\leq \vert \mathcal{X} \vert-1$ and $i=j$.
    \item $j>\vert \mathcal{X} \vert-1$ and $i=0$.
\end{itemize}
\end{enumerate}
\end{algorithm}

This algorithm is a direct generalization of Algorithm~\ref{alg:trace-distance-states}. To understand its connection to \eqref{eq:mult-state-discr}, consider that,
for a fixed unitary $P_{SF\rightarrow TF^{\prime}}$, its acceptance
probability is given by%
\begin{align}
& \sum_{i\in\left\{  0,\ldots,\left\vert \mathcal{X}\right\vert -1\right\}
}p(i)\operatorname{Tr}[(|i\rangle\!\langle i|_{T}\otimes I_{F^{\prime}}%
)P(\rho_{S}^{i}\otimes|0\rangle\!\langle0|_{F})P^{\dag}]\nonumber\\
& \qquad+p(0)\sum_{j=\left\vert \mathcal{X}\right\vert }^{2^{\left\lceil
\log_{2}\left\vert \mathcal{X}\right\vert \right\rceil }}\operatorname{Tr}%
[(|j\rangle\!\langle j|_{T}\otimes I_{F^{\prime}})P(\rho_{S}^{i}\otimes
|0\rangle\!\langle0|_{F})P^{\dag}]\label{eq:mult-states-disc-unitary}\\
& =\sum_{i\in\left\{  0,\ldots,\left\vert \mathcal{X}\right\vert -1\right\}
}p(i)\operatorname{Tr}[\langle0|_{F}P^{\dag}(|i\rangle\!\langle i|_{T}\otimes
I_{F^{\prime}})P|0\rangle_{F}\rho_{S}^{i}]\nonumber\\
& \qquad+p(0)\sum_{j=\left\vert \mathcal{X}\right\vert }^{2^{\left\lceil
\log_{2}\left\vert \mathcal{X}\right\vert \right\rceil }}\operatorname{Tr}%
[\langle0|_{F}P^{\dag}(|j\rangle\!\langle j|_{T}\otimes I_{F^{\prime}%
})P|0\rangle_{F}\rho_{S}^{i}]\\
& =\sum_{i\in\left\{  0,\ldots,\left\vert \mathcal{X}\right\vert -1\right\}
}p(i)\operatorname{Tr}[\Lambda_{S}^{i}\rho_{S}^{i}],
\end{align}
where we have defined the following measurement operators:%
\begin{multline}
\Lambda_{S}^{0}\coloneqq \langle0|_{F}P^{\dag}(|0\rangle\!\langle0|_{T}\otimes
I_{F^{\prime}})P|0\rangle_{F}\\
+\sum_{j=\left\vert \mathcal{X}\right\vert }^{2^{\left\lceil \log
_{2}\left\vert \mathcal{X}\right\vert \right\rceil }}
\langle0|_{F}P^{\dag}(|j\rangle\!\langle j|_{T}\otimes I_{F^{\prime}%
})P|0\rangle_{F},
\end{multline}
and for all $i\in\left\{  1,\ldots,\left\vert \mathcal{X}\right\vert -1\right\}  $:%
\begin{equation}
\Lambda_{S}^{i}\coloneqq \langle0|_{F}P^{\dag}(|i\rangle\!\langle i|_{T}\otimes
I_{F^{\prime}})P|0\rangle_{F}.
\end{equation}
As such, we coarse grain all measurement outcomes in $\{0,\left\vert
\mathcal{X}\right\vert ,\left\vert \mathcal{X}\right\vert +1,\ldots
,2^{\left\lceil \log_{2}\left\vert \mathcal{X}\right\vert \right\rceil }\}$
into a single measurement outcome.
By the Naimark extension theorem, every measurement with $\vert \mathcal{X} \vert$ outcomes can be realized in this
way, so that maximizing the expression in \eqref{eq:mult-states-disc-unitary}
over every unitary $P$ gives a value equal to that in \eqref{eq:mult-state-discr}.

On the one hand, if $|\mathcal{X}|$ is a power of two, then it follows that $|\mathcal{X}| = 2^{\left\lceil \log_{2}\left\vert \mathcal{X}\right\vert \right\rceil}$ and the outcome $j > |\mathcal{X}| - 1$ never occurs. On the other hand, if $|\mathcal{X}|$ is not a power of two, then  $|\mathcal{X}| < 2^{\left\lceil \log_{2}\left\vert \mathcal{X}\right\vert \right\rceil}$ and the outcome $j > |\mathcal{X}| - 1$ does occur.


Now suppose that we are given an ensemble $\{p(x),\mathcal{N}_{A\rightarrow B}%
^{x}\}_{x\in\mathcal{X}}$\ of quantum channels. Then a similar modification of
Algorithm~\ref{alg:diamond-distance-channels} has acceptance probability%
\begin{equation}
\sup_{\psi_{RA},\left\{  \Lambda_{RB}^{x}\right\}  _{x\in\mathcal{X}}}%
\sum_{x\in\mathcal{X}}p(x)\operatorname{Tr}[\Lambda_{RB}^{x}\mathcal{N}%
_{A\rightarrow B}^{x}(\psi_{RA})],
\end{equation}
where the optimization is over every state $\psi_{RA}$ and POVM\ $\left\{
\Lambda_{RB}^{x}\right\}  _{x\in\mathcal{X}}$. In the case that $\left\vert
\mathcal{X}\right\vert =2$, this acceptance probability has the explicit form%
\begin{equation}
\frac{1}{2}\left(  1+\left\Vert p\mathcal{N}_{A\rightarrow B}^{0}-\left(
1-p\right)  \mathcal{N}_{A\rightarrow B}^{1}\right\Vert _{\diamond}\right)  .
\end{equation}

Suppose we are given an ensemble $\{p(x),\mathcal{N}^{x,(n)}\}_{x\in
\mathcal{X}}$ of $n$-turn quantum strategies. A similar modification of
Algorithm~\ref{alg:strategy-distance-strategies} has acceptance probability%
\begin{equation}
\sup_{\substack{\mathcal{S}^{(n-1)},\\\left\{  \Lambda_{R^{n}B^{n}}%
^{x}\right\}  _{x\in\mathcal{X}}}}\sum_{x\in\mathcal{X}}p(x)\operatorname{Tr}%
[\Lambda_{R^{n}B^{n}}^{x}(\mathcal{N}^{x,(n)}\circ\mathcal{S}^{(n-1)})],
\end{equation}
where the optimization is over every $(n-1)$-turn pure co-strategy
$\mathcal{S}^{(n-1)}$ and POVM\ $\left\{  \Lambda_{RB}^{x}\right\}
_{x\in\mathcal{X}}$ (recall \eqref{eq:state-after-strategy-co-strategy}\ in
this context). In the case that $\left\vert \mathcal{X}\right\vert =2$, this
acceptance probability has the explicit form%
\begin{equation}
\frac{1}{2}\left(  1+\left\Vert p\mathcal{N}^{0,(n)}-\left(  1-p\right)
\mathcal{N}^{1,(n)}\right\Vert _{\diamond n}\right)  ,
\end{equation}
where this is the strategy norm.

Finally, we can generalize Algorithms~\ref{alg:min-trace-distance-channels}%
\ and \ref{alg:min-trace-distance-channels-swap-min-max}, with the acceptance
probabilities respectively given by%
\begin{align}
&  \inf_{\rho_{A}}\sup_{\left\{  \Lambda_{B}^{x}\right\}  _{x\in\mathcal{X}}%
}\sum_{x\in\mathcal{X}}p(x)\operatorname{Tr}[\Lambda_{B}^{x}\mathcal{N}%
_{A\rightarrow B}^{x}(\rho_{A})],\\
&  \sup_{\rho_{A}}\inf_{\left\{  \Lambda_{B}^{x}\right\}  _{x\in\mathcal{X}}%
}\sum_{x\in\mathcal{X}}p(x)\operatorname{Tr}[\Lambda_{B}^{x}\mathcal{N}%
_{A\rightarrow B}^{x}(\rho_{A})].
\end{align}
In the case that $\left\vert \mathcal{X}\right\vert =2$, these acceptance
probabilities become%
\begin{align}
&  \frac{1}{2}\left(  1+\inf_{\rho_{A}}\left\Vert p\mathcal{N}_{A\rightarrow
B}^{0}(\rho_{A})-\left(  1-p\right)  \mathcal{N}_{A\rightarrow B}^{1}(\rho
_{A})\right\Vert _{1}\right)  ,\\
&  \frac{1}{2}\left(  1-\inf_{\rho_{A}}\left\Vert p\mathcal{N}_{A\rightarrow
B}^{0}(\rho_{A})-\left(  1-p\right)  \mathcal{N}_{A\rightarrow B}^{1}(\rho
_{A})\right\Vert _{1}\right)  .
\end{align}

\section{Performance evaluation of algorithms using a noiseless and noisy quantum simulator} 

\label{sec:num-sims}

In this section, we present results obtained from numerically simulating Algorithms~\ref{alg:fid-states}--\ref{alg:mixed-state-FC-meas-min} and Algorithm~\ref{alg:trace-distance-states} on a noiseless quantum simulator and Algorithms \ref{alg:fid-channels},  \ref{alg:diamond-distance-channels}, and \ref{alg:mult-state-disc-qubit} on both a noiseless and noisy quantum simulator. In the first subsection, we introduce and discuss the circuit ansatz employed in these numerical experiments. In the next subsection, we discuss the form of the states and channels used for the numerical simulations. In the following subsections, we present the details of our numerical simulations of Algorithms~\ref{alg:fid-states}--\ref{alg:mixed-state-FC-meas-min} for fidelity of states, Algorithm~\ref{alg:fid-channels} for the fidelity of channels, Algorithm~\ref{alg:trace-distance-states} for trace distance of states, Algorithm~\ref{alg:diamond-distance-channels} for diamond distance of channels, and Algorithm~\ref{alg:mult-state-disc-qubit} for multiple state discrimination.

In the simulations below, we use a maximum number of iterations to be the stopping condition. We noted that some algorithms - in particular, ones with multiple provers - were more prone to get stuck in local minima and optimization loops. We found that, in these scenarios, using convergence as the stopping condition could lead to an unbounded number of iterations. In these cases, we found that using a maximum number of iterations was sufficient and effective.

All the program code for Algorithms~\ref{alg:fid-states}, \ref{alg:mixed-state-swap-test}, \ref{alg:mixed-state-Bell-tests}, \ref{alg:mixed-state-FC-meas-min}, \ref{alg:fid-channels}, \ref{alg:trace-distance-states}, \ref{alg:diamond-distance-channels}, \ref{alg:mult-state-disc-qubit}, and corresponding SDPs can be found as arXiv ancillary files with the arXiv posting of this paper.

\subsection{Ansatz}

\label{sec:ansatz}

To estimate the relevant quantities in this work, we employ the hardware-efficient ansatz (HEA) \cite{KMTTBCG17}. The HEA is a problem-agnostic ansatz that depends on the architecture and the connectivity of the given hardware. In this work, we consider a fixed structure of the HEA.
Let $X$, $Y$, and $Z$ denote the Pauli matrices. We define one layer of the HEA to consist of the single-qubit rotations $e^{-\im \theta/2 Y}e^{-\im \delta/2 X}$,  each of which acts on a single qubit and is parameterized by $\theta$ and $\delta$, followed by CNOTs between neighboring qubits. A CNOT between the control qubit $k$ and the target qubit $\ell$ is given by
\begin{multline}
e^{-\im \pi/2 (\vert 1 \rangle \!\langle 1\vert_k \otimes (X_\ell - I_\ell))} = \\
\vert 0\rangle \!\langle 0 |_k \otimes I_\ell +\vert 1\rangle \!\langle 1 |_k \otimes X_\ell.    
\end{multline}

For our numerical experiments, we consider a sufficiently large number of layers of the HEA. In principle, both the circuit structure and the number of layers of the HEA can be made random and this randomness can lead to better performance of variational algorithms \cite{BCVCC21}. We leave the study of such ansatze for future work. 

The HEA is used both to create the states and channels, as well as to create a parameterized unitary that replaces the provers. In the former two cases, the rotation angles are fixed, but in the prover scenario, the angles are parameters that are optimized.

\subsection{Test states and channels}

\label{subsec:testStatesChannels}

To study the performance of our algorithms, we randomly select states and channels as follows. For $n$-qubit states, we apply $m$ layers of the HEA with randomly selected angles for rotation around the $x$- and $y$-axes on $n+k$ qubits initialized to the state $\vert 0 \rangle\!\langle 0 |$. This procedure prepares a pure state on $n+k$ qubits and hence, a mixed state on $n$ qubits of rank $\leq 2^k$.

To realize an $n$-qubit channel $\mathcal{N}_{A \rightarrow B}$, we generate a unitary $U_{AE' \rightarrow BE}$ on $n+k$ qubits such that
\begin{multline}
    \mathcal{N}_{A \rightarrow B}(\omega_A) \coloneqq \\
    \operatorname{Tr}_E \!\left[ U_{AE' \rightarrow BE} (\omega_A \otimes \vert 0 \rangle \! \langle 0 \vert_{E'}) (U_{AE' \rightarrow BE})^\dagger \right],
\end{multline}
where systems $E'$ and $E$ each consist of $k$ qubits. Due the Stinespring dilation theorem \cite{S55}, this is a general approach by which arbitrary channels can be realized.

For our experiments, we set $U$ to consist of $m$ layers of the HEA itself, with randomly selected angles for rotation around the $x$- and $y$-axes on $n+1$ qubits. Tracing out one of the qubits gives a channel on $n$ qubits, as required.

Several algorithms in our paper (see \eqref{eq:alg-pure-bell-c-Ui}, \eqref{eq:controlled-U-fid-alg-bell-overlap}, \eqref{eq:controlled-U-fid-channels}) depend on having access to unitaries of the form 
\begin{equation}
\sum_{i\in\left\{  0,1\right\}  }|i\rangle\!\langle i|_{T}\otimes U_{S}^{i} = \vert 0 \rangle\!\langle 0 \vert \otimes U^0_S + \vert 1 \rangle\!\langle 1 \vert \otimes U^1_S.
\end{equation}
These can be split into the sequential application of the following two controlled unitaries:
\begin{align}
    \label{eq:controlled-un}
    |0\rangle\!\langle 0| \otimes I & + |1\rangle\!\langle 1| \otimes U^1_S, \nonumber \\
    |1\rangle\!\langle 1| \otimes I & + |0\rangle\!\langle 0| \otimes U^0_S,
\end{align}
of which our algorithms make use.

\subsection{Fidelity of states}

In this section, we discuss the performance of Algorithms \ref{alg:fid-states}--\ref{alg:mixed-state-FC-meas-min} in the noiseless scenario to estimate the fidelity between two three-qubit mixed states. Algorithms~\ref{alg:fid-states}--\ref{alg:mixed-state-FC-meas-min} require different numbers of qubits for estimating the fidelity between $\rho$ and $\sigma$. In particular, for this case, Algorithm~\ref{alg:fid-states} requires eight qubits, along with access to controlled unitaries, as defined in \eqref{eq:controlled-un}. Algorithms~\ref{alg:mixed-state-swap-test},
\ref{alg:mixed-state-Bell-tests}, and~\ref{alg:mixed-state-FC-meas-min} require 13, 10, and 8 qubits, respectively. We recall that Algorithms~\ref{alg:fid-states}--\ref{alg:mixed-state-Bell-tests} require purifications of both $\rho$ and $\sigma$, while Algorithm~\ref{alg:mixed-state-FC-meas-min} relies only on access to $\rho$ and~$\sigma$ directly. Moreover, Algorithms~\ref{alg:fid-states} and \ref{alg:mixed-state-swap-test} require measurements on two qubits, and Algorithm~\ref{alg:mixed-state-Bell-tests} requires Bell measurements on ten qubits. Finally, Algorithm~\ref{alg:mixed-state-FC-meas-min} requires two single-qubit measurements. 

We now summarize the HEA employed. For Algorithm~\ref{alg:fid-states}, the prover unitary is created using five layers of the HEA, which acts on four qubits. Similarly, in Algorithm~\ref{alg:mixed-state-swap-test}, we employ eight layers of the HEA that acts on six qubits. In Algorithm~\ref{alg:mixed-state-Bell-tests}, the ansatz acts on two qubits, and we consider four layers of it. In Algorithm~\ref{alg:mixed-state-FC-meas-min}, the ansatz acts on four qubits, and we apply eight layers of it. For our implementations, we picked these circuit depths so that the cost function is minimized. A more general framework allows for the ansatz structure to be unfixed and instead variable, but we leave the detailed study of this, for our algorithms, to future work \cite{BCVCC21}. 

\begin{figure}[t]
   \centering
  \includegraphics[width=\columnwidth]{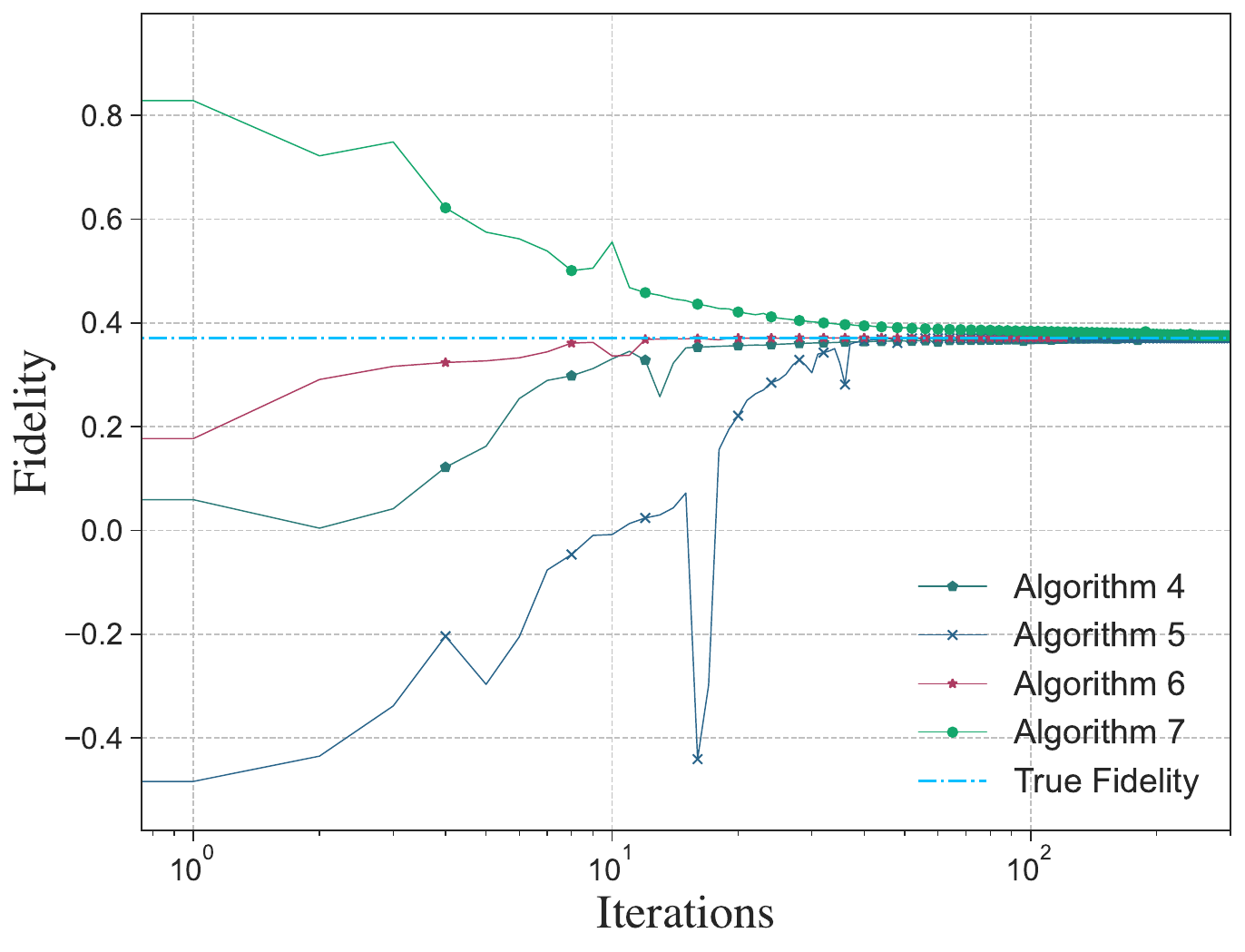}
  \caption{Estimation of the fidelity between quantum states versus the number of iterations. We implement Algorithms~\ref{alg:fid-states}--\ref{alg:mixed-state-FC-meas-min} on a noiseless simulator to estimate the fidelity between two three-qubit mixed states, each of rank $\leq 4$. For each variational algorithm, we employ the HEA, as defined in Section~\ref{sec:ansatz}. In particular, we start with a random parameter vector $\vec{\theta}$ and then update it according to a gradient-based optimization procedure. The dashed-dotted curve represents the true fidelity between two randomly chosen quantum states. In each case, the optimization procedure converges to the true fidelity with high accuracy. Algorithms~\ref{alg:fid-states}--\ref{alg:mixed-state-FC-meas-min} achieve an absolute error in fidelity estimation of order $10^{-5}, 10^{-4}, 10^{-9}$, and $10^{-3}$, respectively. }
  \label{fig:fid-algs}
\end{figure}

We begin the training with a random set of variational parameters. We evaluate the cost using a state vector simulator (noiseless simulator) \cite{Qiskit}. We then employ the gradient-descent algorithm to obtain a new set of parameters. We note that in general, the true fidelity between states $\rho$ and $\sigma$ is not known. Thus the stopping criterion for these algorithms is a maximum number of iterations. For our numerical experiments, we set the total number of iterations to be 300. For each algorithm, we run ten instances of the algorithm and pick the best run for generating Figure~\ref{fig:fid-algs}.  

In Figure~\ref{fig:fid-algs}, we plot the results of the numerical simulations. The dashed-dotted line represents the true fidelity between two random three-qubit quantum states $\rho$ and~$\sigma$, as described above. Each algorithm converges to the true fidelity with high accuracy within a finite number of iterations. As discussed above, for each algorithm, the HEA is of a different size. Thus, it is not straightforward to compare these different algorithms. In terms of the convergence rate, we find that Algorithm~\ref{alg:mixed-state-Bell-tests} converges to the true fidelity faster than all other algorithms. Algorithms~\ref{alg:fid-states}--\ref{alg:mixed-state-FC-meas-min} achieve an absolute error in fidelity estimation of order $10^{-5}$, $10^{-4}$, $10^{-9}$, and $10^{-3}$, respectively. 

\subsection{Trace distance of states}

Using Algorithm~\ref{alg:trace-distance-states}, we estimate the normalized trace distance $\frac{1}{2} \left \Vert \rho - \sigma \right\Vert_1$ between two three-qubit states $\rho$ and~$\sigma $, each having rank $\leq 4$, as defined above in Section~\ref{subsec:testStatesChannels}. For our numerical experiments, we use a noiseless simulator. Algorithm~\ref{alg:trace-distance-states} requires eight qubits in total and two single-qubit measurements. We employ ten layers of the HEA, which acts on four qubits. Similar to the fidelity-estimation algorithms detailed above, we begin with a random set of variational parameters and update them using the gradient-descent algorithm. 

As the true normalized trace distance between $\rho$ and~$\sigma$ is assumed to be unknown, we use a stopping criterion as the number of iterations, which we take to be 300~iterations. For Algorithm~\ref{alg:trace-distance-states}, we run ten instances of it and pick the best run for generating Figure~\ref{fig:td-algs}.

In Figure~\ref{fig:td-algs}, we plot the results of Algorithm~\ref{alg:trace-distance-states}. The dashed-dotted line represents the true normalized trace distance between two random three-qubit quantum states $\rho$ and $\sigma$, as described above. The absolute error in trace-distance estimation is of order $10^{-4}$.

\begin{figure}[t]
   \centering
  \includegraphics[width=\columnwidth]{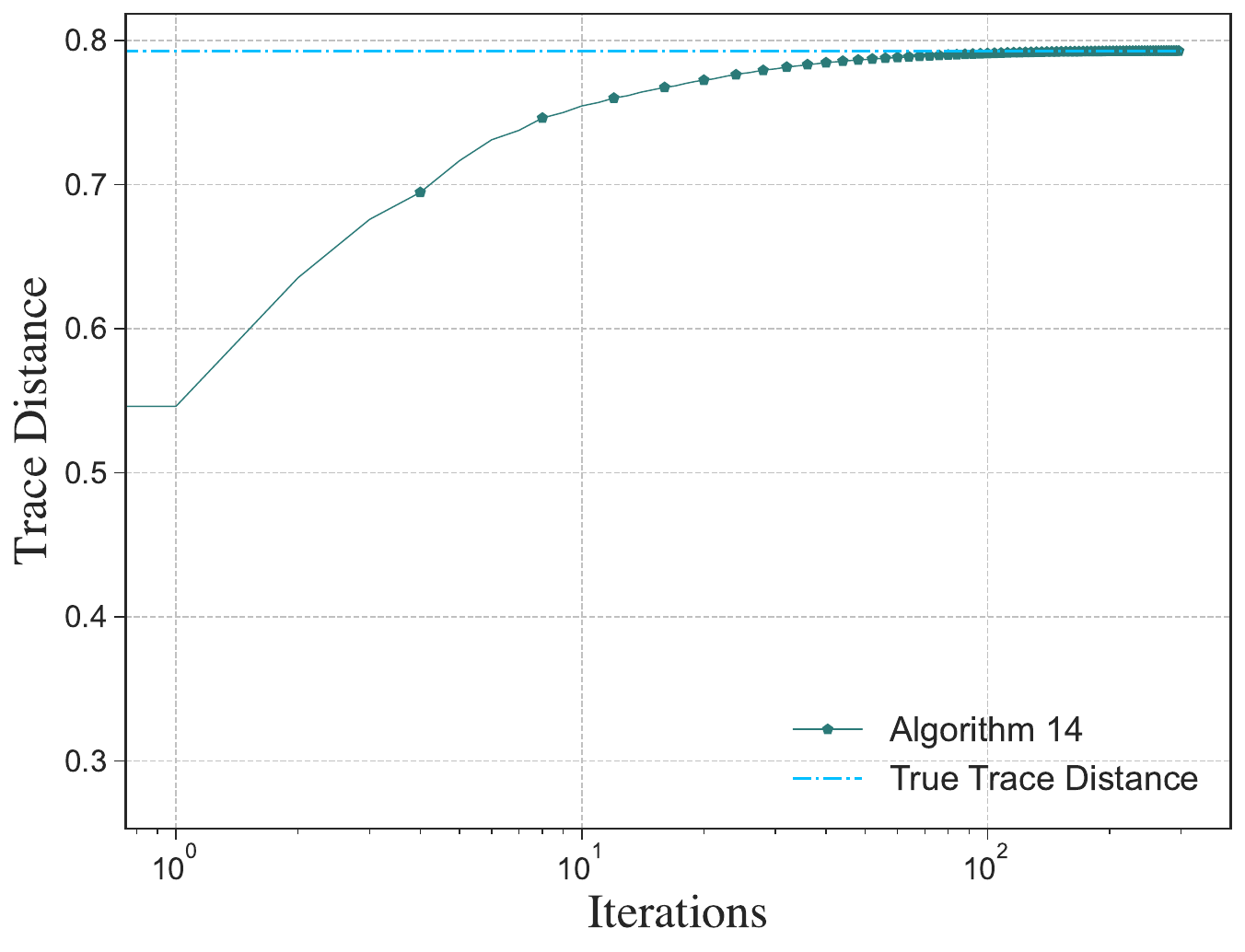}
  \caption{Estimation of the normalized trace distance between quantum states versus the number of iterations. We implement Algorithm~\ref{alg:trace-distance-states} on a noiseless simulator to estimate the normalized trace distance between three-qubit mixed states, each of rank four. Algorithm~\ref{alg:trace-distance-states} achieves an absolute error in trace distance estimation of order $10^{-4}$.}
  \label{fig:td-algs}
\end{figure}

\subsection{Fidelity of channels}

\label{sec:numerics-fid-chs}

In this section, we discuss the performance of Algorithm~\ref{alg:fid-channels} in both the noiseless and noisy scenarios. The channels in question are realized by using parameterized unitaries and tracing out ancilla qubits, as discussed in  Section~\ref{subsec:testStatesChannels}. The algorithm employs a min-max optimization and thus requires two parameterized unitaries representing the min- and max-provers, respectively. The controlled unitaries consist of one layer of the HEA, with each consisting of random rotations about the $x$-axis, on two qubits, thereby realizing the $\mathcal{N}^i_{A \rightarrow B}$ channels acting on one qubit, for $i \in \{0,1\}$. 

We now summarize the HEA employed in generating the min- and max-provers. The min-prover unitary is generated using two layers of the HEA, which acts on two qubits. The max-prover unitary is generated using two layers of the HEA, which acts on three qubits. The rotation angles for both provers around the $x$- and $y$-axes are chosen at random. The particular choices of the number of layers are made so that the cost function is minimized.

We begin the training phase with a random set of variational parameters for both parameterized unitaries. For the noiseless simulation, we evaluate the cost using a state vector simulator (noiseless simulator) \cite{Qiskit}. For the noisy simulation, we use the QASM-simulator with the noise model from IBM-Jakarta. Since the number of parameters is significantly higher than the previous algorithms, to speed up the convergence, we employ both the simultaneous perturbation stochastic approximation (SPSA) method \cite{Spall_SPSA} and the gradient-descent method to obtain a new set of parameters.

The optimization is carried out in a zig-zag fashion, explained as follows. The minimizing optimizer implements the SPSA algorithm and is allowed to run until convergence occurs. Then, the maximizing optimizer, implementing the gradient descent algorithm, runs for one iteration. We note that in general, the true fidelity between the channels $\mathcal{N}^0$ and $\mathcal{N}^1$ is not known. Thus, the stopping criterion for these algorithms is a maximum number of iterations. For our numerical experiments, we set the total number of iterations to be 6000, mostly used in the minimizing optimizer. The results of the numerical simulations are presented in Figure~\ref{fig:fid-channels-alg}.

Note that the graph presented in Figure~\ref{fig:fid-channels-alg} shows that the convergence is highly non-monotonic, unlike the convergence behavior presented in previous graphs. Each iteration consists of a decrease in the function value, followed by a single increasing iteration. This is clearly indicative of the min-max optimization nature of the algorithm. Furthermore, unlike other algorithms, the optimization value in this algorithm can overshoot the true solution, due to the min-max nature of the optimization. However, the noiseless plot indicates that, once it overshoots the solution, it oscillates with decreasing amplitude and converges. 

The noisy optimization converges as well, but it does not converge to the known value of the root fidelity of the two channels. However, the parameters found after convergence exhibit a noise resilience, as put forward in \cite{Sharma_2020}; i.e., using the parameters obtained from the noisy optimization in a noiseless simulator gives a value much closer to the true value, as indicated by the solid orange line in Figure~\ref{fig:fid-channels-alg}. 

\begin{figure}[t]
  \centering
  \includegraphics[width=\columnwidth]{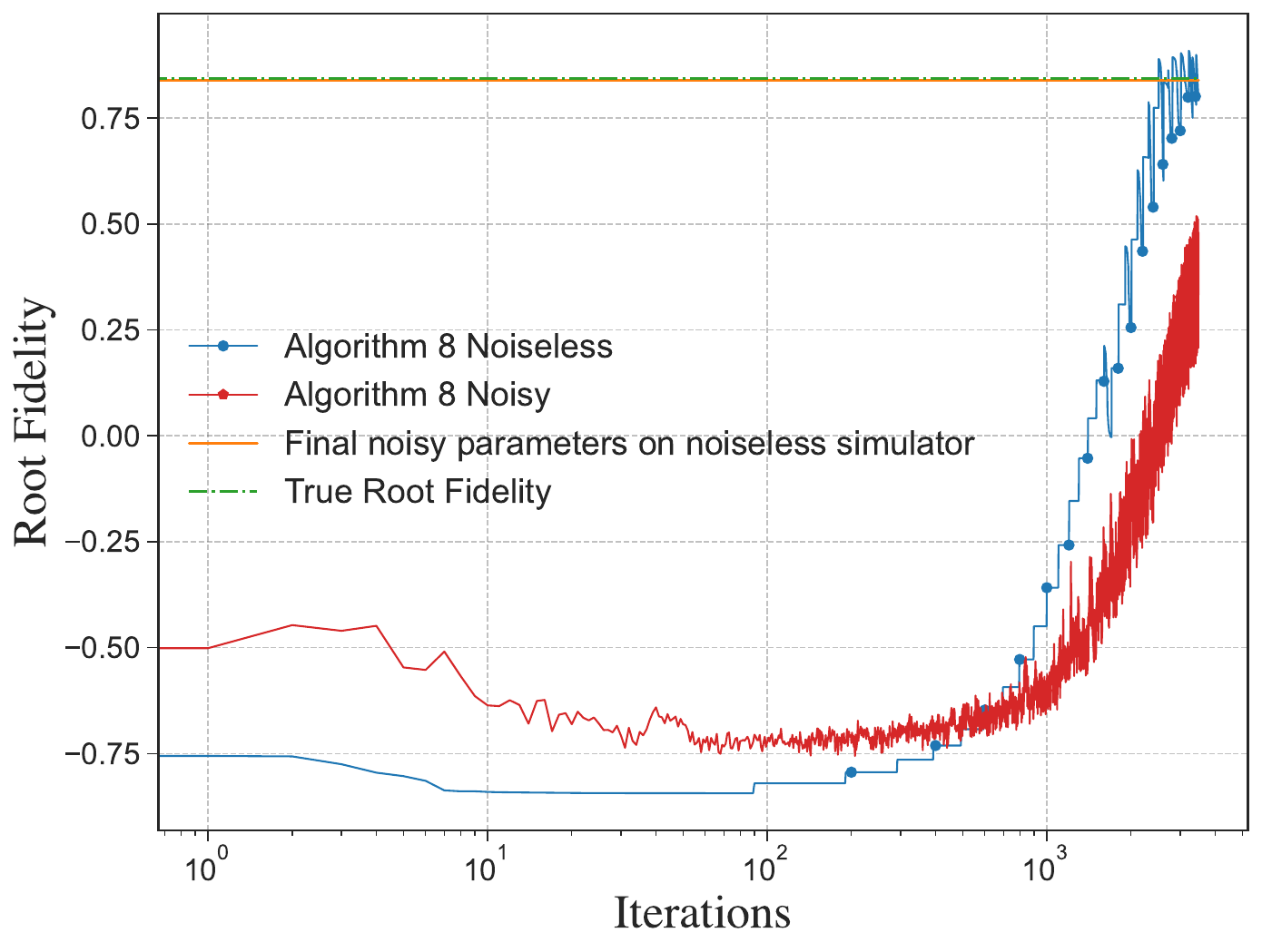}
  \caption{Estimation of the normalized fidelity between quantum channels versus the number of iterations. We implement Algorithm~\ref{alg:fid-channels} to estimate the normalized fidelity between two-qubit channels. The noiseless simulation achieves an absolute error in fidelity estimation of order $10^{-4}$. The parameters obtained from the noisy simulation, with the noise model from IBM-Jakarta, achieve an absolute error of $10^{-2}$ on a noiseless simulator.}
  \label{fig:fid-channels-alg}
\end{figure}

\subsection{Diamond distance of channels}

In this section, we discuss the performance of Algorithm~\ref{alg:diamond-distance-channels} in the noiseless and noisy scenarios. Algorithm~\ref{alg:diamond-distance-channels} requires eight qubits. Similar to the previous section, the channels in question are realized using the procedure from Section~\ref{subsec:testStatesChannels}. The algorithm utilizes a max-max optimization and thus requires two parameterized unitaries representing the two max-provers. Each unitary $U^i_{AE'\rightarrow BE}$, for $i\in\{0,1\}$, consists of one layer of the HEA with random rotations about the $x$- and $y$-axes, on two qubits, each thereby realizing the one-qubit channel $\mathcal{N}^i_{A \rightarrow B}$.

We now summarize the HEA employed in generating the two provers. The first prover, called the state-prover because its goal is to realize an optimal distinguishing state, is generated using two layers of the HEA, which acts on two qubits. The second prover, called the max-prover, is generated using two layers of the HEA, which acts on three qubits. The rotation angles for both provers around the $x$- and $y$-axes are chosen at random. The particular choices of the number of layers are made so that the cost function is minimized.  

We begin the training phase with a random set of variational parameters for both parameterized unitaries. In the noiseless simulation, we evaluate the cost using a state vector simulator (noiseless simulator). In the noisy setup, we use the QASM-simulator with the noise model from IBM-Jakarta. Similar to the previous section, we employ the SPSA optimization technique.

The optimization is carried out in two parts---the first part uses the COBYLA optimizer \cite{Powell1994, 2020SciPy-NMeth} (non-gradient based), and the second part uses the SPSA optimizer. In both stages, the optimization is carried out in a zig-zag fashion, explained as follows. The first stage allows for moving quickly into the neighbourhood of the actual solution, but then slows down dramatically. Once we approach the solution, we switch to a gradient-based method that converges to the solution more quickly. In both stages, we allow the state-prover and the max-prover to be optimized for a fixed number of iterations in a zig-zag manner. This is because, in general, the true diamond distance between channels $\mathcal{N}^0$ and $\mathcal{N}^1$ is not known. Thus the stopping criterion for these algorithms is a maximum number of iterations. For our numerical experiments, we set the total number of iterations to be 1600. The results of the numerical simulations are presented in Figure~\ref{fig:diamond-distance-channels-alg}.

Note that the noiseless graph presented in Figure~\ref{fig:diamond-distance-channels-alg} shows that the convergence is highly monotonic, unlike the fidelity of channels (see Figure~\ref{fig:fid-channels-alg}), because the optimization is a max-max one, as opposed to the min-max nature of Algorithm~\ref{alg:fid-channels}. The quick convergence, indicated by the lower number of iterations, is a consequence of this difference.

The noisy simulation converges as well, and similar to the previous section, the parameters exhibit a noise resilience. Once the COBYLA stage of the optimization is completed, the SPSA optimization is more noisy, due to the perturbative nature of the algorithm. Note that the COBYLA optimizer operates in batches of $30$, giving an impression of smoothness.

\begin{figure}[t]
  \centering
  \includegraphics[width=\columnwidth]{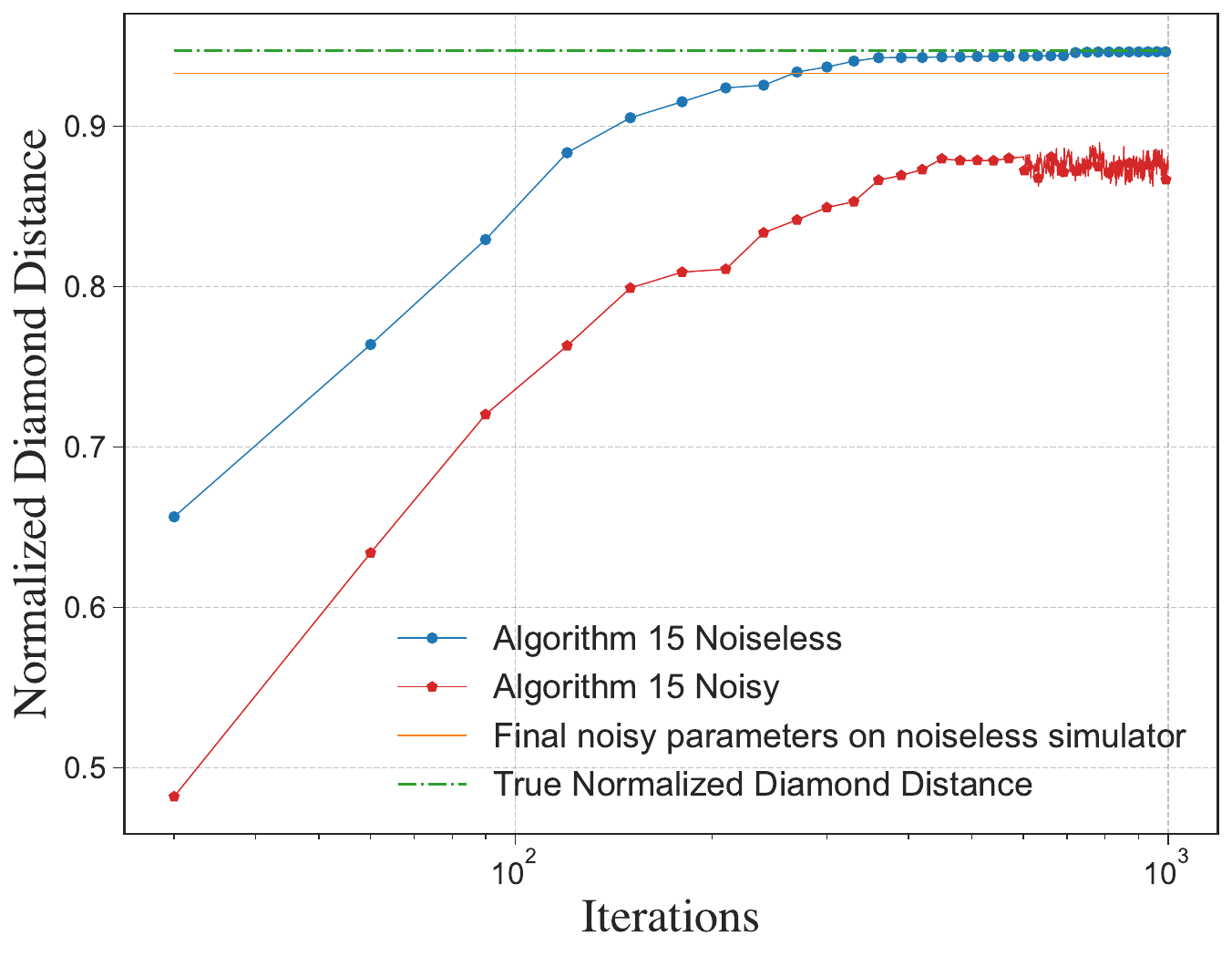}
  \caption{Estimation of the normalized diamond distance between quantum channels versus the number of iterations. We implement Algorithm~\ref{alg:diamond-distance-channels} to estimate the normalized diamond distance between one-qubit channels. Algorithm~\ref{alg:diamond-distance-channels} achieves an absolute error in diamond distance estimation of order~$10^{-4}$. The parameters obtained from the noisy simulation, with the noise model from IBM-Jakarta, achieve an absolute error of $10^{-2}$ on a noiseless simulator.}
  \label{fig:diamond-distance-channels-alg}
\end{figure}

\subsection{Multiple state discrimination}

In this section, we discuss the performance of Algorithm~\ref{alg:mult-state-disc-qubit} in the noisy and noiseless scenarios. We consider a specific scenario of distinguishing three one-qubit mixed states. Recall from Section~\ref{subsec:testStatesChannels} that the one-qubit states are generated by using two layers of the HEA on two qubits. We execute this on a qubit system, and hence we use  Algorithm~\ref{alg:mult-state-disc-qubit}. The algorithm requires twelve qubits in total and three two-qubit measurements. The measurement is realized using a parameterized unitary and ancilla qubits. By Naimark's extension theorem \cite{N40}, an arbitrary POVM can be realized using this procedure, so that there is no loss in expressiveness. The parameterized unitary required employs two layers of the HEA, which acts on three qubits. 

To speed up convergence, we use the SPSA algorithm for the optimization. As the true value of the optimal acceptance probability between the three states is assumed to be unknown, we set the stopping criterion to be a maximum number of iterations, which we take to be 250~iterations. 

In Figure~\ref{fig:mult-state-disc-alg}, we plot the results of simulating Algorithm~\ref{alg:mult-state-disc-qubit}. The dashed-dotted line represents the optimal acceptance probability of the three states, calculated using the semi-definite program corresponding to \eqref{eq:mult-state-discr}. The noiseless simulation converges to the known optimal acceptance probability. The noisy optimization converges as well, but it does not converge to the known optimal acceptance probability. However, similar to the previous sections, the parameters exhibit noise resilience, as indicated by the solid orange line in Figure~\ref{fig:mult-state-disc-alg}. 

\begin{figure}[t]
  \centering
  \includegraphics[width=\columnwidth]{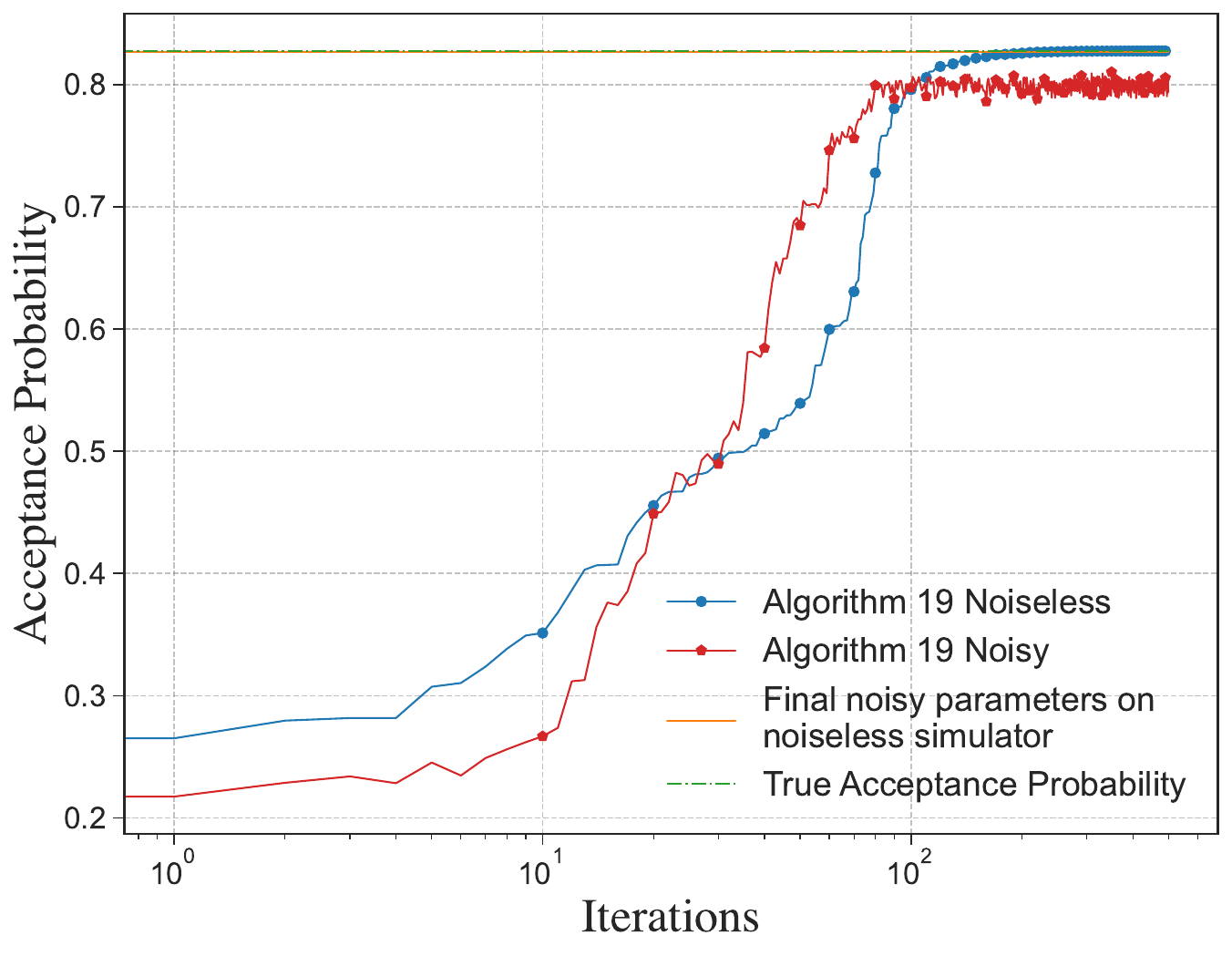}
  \caption{Estimation of the optimal acceptance probability for Algorithm~\ref{alg:mult-state-disc-qubit}. The noiseless simulation achieves an absolute error of order $10^{-4}$. The parameters obtained from the noisy simulation, with the noise model from IBM-Jakarta, achieve an absolute error of $10^{-3}$ on a noiseless simulator.} 
  \label{fig:mult-state-disc-alg}
\end{figure}

\section{Estimating distance measures as complexity classes}
\label{sec:compFidAlgs}

We now turn our attention to the intersection of our algorithms with quantum computational complexity theory. In this section, we prove that several basic quantum complexity classes can be reframed as distance and fidelity estimation problems. That is, we show that various distance and fidelity estimation problems are complete for various quantum complexity classes. Refs.~\cite{W09,VW15} provide reviews of basic concepts in quantum computational complexity theory for interested readers.

In particular, here we summarize existing results linking estimation problems to complexity classes, and furthermore, we prove that five new distance estimation algorithms that are complete for some complexity classes of interest. First, we prove that promise versions of the following estimation problems are BQP-complete:
\begin{enumerate}
    \item estimating the fidelity between two pure states,
    \item estimating the fidelity between a pure state and a mixed state,
    \item estimating the Hilbert--Schmidt distance of two arbitrary states.
\end{enumerate}
Fourth, we prove that the promise problem version of estimating the fidelity between a pure state and a channel with arbitrary input is QMA-complete. Finally, we show that the promise problem version of estimating the fidelity between a pure state and a channel with a separable input state is QMA(2)-complete. In Figure~\ref{tab:compFidAlgs}, we summarize the various quantum complexity classes and the representative fidelity and distance estimation algorithms. 

\renewcommand{\arraystretch}{1}
\begin{figure*}
\begin{tikzpicture}
[
    level 1/.style = {rect, level distance=2cm},
    level 2/.style = {rect, sibling distance = 6cm, level distance=2cm },
    every node/.style={rectangle,draw,minimum width=3cm}
]
    \node {\begin{tabular}{c} QIP-Complete \\ \rule{0pt}{5pt}\cite{KW00,W02QIP,RW05} \\ \hline \rule{0pt}{12pt}$\max\limits_{\rho, \sigma}F(\mathcal{N}(\rho), \mathcal{M}(\sigma))$ \end{tabular}}
    child {node (qip2) {\begin{tabular}{c} QIP(2)-Complete \\ \rule{0pt}{5pt}\cite{W02QIP,HMW13} \\ \hline \rule{0pt}{10pt}$\max\limits_{\sigma }F( \rho,\mathcal{N}(\sigma))$ \end{tabular}}
    child {node (qszk) {\begin{tabular}{c} QSZK-Complete \\ \rule{0pt}{5pt}\cite{W02,W09zkqa} \\ \hline \rule{0pt}{10pt}$F(\rho, \sigma)$ \end{tabular}}}
    child {node (qma) {\begin{tabular}{c} \textbf{QMA-Complete} \\ \hline \rule{0pt}{10pt}$\max\limits_\sigma F(\psi,\mathcal{N}(\sigma))$ \end{tabular}}}};
    
\node at (5.5,-2.1) (qma2) {\begin{tabular}{c} \textbf{QMA(2)-Complete} \\ \hline \rule{0pt}{10pt}$\underset{\sigma \in \operatorname{SEP}}{\max} F(\psi,\mathcal{N}(\sigma))$ \end{tabular}};
\draw (qma2) -- (qma);
\node at (0, -6.15) (bqp) {\begin{tabular}{c} \textbf{BQP-Complete} \\ \hline \rule{0pt}{10pt}$F(\psi, \phi)$ \\ \rule{0pt}{8pt}$F(\psi,\rho)$ \\ \rule{0pt}{8pt} $\left\Vert \rho - \sigma \right\Vert_2$ \end{tabular}};
\draw (bqp) -- (qszk);
\draw (bqp) -- (qma);
\end{tikzpicture}
\caption{List of distance estimation problems and the corresponding quantum complexity class. Entries in bold are the results of our paper. In this diagram, $\psi$ and $\phi$ are pure states, $\rho$ and $\sigma$ are mixed states, and $\mathcal{N}$ and $\mathcal{M}$ are channels. Note that $\rho$ and $\sigma$ may be of different dimensions, depending on the context. The cells are organized such that if a cell is connected to a cell above it, the complexity class for the lower cell is a subset of that for the the higher cell. For example, QMA is a subset of both QIP(2) and QMA(2).}
\label{tab:compFidAlgs}
\end{figure*}
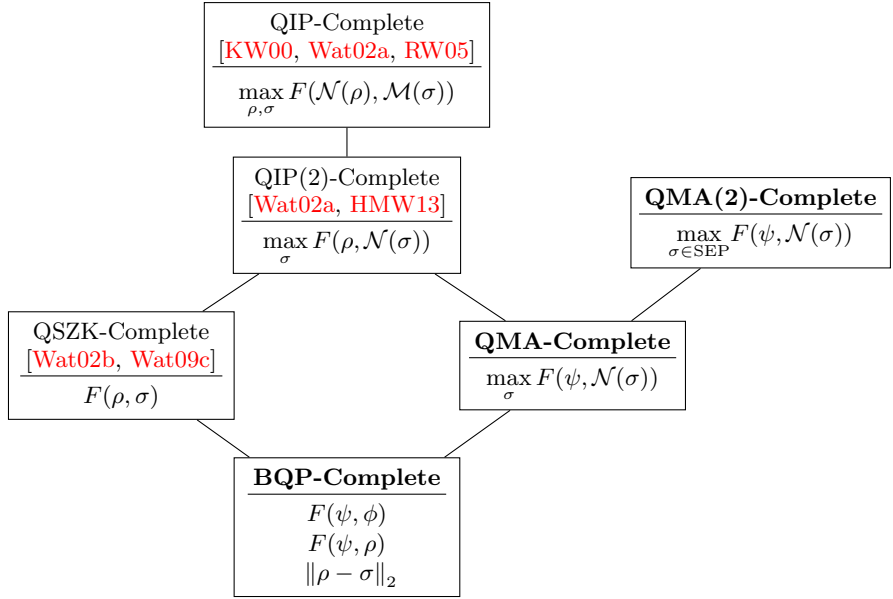

\subsection{BQP-complete problems}

First, we prove that promise versions of the problems of evaluating the fidelity between two pure states, evaluating the fidelity between a mixed state and a pure state, and evaluating the Hilbert--Schmidt distance of two arbitrary states are BQP-complete. Intuitively, this means that these problems can be solved efficiently on a quantum computer, and these problems furthermore capture the full power of polynomial-time quantum computation (in the sense that the ability to solve these problems implies the ability to solve an arbitrary BQP problem).

Here, we reproduce the definition of BQP for convenience. Note that our definition given here differs somewhat from the definition in \cite{W09}, in that we restrict the circuits considered to be unitary circuits; it is known that the two different definitions are equivalent, in the sense that the computational power of BQP does not change. Let $A = (A_{\text{yes}}, A_{\text{no}})$ be a promise problem and let $a, b : \mathbb{N} \rightarrow [0, 1]$ and $p$ be polynomial functions. Then $A \in \operatorname{BQP}(a, b)$ if there exists a polynomial-time generated family $Q  = \{Q_n : n \in \mathbb{N}\}$ of unitary circuits, where each circuit $Q_n$ takes $n + p(n)$ input qubits and produces one decision qubit $D$ and $n + p(n) - 1$ garbage qubits $G$, with the following properties (in what follows, we abbreviate each $Q_n$ as $Q_{SA \to DG}$, thereby suppressing the dependence on the input length $n = |x|$ and explicitly indicating the systems involved at the input and output of the unitary): 

\begin{enumerate}
    \item Completeness: For all $x \in A_{\text{yes}}$,  
        \begin{multline}
            \label{eq:BQP-completeness}
             \Pr[Q \text{ accepts }  x]  
             \\ = \left \| (\bra{1}_D \otimes I_G) Q_{SA \to DG}(\ket{x}_S \otimes \ket{0}_A)\right\|_2^2\\
             \geq a(\left\vert x \right\vert).
        \end{multline}
    
    \item Soundness: For all $x \in A_{\text{no}}$,  
        \begin{equation}
            \label{eq:BQP-soundness}
            \Pr[Q \text{ accepts } x ] \leq b(\left\vert x \right\vert),
        \end{equation}
\end{enumerate}
where, as clarified by the mathematical expression in \eqref{eq:BQP-completeness}, acceptance is defined as obtaining the outcome one upon measuring the decision qubit register $D$. $\operatorname{BQP}$ is then defined as $\operatorname{BQP}(2/3, 1/3)$.

\subsubsection{Fidelity between two pure states}

\label{sec:fid-pure-pure}

We now prove that the promise version of the problem of estimating the fidelity between two pure states is BQP-complete. In this problem and all that follows, the parameter $x$ is the description of the circuits involved, and the length $|x|$ is the number of bits needed to describe these circuits.

\begin{problem}
[$\left(  \alpha,\beta\right)  $-Fidelity-Pure-Pure]Let $\alpha$ and $\beta$
be such that $0\leq\alpha<\beta\leq1$. Given are descriptions of circuits
$U_{S}^{\psi}$ and $U_{S}^{\phi}$ that prepare the pure states $\psi_{S}$ and $\phi_S$, respectively. Decide which of
the following holds.%
\begin{align}
\text{Yes}  &  \text{:}\qquad F(\psi_{S},\phi_{S})\geq1-\alpha,\\
\text{No}  &  \text{:}\qquad F(\psi_{S},\phi_{S})\leq1-\beta.
\end{align}
\end{problem}

\begin{theorem}
\label{thm:BQP-comp-fid-pure} The promise problem Fidelity-Pure-Pure is BQP-complete.

\begin{enumerate}
\item $\left(  \alpha,\beta\right)  $-Fidelity-Pure-Pure is in BQP for all
$\alpha<\beta$. (It is implicit that the gap between $\alpha$ and $\beta$ is
larger than an inverse polynomial in the input length.)

\item $\left(  \varepsilon,1-\varepsilon\right)  $-Fidelity-Pure-Pure is
BQP-hard, even when $\varepsilon$ decays exponentially in the input length.
\end{enumerate}

\noindent Thus, $\left(  \alpha,\beta\right)  $-Fidelity-Pure-Pure is BQP-complete for
all $\left(  \alpha,\beta\right)  $ such that $0<\alpha<\beta<1$.
\end{theorem}

\begin{proof}
The containment of $\left(\alpha,\beta\right)  $-Fidelity-Pure-Pure in BQP\ is a direct consequence of Algorithm~\ref{alg:fid-pure-states-1}.

So we focus on proving the hardness result. Consider an arbitrary problem $L$ in BQP. Thus, there exists a family $Q$ of circuits such that \eqref{eq:BQP-completeness} and \eqref{eq:BQP-soundness} hold. Given an instance $x$, the acceptance probability of the BQP algorithm is 
\begin{align}
    \label{eq:BQP-accept-prob}
    p_{\text{acc}} &= \left\Vert (\bra{1}_D \otimes I_G) Q \ket{x}_S \ket{0}_A \right\Vert^2_2 \nonumber \\
    &= \bra{x}_S \bra{0}_A Q^\dagger (\outerproj{1}_D \otimes I_G) Q \ket{x}_S \ket{0}_A.
\end{align}
To prove the hardness result (i.e., to see that this is an instance of Fidelity-Pure-Pure), we use the BQP-subroutine theorem \cite{Bennett_1997}. Intuitively, we act with the circuit $Q_{SA \to DG}$ on the input $\ket{x}_S \ket{0}_A$, apply a $\operatorname{CNOT}$ gate from the decision qubit to an ancillary qubit initialized to $\ket{0}_C$, apply the inverse unitary $Q^\dagger$, measure the output qubits, and accept if we get the state $\ket{x}_S \ket{0}_A \ket{1}_C$. The acceptance probability of this procedure is equal to
\begin{equation}
    \label{eq:BQP-hard-accep-initial}
    \tilde{p}_{\text{acc}} = \left\vert (\bra{x}_S \bra{0}_A \bra{1}_C) Q^\dagger \operatorname{CNOT}_{DC} Q (\ket{x}_S \ket{0}_A \ket{0}_C)\right\vert^2.
\end{equation}
Expanding $\operatorname{CNOT}_{DC}$ as 
\begin{equation}
    \operatorname{CNOT}_{DC} \coloneqq \outerproj{0}_D \otimes I_C + \outerproj{1}_D \otimes X_C, 
\end{equation}
where $X_C$ denotes the Pauli-$X$ operator, it follows that
\begin{equation}
    \label{eq:BQP-hard-accep}
    \tilde{p}_{\text{acc}} = \left\vert \bra{x}_S \bra{0}_A Q^\dagger (\outerprod{1}{1}_D \otimes I_G) Q \ket{x}_S \ket{0}_A \right \vert^2.
\end{equation}
Comparing this expression to~\eqref{eq:BQP-accept-prob}, we see that the modified circuit has an acceptance probability equal to the square of the acceptance probability of the original BQP problem. Thus, by repeating the modified algorithm sufficiently many times, we can estimate the acceptance probability $\tilde{p}_{\text{acc}}$, and by taking a square root, we can output an estimate of the acceptance probability $p_{\text{acc}}$ of the original problem. In Appendix~\ref{sec:samples-Fid-Pure-Pure}, we derive the number of samples required to estimate $p_{\text{acc}}$ with accuracy $\varepsilon$ and error probability $\delta$.

The last step to be shown is that the modified acceptance probability $\tilde{p}_{\text{acc}}$ can be rewritten as the fidelity between two pure states. From \eqref{eq:BQP-hard-accep-initial}, we see that 
\begin{align}
    \tilde{p}_{\text{acc}} &= \left\vert (\bra{x}_S \bra{0}_A \bra{1}_C) Q^\dagger \operatorname{CNOT}_{DC} Q (\ket{x}_S \ket{0}_A \ket{0}_C)\right\vert^2 \nonumber \\
    &= F(|\psi\rangle\!\langle\psi|, |\phi\rangle\!\langle\phi|),
\end{align}
where 
\begin{align}
    |\psi\rangle & \coloneqq \ket{x}_S \ket{0}_A \ket{1}_C ,\\
    |\phi\rangle & \coloneqq Q^\dagger \operatorname{CNOT}_{DC} Q \ket{x}_S \ket{0}_A \ket{0}_C.
\end{align}
Thus, an arbitrary instance of a BQP problem can be rewritten as an instance of the fidelity between two pure states, proving that Fidelity-Pure-Pure is indeed a BQP-hard problem. 
\end{proof}

\subsubsection{Fidelity between a pure state and a mixed state}

\label{sec:compl-fid-pure-mixed}

\begin{problem}
[$\left(  \alpha,\beta\right)  $-Fidelity-Pure-Mixed]Let $\alpha$ and $\beta$
be such that $0\leq\alpha<\beta\leq1$. Given are descriptions of circuits
$U_{RS}^{\rho}$ and $U_{S}^{\psi}$ that prepare a purification of a mixed
state $\rho_{S}$ and a pure state $\psi_{S}$, respectively. Decide which of
the following holds.%
\begin{align}
\text{Yes}  &  \text{:}\qquad F(\rho_{S},\psi_{S})\geq1-\alpha,\\
\text{No}  &  \text{:}\qquad F(\rho_{S},\psi_{S})\leq1-\beta.
\end{align}
\end{problem}

\begin{theorem}
\label{thm:BQP-comp-fid} The promise problem Fidelity-Pure-Mixed is BQP-complete.

\begin{enumerate}
\item $\left(  \alpha,\beta\right)  $-Fidelity-Pure-Mixed is in BQP for all
$\alpha<\beta$. (It is implicit that the gap between $\alpha$ and $\beta$ is
larger than an inverse polynomial in the input length.)

\item $\left(  \varepsilon,1-\varepsilon\right)  $-Fidelity-Pure-Mixed is
BQP-hard, even when $\varepsilon$ decays exponentially in the input length.
\end{enumerate}

\noindent Thus, $\left(  \alpha,\beta\right)  $-Fidelity-Pure-Mixed is BQP-complete for
all $\left(  \alpha,\beta\right)  $ such that $0<\alpha<\beta<1$.
\end{theorem}

\begin{proof}
The containment of $\left(  \alpha,\beta\right)  $-Fidelity-Pure-Mixed in
BQP\ is a direct consequence of Algorithm~\ref{alg:fid-pure-mixed}.

So we
focus on proving the hardness result. Let $L$ be an arbitrary promise problem
in BQP, and let $\left\{  \phi^{x}_{DG}\right\}  _{x}$ be a family of
efficiently preparable pure states witnessing membership of $L$ in BQP. System
$D$ is a decision qubit indicating acceptance or rejection of $x$, and system
$G$ is a garbage system that purifies $D$. Suppose that the family $\left\{
\phi^{x}_{DG}\right\}  _{x}$ has completeness $1-\delta$ and soundness
$\delta$. If $x$ is a yes-instance of $L$, then, by the definition of BQP, it
follows that $\left\Vert \langle1|_{D}|\phi^{x}\rangle_{DG}\right\Vert
_{2}^{2}\geq1-\delta$. On the other hand, if $x$ is a no-instance of $L$, then
$\left\Vert \langle1|_{D}|\phi^{x}\rangle_{DG}\right\Vert _{2}^{2}\leq\delta$.
Since%
\begin{align}
\left\Vert \langle1|_{D}|\phi^{x}\rangle_{DG}\right\Vert _{2}^{2}  &
=\langle1|_{D}\operatorname{Tr}_{G}[\phi_{DG}^{x}]|1\rangle_{D}\\
&  =F(|1\rangle\!\langle1|_{D},\operatorname{Tr}_{G}[\phi_{DG}^{x}]),
\end{align}
it follows directly that this is an instance of $\left(  1-\delta
,\delta\right)  $-Fidelity-Pure-Mixed, given that the reduced state
$\operatorname{Tr}_{G}[\phi_{DG}^{x}]$ can be prepared efficiently, as well as
the state $|1\rangle\!\langle 1|_{D}$. The desired hardness result then follows
because BQP$(c,s)\subseteq\ $BQP$(\delta,1-\delta)$, for every $\delta$
exponentially small in the input length.
\end{proof}

\subsubsection{Hilbert--Schmidt distance}

\label{sec:HS-distance}

The next result we prove is that the promise version of the problem of estimating the normalized Hilbert--Schmidt distance of two arbitrary states is BQP-complete. Recall that the normalized Hilbert--Schmidt distance of two states $\rho$ and $\sigma$ is given by
\begin{align}
    \frac{1}{\sqrt{2}} \left\Vert \rho - \sigma \right\Vert_2 
    & \coloneqq \frac{1}{\sqrt{2}} \sqrt{\operatorname{Tr}[(\rho - \sigma)^2]} \nonumber \\
    &= \frac{1}{\sqrt{2}} \sqrt{\operatorname{Tr}[\rho^2] + \operatorname{Tr}[\sigma^2] - 2\operatorname{Tr}[\rho\sigma]}.
\end{align}

If $\rho = \sigma$, then the Hilbert--Schmidt distance is equal to zero. The prefactor of $2^{-1/2}$ is the correct normalization by the following argument. Since $\operatorname{Tr}[\rho\sigma] \geq 0$, the maximum value of the normalized distance satisfies
\begin{align}
\label{eq:HSD-Expansion}
    \frac{1}{\sqrt{2}} &\sqrt{\operatorname{Tr}[\rho^2] + \operatorname{Tr}[\sigma^2] - 2\operatorname{Tr}[\rho\sigma]} \nonumber \\
    &\leq \frac{1}{\sqrt{2}} \sqrt{\operatorname{Tr}[\rho^2] + \operatorname{Tr}[\sigma^2]} \nonumber \\
    &\leq 1,
\end{align}
where the second inequality follows because the purity of an arbitrary state $\rho$ satisfies $\operatorname{Tr}[\rho^2] \leq 1$. The upper bound is achieved by pure orthogonal states.

\begin{problem}
[$\left(  \alpha,\beta\right)  $-Hilbert--Schmidt-Distance]
\label{prob:HS-dist}
Let $\alpha$ and $\beta$
be such that $0\leq\alpha<\beta\leq1$. Given are descriptions of circuits $U_{RS}^{\rho}$ and $U_{RS}^{\sigma}$ that prepare a purification of a mixed states $\rho_{S}$ and $\sigma_{S}$, respectively. Decide which of the following holds.%
\begin{align}
\text{Yes}  &  \text{:}\qquad \NHS{\rho_S - \sigma_S} \geq 1-\alpha,
\label{eq:YES-HS-dist}
\\
\text{No}  &  \text{:}\qquad \NHS{\rho_S - \sigma_S} \leq 1-\beta.
\label{eq:NO-HS-dist}
\end{align}
\end{problem}

\begin{theorem}
\label{thm:BQP-comp-HSD} The promise problem Hilbert--Schmidt-Distance is BQP-complete.

\begin{enumerate}
\item $\left(  \alpha,\beta\right)  $-Hilbert--Schmidt-Distance is in BQP for all $\alpha<\beta$. (It is implicit that the gap between $\alpha$ and $\beta$ is larger than an inverse polynomial in the input length.)

\item $\left(  \varepsilon,1-\varepsilon\right)  $-Hilbert--Schmidt-Distance is BQP-hard, even when $\varepsilon$ decays exponentially in the input length.
\end{enumerate}

\noindent Thus, $\left(  \alpha,\beta\right)  $-Hilbert--Schmidt-Distance is BQP-complete for all $\left(  \alpha,\beta\right)  $ such that $0<\alpha<\beta<1$.
\end{theorem}

\begin{proof}
To show that the problem is BQP-complete, we need to demonstrate two facts: first, that the problem is in BQP, and second, that it is BQP-hard. Let us begin by proving that the problem is in BQP. This part of the proof is well known and understood by now, and it has been used in many quantum algorithms. We discuss it here for completeness. The intuitive idea is to estimate each term in \eqref{eq:HSD-Expansion} separately using a swap test. A term of the form $\operatorname{Tr}[\rho\sigma]$, where $\rho$ and $ \sigma$ are $n$-qubit states, can be estimated by repeatedly performing a swap test sufficiently many times to get a good estimate. Since there are only three terms to estimate, it follows that the problem is in BQP. 

Next, we show that any problem in the BQP class can be reduced to this problem. A simpler way to show this is to map a known BQP-complete problem to our problem. We now show that the BQP-complete Fidelity-Pure-Pure problem can be reduced to this problem. A special case of the Hilbert--Schmidt-Distance problem is when both inputs are pure states. In this scenario, the normalized Hilbert--Schmidt distance is given by
\begin{equation}
\begin{aligned}
    \NHS{\outerprod{\psi}{\psi} - \outerprod{\phi}{\phi}} & = \sqrt{1 - \left\vert \langle \psi \vert \phi \rangle \right\vert^2} \\
    & = \sqrt{1 - F(\psi,\phi)}.
\end{aligned}
\label{eq:HS-reduction-pure}
\end{equation}
Then the YES instance condition in \eqref{eq:YES-HS-dist} and \eqref{eq:HS-reduction-pure} imply that
$
    F(\psi,\phi) \leq \alpha (2-\alpha),
$
in the case of a YES instance of Hilbert--Schmidt-Distance,
and the NO instance condition in \eqref{eq:NO-HS-dist} and 
\eqref{eq:HS-reduction-pure} imply that
$
    F(\psi,\phi) \geq \beta (2-\beta),
$
in the case of a NO instance of Hilbert--Schmidt-Distance.
Since the function $x \to x(2-x)$ is a bijection on the unit interval $[0,1]$, it follows that the ability to decide Hilbert--Schmidt-Distance for pure states implies the ability to decide Fidelity-Pure-Pure, which is a BQP-complete problem by Theorem~\ref{thm:BQP-comp-fid-pure}.
We thus conclude that Hilbert--Schmidt-Distance is BQP-Hard. This, along with the fact that the problem is in the BQP class, concludes the proof.
\end{proof}

\begin{remark}
    The normalized Schatten-$p$ distance between two states $\rho$ and $\sigma$ is defined as
    \begin{equation}
        \frac{1}{2^{1/p}} \left\| \rho - \sigma \right\|_{p} \coloneqq \frac{1}{2^{1/p}} (\operatorname{Tr}[|\rho - \sigma|^p ])^{1/p}. 
        \label{eq:schatten-p-dist-normalized}
    \end{equation}
    We can formulate promise problems from these quantities, generalizing Hilbert--Schmidt-Distance in Problem~\ref{prob:HS-dist}. Plugging pure states $\psi$ and $\phi$ into \eqref{eq:schatten-p-dist-normalized} and exploiting the fact that the eigenvalues of $\psi - \phi$ are equal to $|\sin\theta|$ and $-|\sin\theta|$ \cite[Proof of Theorem~9.3.1]{Wbook17}, where $\theta$  satisfies $F(\psi,\phi) = \cos^2 \theta$, it follows that
    \begin{equation}
        \frac{1}{2^{1/p}} \left\| \psi - \phi \right\|_{p} = \sqrt{1 - F(\psi,\phi)}
    \end{equation}
    for all $p \geq 1$. Thus, by the same reasoning given in the second part of the proof of Theorem~\ref{thm:BQP-comp-HSD}, we conclude that these promise problems are all BQP-hard.
    
    Now consider that estimating the Schatten-$2k$ distance between two states, where $k\in \mathbb{N}$, is in BQP. For constant~$k$, each term in the expansion of $\left\| \rho - \sigma \right\|_{2k}^{2k} = \operatorname{Tr}[(\rho - \sigma)^{2k} ] $ can be estimated in polynomial time \cite{EAOHHK02}, in fact in constant quantum depth \cite{QKW22} after the circuits that prepare multiples copies of $\rho$ and $\sigma$ are executed. Thus, combining with the above, we conclude that, for each constant $k\in \mathbb{N}$, the promise version of the problem of estimating $\frac{1}{2^{1/(2k)}} \left\| \rho - \sigma \right\|_{2k}$ is a BQP-complete problem.
    
\end{remark}

\subsection{Fidelity between a pure state and a channel (QMA-complete)}

\label{sec:fid-pure-channel}

Next, we provide a proof that the promise version of the problem of evaluating the fidelity between a channel and a pure state is QMA-complete. The definition of QMA can be found in \cite{W09}, reproduced here for convenience (but again slightly different in that we consider unitary circuits). Let $A = (A_{\text{yes}}, A_{\text{no}})$ be a promise problem, let $p, q$ be polynomially-bounded functions, and let $a, b : \mathbb{N} \rightarrow [0, 1]$ be functions. Then $A \in \operatorname{QMA}_p(a, b)$ if  there exists a polynomial-time generated family of unitary circuits $Q  = \{Q_n : n \in \mathbb{N}\}$, where each circuit $Q_n$ takes $n + p(n) + q(n)$ input qubits and produces one decision qubit $D$ and $n + p(n) + q(n) - 1$ garbage qubits $G$, with the following properties (as before, we abbreviate each $Q_n$ as $Q_{SAP \to DG}$, thereby suppressing the dependence on the input length $n = |x|$ and explicitly indicating the systems involved at the input and output of the unitary): 

\begin{enumerate}
    \item Completeness: For all $x \in A_{\text{yes}}$, there exists a $q(\vert x\vert )$-qubit quantum state $\sigma$ such that 
    \begin{align}
         \Pr[Q \text{ accepts }(x, \sigma)]  & = 
         \langle 1 \vert_D \operatorname{Tr}_G [\omega_{DG}] \vert 1 \rangle_D \label{eq:QMA_completeness} \\
        & \geq a(\vert x\vert ),
    \end{align}
    where
    \begin{multline}
    \omega_{DG} \coloneqq \\
    Q_{SAP \to DG} (\outerproj{x}_S \otimes \outerproj{0}_A \otimes \sigma_P) (Q_{SAP \to DG})^\dagger.
    \end{multline}

    \item Soundness: For all $x \in A_{\text{no}}$, and every $q(\vert x\vert )$-qubit quantum state $\sigma$, the following inequality holds: 
    \begin{equation}
        \label{eq:QMA_soundness}
        \Pr[Q \text{ accepts }(x, \sigma)] \leq b(\vert x\vert ).
    \end{equation}

\end{enumerate}

Then $\operatorname{QMA} = \bigcup_p \operatorname{QMA}_p(2/3, 1/3)$, where the union is over every polynomially bounded function $p$.

\begin{problem}
[$\left(\alpha,\beta\right)  $-Fidelity-Channel-Pure]Let $\alpha$ and $\beta$
be such that $0\leq\alpha<\beta\leq1$. Given are descriptions of circuits
$U_{SR\to BE}^{\mathcal{N}}$ and $U_{B}^{\psi}$ that prepare a unitary dilation of a channel
\begin{equation}
\mathcal{N}_{S \to B} (\cdot) \coloneqq  \operatorname{Tr}_E[U^\mathcal{N}_{SR\to BE}((\cdot)_S \otimes \vert 0 \rangle\!\langle 0 \vert_R) (U^\mathcal{N}_{SR\to BE})^\dagger]    
\end{equation}
 and a pure state $\psi_{B} \coloneqq U_{B}^{\psi} \vert 0 \rangle\!\langle 0 \vert_B (U_{B}^{\psi})^\dag$, respectively. Decide which of
the following holds:
\begin{align}
\text{Yes}  &  \text{:}\qquad \max\limits_{\rho_S}\  F(\mathcal{N}_{S\to B }(\rho_{S}),\psi_{B})\geq1-\alpha,
\label{eq:fid-ch-pure-state}
\\
\text{No}  &  \text{:}\qquad \max\limits_{\rho_S}\  F(\mathcal{N}_{S \to B}(\rho_{S}),\psi_{B})\leq1-\beta,
\end{align}
where the maximization is over every input density operator $\rho_S$.
\end{problem}

\begin{theorem}
\label{thm:QMA-comp-fid} The promise problem Fidelity-Channel-Pure is QMA-complete.

\begin{enumerate}
\item $\left(  \alpha,\beta\right)  $-Fidelity-Channel-Pure is in QMA for all
$\alpha<\beta$. (It is implicit that the gap between $\alpha$ and $\beta$ is
larger than an inverse polynomial in the input length.)

\item $\left(  \varepsilon,1-\varepsilon\right)  $-Fidelity-Channel-Pure is QMA-hard, even when $\varepsilon$ decays exponentially in the input length.
\end{enumerate}

\noindent Thus,  $\left(  \alpha,\beta\right)  $-Fidelity-Channel-Pure is QMA-complete for
all $\left(  \alpha,\beta\right)  $ such that $0<\alpha<\beta<1$.
\end{theorem}

\begin{proof}
To show that the problem is QMA-complete, we need to demonstrate two facts: first, that the problem is in QMA, and second, that it is QMA-hard.

Let us begin by proving that the problem is in QMA. The intuitive idea is that the prover sends an optimal state $\rho_S$ to the verifier, who then performs the channel $\mathcal{N}_{S \to B}$ on it, followed by the unitary $(U_{B}^{\psi})^\dag$. The verifier then performs a computational basis measurement on all registers of system $B$ and accepts if and only if the all-zeros measurement outcome occurs. Indeed, the acceptance probability of this scheme is precisely equal to the fidelity in \eqref{eq:fid-ch-pure-state}:
\begin{align}
    & \langle 0 \vert_B (U_B^{\psi})^\dagger \mathcal{N}_{S \to B}(\rho_S) U_B^\psi \vert 0 \rangle_B \nonumber
    \\&= \langle \psi \vert_S \mathcal{N}_{S \to B}(\rho_S) \vert \psi \rangle_S \nonumber \\
    &= F(\mathcal{N}_{S \to B}(\rho_S), \psi_B). 
\end{align}
To bring the original expression more closely to the form given in \eqref{eq:QMA_completeness}, observe that
\begin{multline}
        \langle 0 \vert_B (U_B^{\psi})^\dagger \mathcal{N}_{S \to B}(\rho_S) U_B^\psi \vert 0 \rangle_B = 
         \langle 1 \vert_B X_B (U_B^{\psi})^\dagger \times \\ \operatorname{Tr}_E[U^\mathcal{N}_{SR \to BE}(\outerproj{0}_R \otimes \rho_S) (U^\mathcal{N}_{SR \to BE})^\dagger] U_B^\psi X_B \vert 1 \rangle_B,
\end{multline}
where $X_B$ is understood to be the tensor power Pauli $X$ operator acting on all qubits of the $B$ register.
To bring the final expression exactly into the form in \eqref{eq:QMA_completeness}, we need a single decision qubit that we measure. We can use a multi-controlled Toffoli gate from the $B$ register to a single qubit decision qubit. Thus, if we identify $x$ with $0$, $\sigma$ with $\rho_S$, and $Q_n$ with $(X_B \otimes \mathbb{I}_E) \circ ((U_B^\psi)^\dag~\otimes~\mathbb{I}_E) \circ U^\mathcal{N}_{SR \to BE}$, it follows that the problem belongs to the QMA class.

Next, we show that any problem in the QMA class can be polynomially reduced to this problem. Let $P$ be an arbitrary problem in the QMA class. This implies that \eqref{eq:QMA_completeness} and \eqref{eq:QMA_soundness} must hold. This problem can then be thought of as a fidelity problem with a channel $\mathcal{M}_x$ defined as
\begin{equation}
    \mathcal{M}^x_{SAP \to D}(\cdot) \coloneqq \operatorname{Tr}_G[Q (\outerproj{x}_S \otimes \outerproj{0}_A \otimes (\cdot)) Q^\dag].
\end{equation}
Furthermore, we identify the state $\psi$ from the fidelity problem with  $\vert 1 \rangle\!\langle 1 \vert_D$, and then we find that
\begin{align}
    & \langle 1_D \vert \operatorname{Tr}_G [Q (\outerproj{x}_S \otimes \outerproj{0}_A \otimes \sigma_P) Q^\dagger)] \vert 1 \rangle_D \nonumber \\
    &= \bra{1}_G\mathcal{M}^x_{SAP \to D}(\sigma) \ket{1}_G\\
    &= F(\mathcal{M}^x(\sigma),  \vert 1 \rangle\!\langle 1 \vert).
\end{align}
It follows directly that this is an instance of $\left(  1-a(|x|)
,1-b(|x|)\right)  $-Fidelity-Channel-Pure, given that the channel
$\mathcal{M}_x$ can be prepared efficiently, as well as
the state $|1\rangle\!\langle 1|$. The desired hardness result then follows
because QMA$( 1-a(|x|)
,1-b(|x|))\subseteq\ $QMA$(\delta,1-\delta)$, for every $\delta$
exponentially small in the input length.
\end{proof}

\subsection{Fidelity between a pure state and a channel with separable input (QMA(2)-complete)}

\label{sec:fid-pure-ch-sep}

Lastly, we provide a proof for the result that the promise version of the problem of evaluating the fidelity between a pure state and a channel with a separable state as input is QMA(2)-complete. A state is separable if and only if is it not entangled. A separable state $\sigma_{SR}$ can be expanded as follows:
\begin{equation}
\sigma_{SR} = \sum\limits_k p(k) \outerproj{\varphi^k}_S \otimes \outerproj{\phi^k}_R,
\end{equation}
where $\{p(k)\}_k$ is a probability distribution and $\{\outerproj{\varphi^k}_S\}_k$ and $\{\outerproj{\phi^k}_R\}_k$ are sets of pure states.
$\operatorname{SEP}$ is defined as the set of all separable states. QMA(2) is a generalization of QMA with proofs that consist of two systems guaranteed to be unentangled \cite{KMY01,HM10}.

We reproduce the definition of QMA(2) for convenience. Let $A = (A_{\text{yes}}, A_{\text{no}})$ be a promise problem, let $p,q,r$ be polynomially-bounded functions, and let $a, b : \mathbb{N} \rightarrow [0, 1]$ be functions. Then $A \in \operatorname{QMA}(2)_{p, q}(a, b)$ if there exists a polynomial-time generated family of circuits $Q= \{Q_n : n \in \mathbb{N}\}$, where each circuit $Q_n$ takes $n + p(n) + q(n) + r(n)$ input qubits and produces one decision qubit $D$ and $n + p(n) + q(n) + r(n) - 1$ garbage qubits $G$, with the following properties (again, we employ the notation $ Q_{SAP_1P_2 \to DG}$ in what follows): 

\begin{enumerate}
    \item Completeness: For all $x \in A_{\text{yes}}$, there exists a $q(\vert x\vert )$-qubit  state $\rho$ and an $r(\vert x\vert )$-qubit  state $\sigma$ such that 
    \begin{align}
        \label{eq:QMA(2)_completeness}
         \Pr[Q \text{ accepts }(x, \rho, \sigma)] \nonumber 
        &= \langle 1 \vert_D \operatorname{Tr}_G [\omega_{DG}] \vert 1 \rangle_D \nonumber \\
        &\geq a(\vert x\vert ),
    \end{align}
    where
    \begin{multline}
        \omega_{DG} \coloneqq
        Q_{SAP_1P_2 \to DG} (\outerproj{x}_S \otimes \outerproj{0}_A 
        \otimes \\ \rho_{P_1} \otimes \sigma_{P_2}) (Q_{SAP_1P_2 \to DG})^\dagger.
    \end{multline}
    
    \item Soundness: For all $x \in A_{\text{no}}$, and every $q(\vert x\vert )$-qubit  state $\rho$ and $r(\vert x\vert )$-qubit state $\sigma$, the following inequality holds:
    \begin{equation}
        \label{eq:QMA(2)_soundness}
        \Pr[Q \text{ accepts }(x, \rho, \sigma)] \leq b(\vert x\vert ).
    \end{equation}
\end{enumerate}

Then $\operatorname{QMA(2)} = \bigcup_p \operatorname{QMA(2)}_{p, q}(2/3, 1/3)$, where the union is over all polynomially bounded functions $p$ and~$q$.

\begin{problem}
[$\left(\alpha,\beta\right)  $-Fidelity-Pure-Channel-Sep-Inp]Let $\alpha$ and $\beta$
be such that $0\leq\alpha<\beta\leq1$. Given are descriptions of circuits
$U_{SRE \to AE'}^{\mathcal{N}}$ and $U_{A}^{\psi}$ that prepare a unitary dilation of a channel
\begin{multline}
\mathcal{N}_{SR \to A}(\cdot) \coloneqq \\ \operatorname{Tr}_{E'}[U^\mathcal{N}_{SRE \to AE'}((\cdot)_{SR} \otimes \outerproj{0}_E) (U^\mathcal{N}_{SRE \to AE'})^\dagger],    
\end{multline}
and a pure state $\psi_{A}$, respectively. Decide which of
the following holds:
\begin{align}
\text{Yes}  &  \text{:}\qquad \underset{\sigma_{SR} \in \operatorname{SEP}}{\max} F(\mathcal{N}_{SR \to A}(\sigma_{SR}), \psi_{A}) \geq1-\alpha,
\label{eq:fid-ch-sep-state}
\\
\text{No}  &  \text{:}\qquad \underset{\sigma_{SR} \in \operatorname{SEP}}{\max} F(\mathcal{N}_{SR \to A}(\sigma_{SR}), \psi_{A}) \leq1-\beta.
\end{align}
\end{problem}

\begin{theorem}
\label{thm:QMA(2)-comp-fid} The promise problem Fidelity-Pure-Channel-Sep-Inp is QMA$(2)$-complete.

\begin{enumerate}
\item $\left(  \alpha,\beta\right)  $-Fidelity-Pure-Channel-Sep-Inp is in QMA$(2)$ for all
$\alpha<\beta$. (It is implicit that the gap between $\alpha$ and $\beta$ is
larger than an inverse polynomial in the input length.)

\item $\left(  \varepsilon,1-\varepsilon\right)  $-Fidelity-Pure-Channel-Sep-Inp is QMA$(2)$-hard, even when $\varepsilon$ decays exponentially in the input length.
\end{enumerate}

\noindent Thus, $\left(  \alpha,\beta\right)  $-Fidelity-Pure-Channel-Sep-Inp is QMA$(2)$-complete for
all $\left(  \alpha,\beta\right)  $ such that $0<\alpha<\beta<1$.
\end{theorem}

\begin{proof}
To show that the problem is QMA(2)-complete, we need to demonstrate two facts: first, that the problem is in QMA(2), and second, that it is QMA(2)-hard. Let us begin by proving that the problem is in QMA(2). The intuitive idea is that the two provers, using shared randomness, send an optimal separable state $\sigma_{SR}$ to the verifier, who then performs the channel $\mathcal{N}_{SR \to A}$ on it, followed by the unitary $(U_{A}^{\psi})^\dag$. (Note that QMA(2) remains unchanged if the provers have access to shared randomness \cite{HM10}.) The verifier then performs a computational basis measurement on all registers of system~$A$ and accepts if and only if the all-zeros measurement outcome occurs. 

Consider that a separable state can be decomposed as
\begin{equation}
\sigma_{SR} = \sum\limits_k p(k) \outerproj{\varphi^k}_S \otimes \outerproj{\phi^k}_R.
\end{equation}
Indeed, the acceptance probability of this scheme is precisely equal to the fidelity in \eqref{eq:fid-ch-sep-state}:
\begin{align*}
    & F(\mathcal{N}_{SR \to A}(\sigma_{SR}), \psi_{A}) \nonumber \\
    &= \langle \psi \vert_{A} \mathcal{N}_{SR \to A}(\sigma_{SR}) \vert \psi \rangle_{A} \\
    &= \sum\limits_k p(k) \langle \psi \vert_{A} \mathcal{N}_{SR \to A}(\outerproj{\varphi^k}_S \otimes \outerproj{\phi^k}_R) \vert \psi \rangle_{A}.
\end{align*}
The final expression is an average of individual elements. Thus, taking a maximization over all separable states and noting that the maximum is always greater than the average, we conclude that
\begin{align}
    &\underset{\sigma_{SR} \in \operatorname{SEP}}{\max} F(\mathcal{N}_{SR \to A}(\sigma_{SR}), \psi_{A}) \nonumber \\
    &= \max\limits_{\ket{\varphi}_S, \ket{\phi}_R} \bra{\psi}_{A} \mathcal{N}_{SR \to A}(\varphi_S \otimes \phi_R) \ket{\psi}_{A} \nonumber \\
    &= \max\limits_{\ket{\varphi}_S, \ket{\phi}_R} \bra{0}_{A} (U^\psi_{A})^\dagger \mathcal{N}_{SR \to A}(\varphi_S \otimes \phi_R) U^\psi_{A} \ket{0}_{A}.
\end{align}
Thus, we see that 
\begin{multline}
    \underset{\sigma_{SR} \in \operatorname{SEP}}{\max} F(\mathcal{N}_{SR \to A}(\sigma_{SR}), \psi_{A}) = \max\limits_{\ket{\varphi}_S, \ket{\phi}_R} \bra{1}_{A} X_{A} \times \\
    (U^\psi_{A})^\dagger \operatorname{Tr}_{E'} [U^\mathcal{N}_{SRE \to AE'}(\outerproj{0}_E \otimes \varphi_S \otimes \phi_R) \times\\
    (U^\mathcal{N}_{SRE \to AE'})^\dagger] U^\psi_{A} X_{A} \ket{1}_{A},
\end{multline}
where $X_{A}$ is understood to be the tensor-power Pauli $X$ operator acting on all qubits of the $A$ register.
To bring the final expression into the precise form in \eqref{eq:QMA(2)_completeness}, we need a single decision qubit that we measure. We can use a multi-controlled Toffoli gate from the $A$ register to a single qubit decision qubit.  Thus, if we identify $x$ with $0$, $\rho$ with $\varphi_S$, $\sigma$ with $\phi_R$ and $Q_n$ with $(X_{A} \otimes \mathbb{I}_R) \circ ((U_{A}^\psi)^\dag \otimes \mathbb{I}_R) \circ U^\mathcal{N}_{SRE \to AE'}$, it follows that the problem belongs to the QMA(2) class.

Next, we show that any problem in the QMA(2) class can be polynomially reduced to this problem. Let $P$ be an arbitrary problem in the QMA(2) class. This implies that \eqref{eq:QMA(2)_completeness} and \eqref{eq:QMA(2)_soundness} must hold. This problem can then be thought of as a fidelity problem with a channel $\mathcal{M}_x$ defined as
\begin{equation}
    \mathcal{M}^x_{SAP_1P_2 \to D}(\cdot) \coloneqq \operatorname{Tr}_G[Q_n (\outerproj{x}_S \otimes \outerproj{0}_A \otimes (\cdot)_{P_1P_2}) Q_n^\dag].
\end{equation}
Furthermore, by identifying the state $\psi$ from the fidelity problem with $\outerproj{1}$, then we find that
\begin{align}
    &\langle 1 \vert \operatorname{Tr}_G [Q (\outerproj{x}_S \otimes \outerproj{0}_A \otimes \psi_1 \otimes \psi_2) Q^\dagger ] \vert 1 \rangle \\
    &= \langle 1 \vert \mathcal{M}^x(\psi_1 \otimes \psi_2) \vert 1 \rangle \\
    &= F(\mathcal{M}_x(\psi_1 \otimes \psi_2), \outerproj{1}).
\end{align}
It follows directly that this is an instance of $\left(  1-a(|x|)
,1-b(|x|)\right)  $-Fidelity-Channel-Pure, given that the channel
$\mathcal{M}_x$ can be prepared efficiently, as well as
the state $|1\rangle\!\langle 1|$. The desired hardness result then follows
because QMA$(1-a(|x|)
,1-b(|x|))\subseteq\ $QMA$(\delta,1-\delta)$, for every $\delta$
exponentially small in the input length (see \cite[Theorem~9]{HM10}).
\end{proof}

\section{Generating fixed points of quantum channels}
\label{sec:fixed-points-of-channels}

In this section, we discuss how
Algorithm~\ref{alg:fid-channels-single-prover} can generate a fixed-point
state or an approximate fixed-point state of a quantum channel. There are
various associated subtleties in such a scenario that we consider.

As a special case of Algorithm~\ref{alg:fid-channels-single-prover}, we can select $\mathcal{N}_{A\rightarrow B}^{0}$ to be a channel $\mathcal{N}$\ with
its output and input systems having the same dimension (i.e., $\left\vert
A\right\vert =\left\vert B\right\vert $), and we can select the second channel
$\mathcal{N}_{A\rightarrow B}^{1}$\ to be the identity channel. In this case,
the quantity in \eqref{eq:max-fid-channels-fixed-point} is always equal to
one. This follows from the well known fact that every quantum channel with
matching input and output systems has a fixed point state \cite{EHK78} (see
also \cite{Deutsch91,Wolf12}) and because the prover's goal is to maximize the acceptance
probability. That is, for every such channel $\mathcal{N}$,
there exists a state $\rho$ such that%
\begin{equation}
\mathcal{N}(\rho)=\rho, \label{eq:fixed-point-channel}%
\end{equation}
and so the prover can simply send this state.
Related to this, there is a faithfulness property that holds. If the
acceptance probability is equal to one, then it follows that%
\begin{equation}
\sup_{\rho}F(\mathcal{N}(\rho),\rho)=1,
\end{equation}
and we conclude that there exists a state $\rho$ satisfying
\eqref{eq:fixed-point-channel} because the fidelity is continuous and the set
of density operators is convex and compact.

What is interesting in this case is that
Algorithm~\ref{alg:fid-channels-single-prover}\ outputs a fixed point of the
channel $\mathcal{N}$. Fixed points of quantum channels are important not only
for understanding thermalization in a physical process
\cite{bardet2021modified}\ (a fixed point can be understood as an equilibrium
state of the channel)\ but also in the Deutschian theory of closed timelike
curves \cite{Deutsch91}.

We can also modify this approach slightly and employ
Algorithm~\ref{alg:fid-multiple-channels-single-prover}. In this case, the
verifier can employ the following ensemble of channels%
\begin{equation}
\left\{  \frac{1}{L},\mathcal{N}^{\ell}\right\}  _{\ell=0}^{L-1},
\end{equation}
where $\mathcal{N}^{\ell}$ here is defined as%
\begin{equation}
\mathcal{N}^{\ell}=\underbrace{\mathcal{N}\circ\cdots\circ\mathcal{N}}%
_{\ell\text{ times}}.
\end{equation}
In this case, the acceptance probability of
Algorithm~\ref{alg:fid-multiple-channels-single-prover} is given by%
\begin{equation}
\left[  \frac{1}{L}\sup_{\rho,\sigma}\sum_{\ell=0}^{L-1}\sqrt{F(\mathcal{N}^{\ell}(\rho),\sigma)}\right]  ^{2}. \label{eq:fixed-point-multiple-ch}%
\end{equation}
This is again equal to one because the prover can transmit a fixed point to
the verifier, which satisfies%
\begin{equation}
\mathcal{N}^{\ell}(\rho)=\rho\quad\forall\ell\in\left\{  0,\ldots,L-1\right\}
. \label{eq:fixed-point-multiple}%
\end{equation}
Similarly, in this case, a faithfulness property holds as well. If the
expression in \eqref{eq:fixed-point-multiple-ch} is equal to one, then there
exists a state $\rho$ satisfying \eqref{eq:fixed-point-multiple}. Furthermore,
Algorithm~\ref{alg:fid-multiple-channels-single-prover} outputs a fixed point
satisfying \eqref{eq:fixed-point-multiple-ch}.

The cases outlined above are simple. The situation becomes more subtle when
the verifier tries to use the state sent by the prover to solve a
computational problem, as is the case in quantum computation in the presence
of Deutschian closed timelike curves \cite{AW09}. In this case, there are different goals,
which are 1)\ to pass the test of the verifier in
Algorithm~\ref{alg:fid-multiple-channels-single-prover}, as well as 2)\ to
have the decision qubit be as close as possible to the $|1\rangle\!\langle1|$
state. In this case, the prover need not send an exact fixed point, but only
send an approximate fixed point, satisfying%
\begin{equation}
F(\rho,\mathcal{N}(\rho))\geq1-\varepsilon,
\label{eq:approximate-fixed-point-def}%
\end{equation}
or%
\begin{equation}
\left[  \frac{1}{L}\sup_{\sigma}\sum_{\ell=0}^{L-1}\sqrt{F(\mathcal{N}^{\ell}(\rho),\sigma)}\right]  ^{2}\geq1-\varepsilon,
\label{eq:approx-fixed-point-def-2}%
\end{equation}
where $\varepsilon\in (0,1)$.
The prover can do this to optimize the overall acceptance probability of the
QIP algorithm. Somewhat counter-intuitively, approximate fixed points need not
be close to exact fixed points, as illustrated by the following example.
Suppose that $\mathcal{N}$ is a classical channel that takes $1\rightarrow1$
deterministically, but then takes $0\rightarrow0$ with probability
$1-\varepsilon$ and $0\rightarrow1$ with probability $\varepsilon$. In this
case, $1$ is the exact fixed point of this stochastic process, but $0$ is an
approximate fixed point satisfying \eqref{eq:approximate-fixed-point-def}.
However, $0$ is completely distinguishable from $1$ (the fidelity of these two
classical states is equal to zero).

In Appendix~\ref{app:CTCs}, we discuss various issues related to fixed points and
approximate fixed points of quantum channels when attempting to understand
quantum interactive proofs and the computational complexity of Deutschian
closed timelike curves.

\section{Conclusion}

\label{sec:conclusion}

In this paper, we have delineated several algorithms for estimating distinguishability measures on quantum computers. All of the measures are based on trace distance or fidelity, and we have considered them for quantum states, channels, and strategies. Many of the algorithms rely on interaction with a quantum prover, and in these cases, we have replaced the prover with a parameterized quantum circuit.  As such, these methods are not guaranteed to converge for all possible states, channels, and strategies. It is an interesting open question to determine conditions under which the algorithms are guaranteed to converge and run efficiently.

We have also simulated several of the algorithms in both the noiseless and noisy scenarios. We found that the simulations converge well for all states and channels considered, and for all algorithms simulated. As more advanced quantum computers become available (with more qubits and greater reliability), it would be interesting to simulate our algorithms for states and channels involving larger numbers of qubits. All of our Python code is written in a modular way, such that it will be straightforward to explore this direction. Lastly, we proved several complexity-theoretic results about various distance estimation algorithms; in particular, we showed and, in some cases, recalled that there is a fidelity or distance estimation problem that is complete for the commonly studied complexity classes BQP, QMA, QMA(2), QSZK, QIP(2), and QIP.

Going forward from here, it remains open to determine methods for estimating other distinguishability measures such as the Petz--R\'{e}nyi relative entropy \cite{P85,P86}\ and the sandwiched R\'{e}nyi relative entropy \cite{MDSFT13,WWY14}\ of channels \cite{LKDW18}\ and strategies \cite{Wang2019a}. More generally, one could consider distinguishability measures beyond these. One desirable aspect of the algorithms appearing in this paper is that they provide a one-shot interpretation for the various distinguishability measures as the maximum acceptance probability in a quantum interactive proof (with the trace-distance based algorithms and interpretations being already known from \cite{W02,RW05,GW06,G09,G12}). However, it is unclear to us whether one could construct a quantum interactive proof for which the maximum acceptance probability is related to the Petz-- or sandwiched R\'{e}nyi relative entropy of a channel or a strategy.

\textit{Note added}: While finalizing the results of our initial arXiv post \cite{ARSW21}, we noticed the arXiv post \cite{BBC21}, which is related to the contents of Section~\ref{sec:TD-based-measures}. Ref.~\cite{BBC21} is now published as \cite{PhysRevApplied.17.024002}.

\begin{acknowledgments}
We acknowledge insightful discussions with Todd Brun, Patrick Coles, Zoe Holmes, Margarite LaBorde, Dhrumil Patel, Yihui Quek, and Aliza Siddiqui. We thank Robert Salzmann and John Watrous for discussions related to fixed points and thank John Watrous for reminding us of the example after~\eqref{eq:approx-fixed-point-def-2}. We also thank him and Scott Aaronson for discussions related to Deutschian CTCs. We thank Yupan Liu for pointing out a typo. SR and MMW\ acknowledge support from the National Science Foundation under Grant No.~1907615. KS acknowledges support from the Department of Defense.
\end{acknowledgments}

\bibliographystyle{alpha}
\bibliography{Ref}

\appendix

\section{Proofs from main text}

\subsection{Proof of Theorem~\ref{thm:acc-prob-fid-states}}

\label{app:ProofAlg4}

\begin{proof}[Proof of Theorem~\ref{thm:acc-prob-fid-states}]
After Step~1 of Algorithm~\ref{alg:fid-states}, the global state is%
\begin{equation}
|\Phi\rangle_{T^{\prime}T}|0\rangle_{RS}.
\end{equation}
After Step~2 of Algorithm~\ref{alg:fid-states}, it is%
\begin{equation}
\frac{1}{\sqrt{2}}\sum_{i\in\left\{  0,1\right\}  }|i\rangle_{T^{\prime}%
}|i\rangle_{T}|\psi^{i}\rangle_{RS}.
\end{equation}
After Step~4 of Algorithm~\ref{alg:fid-states}, it is%
\begin{equation}
P_{T^{\prime}RF\rightarrow T^{\prime\prime}F^{\prime}}\left(  \frac{1}%
{\sqrt{2}}\sum_{i\in\left\{  0,1\right\}  }|i\rangle_{T^{\prime}}|i\rangle
_{T}|\psi^{i}\rangle_{RS}|0\rangle_{F}\right)  .
\end{equation}
For a fixed unitary $P\equiv P_{T^{\prime}RF\rightarrow T^{\prime\prime
}F^{\prime}}$ of the prover, the acceptance probability is then%
\begin{multline}
\left\Vert \langle\Phi|_{T^{\prime\prime}T}P\left(  \frac{1}{\sqrt{2}}%
\sum_{i\in\left\{  0,1\right\}  }|i\rangle_{T^{\prime}}|i\rangle_{T}|\psi
^{i}\rangle_{RS}|0\rangle_{F}\right)  \right\Vert _{2}^{2}\\
=\frac{1}{2}\left\Vert \langle\Phi|_{T^{\prime\prime}T}P\sum_{i\in\left\{
0,1\right\}  }|i\rangle_{T^{\prime}}|i\rangle_{T}|\psi^{i}\rangle
_{RS}|0\rangle_{F}\right\Vert _{2}^{2}.
\end{multline}
In a quantum interactive proof, the prover is trying to maximize the
probability that the verifier accepts. So the acceptance probability of
Algorithm~\ref{alg:fid-states} is given by%
\begin{equation}
\max_{P_{T^{\prime}RF\rightarrow T^{\prime\prime}F^{\prime}}}\frac{1}%
{2}\left\Vert \langle\Phi|_{T^{\prime\prime}T}P\sum_{i\in\left\{  0,1\right\}
}|i\rangle_{T^{\prime}}|i\rangle_{T}|\psi^{i}\rangle_{RS}|0\rangle
_{F}\right\Vert _{2}^{2}.
\end{equation}

Setting%
\begin{align}
P_{R\rightarrow F^{\prime}}^{0}  &  \coloneqq\langle0|_{T^{\prime\prime}%
}P_{T^{\prime}RF\rightarrow T^{\prime\prime}F^{\prime}}|0\rangle_{T^{\prime}%
}|0\rangle_{F},\\
P_{R\rightarrow F^{\prime}}^{1}  &  \coloneqq\langle1|_{T^{\prime\prime}%
}P_{T^{\prime}RF\rightarrow T^{\prime\prime}F^{\prime}}|1\rangle_{T^{\prime}%
}|0\rangle_{F},
\end{align}
we have that%
\begin{align}
&  \frac{1}{2}\left\Vert \langle\Phi|_{T^{\prime\prime}T}P\sum_{i\in\left\{
0,1\right\}  }|i\rangle_{T^{\prime}}|i\rangle_{T}|\psi^{i}\rangle
_{RS}|0\rangle_{F}\right\Vert _{2}^{2}\nonumber\\
&  =\frac{1}{4}\left\Vert \sum_{i\in\left\{  0,1\right\}  }P_{R\rightarrow
F^{\prime}}^{i}|\psi^{i}\rangle_{RS}\right\Vert _{2}^{2}\\
&  =\frac{1}{4}\sum_{i,j\in\left\{  0,1\right\}  }\langle\psi^{i}%
|_{RS}(P_{R\rightarrow F^{\prime}}^{i})^{\dag}P_{R\rightarrow F^{\prime}}%
^{j}|\psi^{j}\rangle_{RS}\\
&  \leq\frac{1}{2}\left(  1+\operatorname{Re}\left\{  \langle\psi^{0}%
|_{RS}(P_{R\rightarrow F^{\prime}}^{0})^{\dag}P_{R\rightarrow F^{\prime}}%
^{1}|\psi^{1}\rangle_{RS}\right\}  \right) \\
&  \leq\frac{1}{2}\left(  1+\left\vert \langle\psi^{0}|_{RS}(P_{R\rightarrow
F^{\prime}}^{0})^{\dag}P_{R\rightarrow F^{\prime}}^{1}|\psi^{1}\rangle
_{RS}\right\vert \right)  .
\end{align}
The first inequality follows because $P_{R\rightarrow F^{\prime}}^{i}$ is a
contraction for $i\in\left\{  0,1\right\}  $, so that $(P_{R\rightarrow
F^{\prime}}^{i})^{\dag}P_{R\rightarrow F^{\prime}}^{i}\leq I_{F^{\prime}}$.
Then consider that%
\begin{align}
&  \left\vert \langle\psi^{0}|_{RS}(P_{R\rightarrow F^{\prime}}^{0})^{\dag
}P_{R\rightarrow F^{\prime}}^{1}|\psi^{1}\rangle_{RS}\right\vert \nonumber\\
&  \leq\max_{P^{0},P^{1}}\left\{
\begin{array}
[c]{c}%
\left\vert \langle\psi^{0}|_{RS}(P_{R\rightarrow F^{\prime}}^{0})^{\dag
}P_{R\rightarrow F^{\prime}}^{1}|\psi^{1}\rangle_{RS}\right\vert \\
:\left\Vert P^{i}\right\Vert _{\infty}\leq1\ \forall i
\end{array}
\right\} \\
&  =\sqrt{F}(\rho_{S}^{0},\rho_{S}^{1}).
\end{align}
The last line is a consequence of the following reasoning (which is the same
as that employed in Section~III\ in \cite{PhysRevA.94.022310}). The inequality%
\begin{multline}
\max_{P^{0},P^{1}}\left\{
\begin{array}
[c]{c}%
\left\vert \langle\psi^{0}|_{RS}(P_{R\rightarrow F^{\prime}}^{0})^{\dag
}P_{R\rightarrow F^{\prime}}^{1}|\psi^{1}\rangle_{RS}\right\vert \\
:\left\Vert P^{i}\right\Vert _{\infty}\leq1\ \forall i
\end{array}
\right\} \\
\geq\sqrt{F}(\rho_{S}^{0},\rho_{S}^{1})
\end{multline}
holds because the isometries $P_{R\rightarrow F^{\prime}}^{0}$ and
$P_{R\rightarrow F^{\prime}}^{1}$ that achieve the maximum for the fidelity
are each contractions and the optimization is conducted over all contractions.
The opposite inequality%
\begin{multline}
\max_{P^{0},P^{1}}\left\{
\begin{array}
[c]{c}%
\left\vert \langle\psi^{0}|_{RS}(P_{R\rightarrow F^{\prime}}^{0})^{\dag
}P_{R\rightarrow F^{\prime}}^{1}|\psi^{1}\rangle_{RS}\right\vert \\
:\left\Vert P^{i}\right\Vert _{\infty}\leq1\ \forall i
\end{array}
\right\} \\
\leq\sqrt{F}(\rho_{S}^{0},\rho_{S}^{1})
\end{multline}
is a consequence of the fact that every contraction can be written as a convex
combination of isometries \cite[Theorem~5.10]{Z11}. Indeed, this means that,
for each $i\in\left\{  0,1\right\}  $,%
\begin{equation}
P_{R\rightarrow F^{\prime}}^{i}=\sum_{x}p_{i}(x)W_{R\rightarrow F^{\prime}%
}^{i,x},
\end{equation}
where $\{p_{i}(x)\}_{x}$ is a probability distribution and $W_{R\rightarrow
F^{\prime}}^{i,x}$ is an isometry, for each $i$ and $x$. Then we find that%
\begin{align}
&  \left\vert \langle\psi^{0}|_{RS}(P_{R\rightarrow F^{\prime}}^{0})^{\dag
}P_{R\rightarrow F^{\prime}}^{1}|\psi^{1}\rangle_{RS}\right\vert \nonumber\\
&  =\left\vert
\begin{array}
[c]{c}%
\langle\psi^{0}|_{RS}\left(  \sum_{x}p_{0}(x)W_{R\rightarrow F^{\prime}}%
^{0,x}\right)  ^{\dag}\times\\
\quad\left(  \sum_{x^{\prime}}p_{1}(x^{\prime})W_{R\rightarrow F^{\prime}%
}^{1,x^{\prime}}\right)  |\psi^{1}\rangle_{RS}%
\end{array}
\right\vert \\
&  =\left\vert \sum_{x,x^{\prime}}p_{0}(x)p_{1}(x^{\prime})\langle\psi
^{0}|_{RS}\left(  W_{R\rightarrow F^{\prime}}^{0,x}\right)  ^{\dag
}W_{R\rightarrow F^{\prime}}^{1,x^{\prime}}|\psi^{1}\rangle_{RS}\right\vert \\
&  \leq\sum_{x,x^{\prime}}p_{0}(x)p_{1}(x^{\prime})\left\vert \langle\psi
^{0}|_{RS}\left(  W_{R\rightarrow F^{\prime}}^{0,x}\right)  ^{\dag
}W_{R\rightarrow F^{\prime}}^{1,x^{\prime}}|\psi^{1}\rangle_{RS}\right\vert \\
&  \leq\max_{x,x^{\prime}}\left\vert \langle\psi^{0}|_{RS}\left(
W_{R\rightarrow F^{\prime}}^{0,x}\right)  ^{\dag}W_{R\rightarrow F^{\prime}%
}^{1,x^{\prime}}|\psi^{1}\rangle_{RS}\right\vert \\
&  \leq\sqrt{F}(\rho_{S}^{0},\rho_{S}^{1}).
\end{align}
Thus, an upper bound on the acceptance probability of
Algorithm~\ref{alg:fid-states} is as follows:%
\begin{equation}
\frac{1}{2}\left(  1+\sqrt{F}(\rho_{S}^{0},\rho_{S}^{1})\right)  .
\end{equation}
This upper bound can be achieved if the prover applies a unitary extension of
the following isometry:%
\begin{equation}
P_{T^{\prime}RF\rightarrow T^{\prime\prime}F^{\prime}}=\sum_{i\in\left\{
0,1\right\}  }|i\rangle_{T^{\prime\prime}}\langle i|_{T^{\prime}}\otimes
P_{R\rightarrow F^{\prime}}^{i}\otimes\langle0|_{F},
\end{equation}
where $P_{R\rightarrow F^{\prime}}^{0}$ and $P_{R\rightarrow F^{\prime}}^{1}$
are isometries achieving the maximum in the fidelity $F(\rho_{S}^{0},\rho
_{S}^{1})$.
\end{proof}

\subsection{Proof of Theorem~\ref{thm:acc-prob-mixed-state-swap-test}}

\label{app:ProofAlg5}

\begin{proof}[Proof of Theorem~\ref{thm:acc-prob-mixed-state-swap-test}]
After Step~1 of Algorithm~\ref{alg:mixed-state-swap-test}, the global state is%
\begin{equation}
|\Phi\rangle_{T^{\prime}T}|0\rangle_{R_{1}S_{1}R_{2}S_{2}}.
\end{equation}
After Step~2, the global state is%
\begin{equation}
|\Phi\rangle_{T^{\prime}T}|\psi^{\rho^{0}}\rangle_{R_{1}S_{1}}|\psi^{\rho^{1}%
}\rangle_{R_{2}S_{2}}.
\end{equation}
After Step~3, it becomes%
\begin{multline}
\frac{1}{\sqrt{2}}|0\rangle_{T}|0\rangle_{T^{\prime}}|\psi^{\rho^{0}}%
\rangle_{R_{1}S_{1}}|\psi^{\rho^{1}}\rangle_{R_{2}S_{2}}\\
+\frac{1}{\sqrt{2}}|1\rangle_{T}|1\rangle_{T^{\prime}}|\psi^{\rho^{1}}%
\rangle_{R_{2}S_{1}}|\psi^{\rho^{0}}\rangle_{R_{1}S_{2}}.
\end{multline}
The verifier then sends systems $T^{\prime}$, $R_{1}$, and $R_{2}$ to the
prover, who appends the state $|0\rangle_{F}$ and acts with a unitary
$P_{T^{\prime}R_{1}R_{2}F\rightarrow T^{\prime\prime}F^{\prime}}$. Without
loss of generality, and for simplicity of the ensuing analysis, we can imagine
that before applying the unitary $P_{T^{\prime}R_{1}R_{2}F\rightarrow
T^{\prime\prime}F^{\prime}}$, the prover applies a controlled SWAP to systems
$T^{\prime}$, $R_{1}$, and $R_{2}$, so that the state before applying
$P_{T^{\prime}R_{1}R_{2}F\rightarrow T^{\prime\prime}F^{\prime}}$ is as
follows:%
\begin{multline}
\frac{1}{\sqrt{2}}|0\rangle_{T}|0\rangle_{T^{\prime}}|\psi^{\rho^{0}}%
\rangle_{R_{1}S_{1}}|\psi^{\rho^{1}}\rangle_{R_{2}S_{2}}%
\label{eq:gen-swap-global-state-before-prover-1}\\
+\frac{1}{\sqrt{2}}|1\rangle_{T}|1\rangle_{T^{\prime}}|\psi^{\rho^{1}}%
\rangle_{R_{1}S_{1}}|\psi^{\rho^{0}}\rangle_{R_{2}S_{2}}.
\end{multline}
This follows because the prover can apply arbitrary unitaries to his received
systems, and one such possible unitary is to apply this controlled SWAP, undo
it, and then apply $P_{T^{\prime}R_{1}R_{2}F\rightarrow T^{\prime\prime
}F^{\prime}}$. However, the latter two unitaries are a particular example of a
unitary $P_{T^{\prime}R_{1}R_{2}F\rightarrow T^{\prime\prime}F^{\prime}}$. So
we proceed with the ensuing analysis assuming that the global state, before the prover
applies $P_{T^{\prime}R_{1}R_{2}F\rightarrow T^{\prime\prime}F^{\prime}}$, is
given by \eqref{eq:gen-swap-global-state-before-prover-1}. Note that the
actions of tensoring in the state $|0\rangle_{F}$ and applying $P_{T^{\prime
}R_{1}R_{2}F\rightarrow T^{\prime\prime}F^{\prime}}$ together constitute an
isometry%
\begin{equation}
P_{T^{\prime}R_{1}R_{2}\rightarrow T^{\prime\prime}F^{\prime}}\coloneqq
P_{T^{\prime}R_{1}R_{2}F\rightarrow T^{\prime\prime}F^{\prime}}|0\rangle_{F},
\end{equation}
resulting in the state%
\begin{multline}
\frac{1}{\sqrt{2}}P_{T^{\prime}R_{1}R_{2}\rightarrow T^{\prime\prime}%
F^{\prime}}|0\rangle_{T}|0\rangle_{T^{\prime}}|\psi^{\rho^{0}}\rangle
_{R_{1}S_{1}}|\psi^{\rho^{1}}\rangle_{R_{2}S_{2}}\\
+\frac{1}{\sqrt{2}}P_{T^{\prime}R_{1}R_{2}\rightarrow T^{\prime\prime
}F^{\prime}}|1\rangle_{T}|1\rangle_{T^{\prime}}|\psi^{\rho^{1}}\rangle
_{R_{1}S_{1}}|\psi^{\rho^{0}}\rangle_{R_{2}S_{2}}.
\end{multline}
Let us set%
\begin{align}
P_{R_{1}R_{2}\rightarrow F^{\prime}}^{00} &  \coloneqq\langle0|_{T^{\prime
\prime}}P_{T^{\prime}R_{1}R_{2}\rightarrow T^{\prime\prime}F^{\prime}%
}|0\rangle_{T^{\prime}},\\
P_{R_{1}R_{2}\rightarrow F^{\prime}}^{11} &  \coloneqq\langle1|_{T^{\prime
\prime}}P_{T^{\prime}R_{1}R_{2}\rightarrow T^{\prime\prime}F^{\prime}%
}|1\rangle_{T^{\prime}}.
\end{align}
The verifier finally performs a Bell measurement and accepts if and only if
the outcome $\Phi_{T^{\prime\prime}T}$ occurs. The acceptance probability is
then\begin{widetext}%
\begin{align}
&  \left\Vert \langle\Phi|_{TT^{\prime\prime}}\frac{1}{\sqrt{2}}\left(
|0\rangle_{T}P_{T^{\prime}R_{1}R_{2}\rightarrow T^{\prime\prime}F^{\prime}%
}|0\rangle_{T^{\prime}}|\psi^{\rho_{0}}\rangle_{R_{1}S_{1}}|\psi^{\rho_{1}%
}\rangle_{R_{2}S_{2}}+|1\rangle_{T}P_{T^{\prime}R_{1}R_{2}\rightarrow
T^{\prime\prime}F^{\prime}}|1\rangle_{T^{\prime}}|\psi^{\rho_{1}}%
\rangle_{R_{1}S_{1}}|\psi^{\rho_{0}}\rangle_{R_{2}S_{2}}\right)  \right\Vert
_{2}^{2}\nonumber\\
&  =\frac{1}{4}\left\Vert \langle0|_{T^{\prime\prime}}P_{T^{\prime}R_{1}%
R_{2}\rightarrow T^{\prime\prime}F^{\prime}}|0\rangle_{T^{\prime}}|\psi
^{\rho_{0}}\rangle_{R_{1}S_{1}}|\psi^{\rho_{1}}\rangle_{R_{2}S_{2}}%
+\langle1|_{T^{\prime\prime}}P_{T^{\prime}R_{1}R_{2}\rightarrow T^{\prime
\prime}F^{\prime}}|1\rangle_{T^{\prime}}|\psi^{\rho_{1}}\rangle_{R_{1}S_{1}%
}|\psi^{\rho_{0}}\rangle_{R_{2}S_{2}}\right\Vert _{2}^{2}\\
&  =\frac{1}{4}\left\Vert P_{R_{1}R_{2}\rightarrow F^{\prime}}^{00}|\psi
^{\rho_{0}}\rangle_{R_{1}S_{1}}|\psi^{\rho_{1}}\rangle_{R_{2}S_{2}}%
+P_{R_{1}R_{2}\rightarrow F^{\prime}}^{11}|\psi^{\rho_{1}}\rangle_{R_{1}S_{1}%
}|\psi^{\rho_{0}}\rangle_{R_{2}S_{2}}\right\Vert _{2}^{2}\\
&  =\frac{1}{4}\left(
\begin{array}
[c]{c}%
\langle\psi^{\rho_{0}}|_{R_{1}S_{1}}\langle\psi^{\rho_{1}}|_{R_{2}S_{2}%
}\left(  P_{R_{1}R_{2}\rightarrow F^{\prime}}^{00}\right)  ^{\dag}%
P_{R_{1}R_{2}\rightarrow F^{\prime}}^{00}|\psi^{\rho_{0}}\rangle_{R_{1}S_{1}%
}|\psi^{\rho_{1}}\rangle_{R_{2}S_{2}}\\
+\langle\psi^{\rho_{1}}|_{R_{1}S_{2}}\langle\psi^{\rho_{0}}|_{R_{2}S_{1}%
}\left(  P_{R_{1}R_{2}\rightarrow F^{\prime}}^{11}\right)  ^{\dag}%
P_{R_{1}R_{2}\rightarrow F^{\prime}}^{11}|\psi^{\rho_{1}}\rangle_{R_{1}S_{1}%
}|\psi^{\rho_{0}}\rangle_{R_{2}S_{2}}\\
+\langle\psi^{\rho_{0}}|_{R_{1}S_{1}}\langle\psi^{\rho_{1}}|_{R_{2}S_{2}%
}\left(  P_{R_{1}R_{2}\rightarrow F^{\prime}}^{00}\right)  ^{\dag}%
P_{R_{1}R_{2}\rightarrow F^{\prime}}^{11}|\psi^{\rho_{1}}\rangle_{R_{1}S_{1}%
}|\psi^{\rho_{0}}\rangle_{R_{2}S_{2}}\\
+\langle\psi^{\rho_{1}}|_{R_{1}S_{1}}\langle\psi^{\rho_{0}}|_{R_{2}S_{2}%
}\left(  P_{R_{1}R_{2}\rightarrow F^{\prime}}^{11}\right)  ^{\dag}%
P_{R_{1}R_{2}\rightarrow F^{\prime}}^{00}|\psi^{\rho_{0}}\rangle_{R_{1}S_{1}%
}|\psi^{\rho_{1}}\rangle_{R_{2}S_{2}}%
\end{array}
\right)  \\
&  \leq\frac{1}{4}\left(  2+2\operatorname{Re}\left\{  \langle\psi^{\rho_{0}%
}|_{R_{1}S_{1}}\langle\psi^{\rho_{1}}|_{R_{2}S_{2}}\left(  P_{R_{1}%
R_{2}\rightarrow F^{\prime}}^{00}\right)  ^{\dag}P_{R_{1}R_{2}\rightarrow
F^{\prime}}^{11}|\psi^{\rho_{1}}\rangle_{R_{1}S_{1}}|\psi^{\rho_{0}}%
\rangle_{R_{2}S_{2}}\right\}  \right)  \\
&  \leq\frac{1}{4}\left(  2+2\left\vert \langle\psi^{\rho_{0}}|_{R_{1}S_{1}%
}\langle\psi^{\rho_{1}}|_{R_{2}S_{2}}\left(  P_{R_{1}R_{2}\rightarrow
F^{\prime}}^{00}\right)  ^{\dag}P_{R_{1}R_{2}\rightarrow F^{\prime}}^{11}%
|\psi^{\rho_{1}}\rangle_{R_{1}S_{1}}|\psi^{\rho_{0}}\rangle_{R_{2}S_{2}%
}\right\vert \right)  \\
&  =\frac{1}{2}\left(  1+\left\vert \langle\psi^{\rho_{0}}|_{R_{1}S_{1}%
}\langle\psi^{\rho_{1}}|_{R_{2}S_{2}}\left(  P_{R_{1}R_{2}\rightarrow
F^{\prime}}^{00}\right)  ^{\dag}P_{R_{1}R_{2}\rightarrow F^{\prime}}^{11}%
|\psi^{\rho_{1}}\rangle_{R_{1}S_{1}}|\psi^{\rho_{0}}\rangle_{R_{2}S_{2}%
}\right\vert \right)  \\
&  \leq\frac{1}{2}\left(  1+\max_{U_{R_{1}R_{2}}}\left\vert \langle\psi
^{\rho_{0}}|_{R_{1}S_{1}}\langle\psi^{\rho_{1}}|_{R_{2}S_{2}}U_{R_{1}R_{2}%
}|\psi^{\rho_{1}}\rangle_{R_{1}S_{1}}|\psi^{\rho_{0}}\rangle_{R_{2}S_{2}%
}\right\vert \right)  .\label{eq:proof-swap-mixed-states}%
\end{align}
\end{widetext}The steps given above follow for reasons very similar to those
given in the proof of Theorem~\ref{thm:acc-prob-fid-states}. Continuing, we
find that%
\begin{align}
\text{Eq.~\eqref{eq:proof-swap-mixed-states}} &  =\frac{1}{2}\left(
1+\sqrt{F(\rho^{0}\otimes\rho^{1},\rho^{1}\otimes\rho^{0})}\right)  \\
&  =\frac{1}{2}\left(  1+\sqrt{F(\rho^{0},\rho^{1})F(\rho^{1},\rho^{0}%
)}\right)  \label{eq:mult-fid-proof-step}\\
&  =\frac{1}{2}\left(  1+F(\rho^{0},\rho^{1})\right)
,\label{eq:sym-fid-proof-step}%
\end{align}
where we used the multiplicativity of the fidelity for tensor-product states
to get \eqref{eq:mult-fid-proof-step} and the symmetric property of fidelity
to arrive at \eqref{eq:sym-fid-proof-step}. Thus, we have established
\eqref{eq:accept-prob-mixed-swap-test} as an upper bound on the acceptance
probability. This upper bound can be achieved by setting $F^{\prime}\simeq
R_{1}R_{2}$ and
\begin{align}
P_{T^{\prime}R_{1}R_{2}F\rightarrow T^{\prime\prime}F^{\prime}} &
=|0\rangle_{T^{\prime\prime}}\langle0|_{T^{\prime}}\otimes I_{R_{1}%
R_{2}\rightarrow F^{\prime}}\otimes\langle0|_{F}\\
&  +|1\rangle_{T^{\prime\prime}}\langle1|_{T^{\prime}}\otimes U_{R_{1}}\otimes
U_{R_{2}}^{\dag}\otimes\langle0|_{F},
\end{align}
where $U_{R_{1}}$ is a unitary that achieves the fidelity for $F(\rho^{0}%
,\rho^{1})$, so that%
\begin{equation}
\sqrt{F}(\rho^{0},\rho^{1})=\langle\psi^{\rho^{0}}|_{R_{1}S_{1}}U_{R_{1}}%
|\psi^{\rho^{1}}\rangle_{R_{1}S_{1}}.
\end{equation}
This concludes the proof.
\end{proof}

\subsection{Proof of Theorem~\ref{thm:acc-prob-fid-channels}}

\label{app:ProofAlg8}

\begin{proof}[Proof of Theorem~\ref{thm:acc-prob-fid-channels}]
After Step~1 of Algorithm~\ref{alg:fid-channels}, the global state is%
\begin{equation}
|\Phi\rangle_{T^{\prime}T}|\psi\rangle_{RA}|0\rangle_{E^{\prime}}.
\end{equation}
After Step~2 of Algorithm~\ref{alg:fid-channels}, it is%
\begin{equation}
\frac{1}{\sqrt{2}}\sum_{i\in\left\{  0,1\right\}  }|i\rangle_{T^{\prime}%
}|i\rangle_{T}U^{i}|\psi\rangle_{RA}|0\rangle_{E^{\prime}},
\end{equation}
where $U^{i}\equiv U_{AE^{\prime}\rightarrow BE}^{i}$ for $i\in\left\{
0,1\right\}  $. After Step~4 of Algorithm~\ref{alg:fid-channels}, it is%
\begin{equation}
P\left(  \frac{1}{\sqrt{2}}\sum_{i\in\left\{  0,1\right\}  }|i\rangle
_{T^{\prime}}|i\rangle_{T}U^{i}|\psi\rangle_{RA}|00\rangle_{E^{\prime}%
F}\right)  ,
\end{equation}
where $P\equiv P_{T^{\prime}EF\rightarrow T^{\prime\prime}F^{\prime}}$. For a
fixed unitary $P_{T^{\prime}EF\rightarrow T^{\prime\prime}F^{\prime}}$ of the
max-prover and fixed state $|\psi\rangle_{RA}$ of the min-prover, the
acceptance probability is then%
\begin{multline}
\left\Vert \langle\Phi|_{T^{\prime\prime}T}P\left(  \frac{1}{\sqrt{2}}%
\sum_{i\in\left\{  0,1\right\}  }|i\rangle_{T^{\prime}}|i\rangle_{T}U^{i}%
|\psi\rangle_{RA}|00\rangle_{E^{\prime}F}\right)  \right\Vert _{2}^{2}\\
=\frac{1}{2}\left\Vert \langle\Phi|_{T^{\prime\prime}T}P\sum_{i\in\left\{
0,1\right\}  }|i\rangle_{T^{\prime}}|i\rangle_{T}U^{i}|\psi\rangle
_{RA}|00\rangle_{E^{\prime}F}\right\Vert _{2}^{2},
\end{multline}
In a competing-provers quantum interactive proof, the max-prover is trying to
maximize the probability that the verifier accepts, while the min-prover is
trying to minimize the acceptance probability. Since the max-prover plays
second in this game, the acceptance probability of
Algorithm~\ref{alg:fid-channels} is given by%
\begin{equation}
\min_{|\psi\rangle_{RA}}\max_{P}\frac{1}{2}\left\Vert \langle\Phi
|_{T^{\prime\prime}T}P\sum_{i\in\left\{  0,1\right\}  }|ii\rangle_{T^{\prime
}T}U^{i}|\psi\rangle_{RA}|00\rangle_{E^{\prime}F}\right\Vert _{2}^{2}.
\end{equation}
Applying the analysis of Theorem~\ref{thm:acc-prob-fid-states}, it follows
that%
\begin{multline}
\max_{P}\frac{1}{2}\left\Vert \langle\Phi|_{T^{\prime\prime}T}P\sum
_{i\in\left\{  0,1\right\}  }|ii\rangle_{T^{\prime}T}U^{i}|\psi\rangle
_{RA}|00\rangle_{E^{\prime}F}\right\Vert _{2}^{2}\\
=\frac{1}{2}\left(  1+\sqrt{F}(\mathcal{N}_{A\rightarrow B}^{0}(\psi
_{RA}),\mathcal{N}_{A\rightarrow B}^{1}(\psi_{RA}))\right)  .
\end{multline}
Thus, after applying the minimization over every input state $\psi_{RA}$, the
claim in \eqref{eq:acc-prob-fid-channels} follows.
\end{proof}

\subsection{Proof of Theorem~\ref{thm:acc-prob-fid-strategies}}

\label{app:ProofAlg9}

\begin{proof}[Proof of Theorem~\ref{thm:acc-prob-fid-strategies}]
After Step~1 of Algorithm~\ref{alg:fid-strategies}, the global state is%
\begin{equation}
|\Phi\rangle_{T^{\prime}T}|\psi\rangle_{RA}|0\rangle_{E^{\prime n}},
\end{equation}
where we have employed the shorthand $E^{\prime n}\equiv E_{1}^{\prime}\cdots
E_{n}^{\prime}$. After Step~6 of Algorithm~\ref{alg:fid-strategies}, the
global state is%
\begin{multline}
|\varphi(\psi,\{S^{j}\}_{j=1}^{n-1})\rangle\equiv\\
\frac{1}{\sqrt{2}}\sum_{i\in\left\{  0,1\right\}  }|i\rangle_{T^{\prime}%
}|i\rangle_{T}U^{i,n-1}\prod\limits_{j=1}^{n-1}\left(  S^{j}U^{i,j}\right)
|\psi\rangle_{RA}|0\rangle_{E^{\prime n}},
\end{multline}
where we have omitted many of the system labels for simplicity. After Step~8 of
Algorithm~\ref{alg:fid-strategies}, the global state is%
\begin{equation}
P|\varphi(\psi,\{S^{j}\}_{j=1}^{n-1})\rangle.
\end{equation}
For a fixed unitary $P$ of the max-prover and a fixed pure co-strategy
$(\psi,\{S^{j}\}_{j=1}^{n-1})$ of the min-prover, the acceptance probability
is thus%
\begin{equation}
\left\Vert \langle\Phi|_{T^{\prime\prime}T}P|\varphi(\psi,\{S^{j}%
\}_{j=1}^{n-1})\rangle\right\Vert _{2}^{2}.
\end{equation}
In a double-prover quantum interactive proof, the max-prover is trying to
maximize the probability that the verifier accepts, while the min-prover is
trying to minimize the acceptance probability. Since the max-prover plays
second in this game, the acceptance probability of
Algorithm~\ref{alg:fid-strategies} is given by%
\begin{equation}
\min_{(\psi,\{S^{j}\}_{j=1}^{n-1})}\max_{P}\left\Vert \langle\Phi
|_{T^{\prime\prime}T}P|\varphi(\psi,\{S^{j}\}_{j=1}^{n-1})\rangle\right\Vert
_{2}^{2}.
\end{equation}
Applying the analysis of Theorem~\ref{thm:acc-prob-fid-states}, it follows
that%
\begin{multline}
\max_{P}\left\Vert \langle\Phi|_{T^{\prime\prime}T}P|\varphi(\psi
,\{S^{j}\}_{j=1}^{n-1})\rangle\right\Vert _{2}^{2}\\
=\frac{1}{2}\left(  1+\sqrt{F}(\mathcal{N}^{0,(n)}\circ\mathcal{S}%
^{(n-1)},\mathcal{N}^{1,(n)}\circ\mathcal{S}^{(n-1)})\right)  .
\end{multline}
Thus, after applying a minimization over every pure co-strategy $\mathcal{S}%
^{(n-1)}$, the claim in \eqref{eq:acc-prob-fid-strategies} follows.
\end{proof}

\subsection{Proof of Theorem~\ref{thm:max-fid-channels}}

\label{app:ProofAlg10}

\begin{proof}[Proof of Theorem~\ref{thm:max-fid-channels}]
After Step~2 of Algorithm~\ref{alg:fid-channels-single-prover}, the global
state is%
\begin{equation}
|\Phi\rangle_{T^{\prime}T}|\psi\rangle_{RA}|0\rangle_{E^{\prime}}.
\end{equation}
After Step~3, the global state is%
\begin{equation}
\frac{1}{\sqrt{2}}\sum_{i\in\left\{  0,1\right\}  }|i\rangle_{T^{\prime}%
}|i\rangle_{T}U_{AE^{\prime}\rightarrow BE}^{i}|\psi\rangle_{RA}%
|0\rangle_{E^{\prime}}.
\end{equation}
After Step~5, it is%
\begin{equation}
\frac{1}{\sqrt{2}}P\sum_{i\in\left\{  0,1\right\}  }|i\rangle_{T^{\prime}%
}|i\rangle_{T}U_{AE^{\prime}\rightarrow BE}^{i}|\psi\rangle_{RA}%
|0\rangle_{E^{\prime}},
\end{equation}
where $P\equiv P_{T^{\prime}EF\rightarrow T^{\prime\prime}F^{\prime}}$. For a
fixed state $|\psi\rangle_{RA}$ and unitary $P_{T^{\prime}EF\rightarrow
T^{\prime\prime}F^{\prime}}$ of the prover, the acceptance probability is
\begin{equation}
\frac{1}{2}\left\Vert \langle\Phi|_{T^{\prime\prime}T}P\sum_{i\in\left\{
0,1\right\}  }|i\rangle_{T^{\prime}}|i\rangle_{T}U_{AE^{\prime}\rightarrow
BE}^{i}|\psi\rangle_{RA}|0\rangle_{E^{\prime}}\right\Vert _{2}^{2}.
\end{equation}
In a QIP algorithm, the prover chooses his actions in order to maximize the
acceptance probability, so that the acceptance probability is%
\begin{equation}
\frac{1}{2}\sup_{\substack{|\psi\rangle_{RA},\\P}}\left\Vert \langle
\Phi|_{T^{\prime\prime}T}P\sum_{i\in\left\{  0,1\right\}  }|i\rangle
_{T^{\prime}}|i\rangle_{T}U_{AE^{\prime}\rightarrow BE}^{i}|\psi\rangle
_{RA}|0\rangle_{E^{\prime}}\right\Vert _{2}^{2}.
\end{equation}
By the same reasoning employed in the proof of
Theorem~\ref{thm:acc-prob-fid-states}, we conclude that%
\begin{multline}
\frac{1}{2}\sup_{P}\left\Vert \langle\Phi|_{T^{\prime\prime}T}P\sum
_{i\in\left\{  0,1\right\}  }|i\rangle_{T^{\prime}}|i\rangle_{T}U_{AE^{\prime
}\rightarrow BE}^{i}|\psi\rangle_{RA}|0\rangle_{E^{\prime}}\right\Vert
_{2}^{2}\\
=\frac{1}{2}\left(  1+\sqrt{F}(\mathcal{N}_{A\rightarrow B}^{0}(\rho
_{A}),\mathcal{N}_{A\rightarrow B}^{1}(\rho_{A}))\right)  ,
\end{multline}
where $\rho_{A}$ is the reduced state of $\psi_{RA}$ (i.e., $\operatorname{Tr}%
_{R}[\psi_{RA}]=\rho_{A}$). Now including the optimization over every pure
state $\psi_{RA}$, we conclude the claim in \eqref{eq:max-fid-channels}.
\end{proof}

\subsection{Proof of Theorem~\ref{thm:fid-multiple-states}}

\label{app:ProofAlg11}

\begin{proof}[Proof of Theorem~\ref{thm:fid-multiple-states}]
After Step~2 of Algorithm~\ref{alg:fid-multiple-states}, the global state is%
\begin{equation}
\sum_{x\in\mathcal{X}}\sqrt{p(x)}|xx\rangle_{T^{\prime}T}|\psi^{x}\rangle
_{RS}.
\end{equation}
After Step~4, it is%
\begin{equation}
P\sum_{x\in\mathcal{X}}\sqrt{p(x)}|xx\rangle_{T^{\prime}T}|\psi^{x}%
\rangle_{RS}|0\rangle_{F},
\end{equation}
where $P\equiv P_{T^{\prime}RF\rightarrow T^{\prime\prime}F^{\prime}}$. Then,
for a fixed unitary $P_{T^{\prime}RF\rightarrow T^{\prime\prime}F^{\prime}}$,
the acceptance probability is%
\begin{multline}
\left\Vert \langle\Phi|_{T^{\prime\prime}T}P\sum_{x\in\mathcal{X}}\sqrt
{p(x)}|xx\rangle_{T^{\prime}T}|\psi^{x}\rangle_{RS}|0\rangle_{F}\right\Vert
_{2}^{2}=\label{eq:uhlmann-steps-1}\\
\sup_{|\varphi\rangle_{F^{\prime}S}}\left\vert \langle\Phi|_{T^{\prime\prime
}T}\langle\varphi|_{F^{\prime}S}P\sum_{x\in\mathcal{X}}\sqrt{p(x)}%
|xx\rangle_{T^{\prime}T}|\psi^{x}\rangle_{RS}|0\rangle_{F}\right\vert ^{2},
\end{multline}
where the optimization is over every pure state $|\varphi\rangle_{F^{\prime}%
S}$ and we have used the fact that $\left\Vert |\phi\rangle\right\Vert
_{2}^{2}=\sup_{|\psi\rangle:\left\Vert |\psi\rangle\right\Vert _{2}%
=1}\left\vert \langle\psi|\phi\rangle\right\vert ^{2}$. This implies that the
acceptance probability is given by%
\begin{equation}
\sup_{|\varphi\rangle_{F^{\prime}S},P}\left\vert \langle\Phi|_{T^{\prime
\prime}T}\langle\varphi|_{F^{\prime}S}P\sum_{x\in\mathcal{X}}\sqrt
{p(x)}|xx\rangle_{T^{\prime}T}|\psi^{x}\rangle_{RS}|0\rangle_{F}\right\vert
^{2}.
\end{equation}
Recall Uhlmann's theorem \cite{U76}, which is the statement that%
\begin{equation}
F(\omega_{C},\tau_{C})=\sup_{V_{B}}\left\vert \langle\varphi^{\tau}|_{BC}%
V_{B}\otimes I_{C}|\varphi^{\omega}\rangle_{BC}\right\vert ^{2},
\end{equation}
where $\omega_{C}$ and $\tau_{C}$ are density operators with respective
purifications $|\varphi^{\omega}\rangle_{BC}$ and $|\varphi^{\tau}\rangle
_{BC}$ and the optimization is over every unitary $V_{B}$. Observing that the
unitary $P_{T^{\prime}RF\rightarrow T^{\prime\prime}F^{\prime}}$ acts on
systems $T^{\prime}RF$ of $\sum_{x\in\mathcal{X}}\sqrt{p(x)}|xx\rangle
_{T^{\prime}T}|\psi^{x}\rangle_{RS}|0\rangle_{F}$ and systems $T^{\prime
\prime}F^{\prime}$ of $|\Phi\rangle_{T^{\prime\prime}T}|\varphi\rangle
_{F^{\prime}S}$, that their respective reduced states on systems $TS$ are%
\begin{align}
&  \sum_{x\in\mathcal{X}}p(x)|x\rangle\!\langle x|_{T}\otimes\rho_{S}^{x},\\
&  \pi_{T}\otimes\sigma_{S},
\end{align}
where $\pi_{T}$ is the maximally mixed state and $\sigma_{S}%
\coloneqq\operatorname{Tr}_{F^{\prime}}[\varphi_{F^{\prime}S}]$, and applying
Uhlmann's theorem, we conclude that the acceptance probability is given by%
\begin{align}
&  \sup_{\sigma_{S}}F\!\left(  \sum_{x\in\mathcal{X}}p(x)|x\rangle\!\langle
x|_{T}\otimes\rho_{S}^{x},\pi_{T}\otimes\sigma_{S}\right)
\label{eq:sym-dist-meas}\\
&  =\left[  \sup_{\sigma_{S}}\sqrt{F}\!\left(  \sum_{x\in\mathcal{X}%
}p(x)|x\rangle\!\langle x|_{T}\otimes\rho_{S}^{x},\pi_{T}\otimes\sigma
_{S}\right)  \right]  ^{2}\\
&  =\frac{1}{d}\left[  \sup_{\sigma_{S}}\sum_{x\in\mathcal{X}}\sqrt{p(x)}%
\sqrt{F}\!\left(  \rho_{S}^{x},\sigma_{S}\right)  \right]  ^{2}.
\label{eq:uhlmann-steps-last}%
\end{align}
In the second equality, we made use of the direct-sum property of the root
fidelity \cite[Proposition~4.29]{KW20book}. We note here that the analysis
employed is the same as that used to show that the
CLOSE-IMAGE problem is QIP(2)-complete \cite{HMW13,HMW14}.

We can also write the acceptance probability as%
\begin{multline}
\left\Vert \langle\Phi|_{T^{\prime\prime}T}P\sum_{x\in\mathcal{X}}\sqrt
{p(x)}|xx\rangle_{T^{\prime}T}|\psi^{x}\rangle_{RS}|0\rangle_{F}\right\Vert
_{2}^{2}\\
=\frac{1}{d}\left\Vert \sum_{x\in\mathcal{X}}\sqrt{p(x)}P_{R\rightarrow
F^{\prime}}^{x}|\psi^{x}\rangle_{RS}\right\Vert _{2}^{2}%
\end{multline}
where we have defined%
\begin{equation}
P_{R\rightarrow F^{\prime}}^{x}\coloneqq\langle x|_{T^{\prime\prime}%
}P_{T^{\prime}RF\rightarrow T^{\prime\prime}F^{\prime}}|x\rangle_{T^{\prime}%
}|0\rangle_{F}.
\end{equation}
The upper bound in \eqref{eq:upper-bound-fid-multiple-states} follows because%
\begin{align}
&  \frac{1}{d}\left\Vert \sum_{x\in\mathcal{X}}\sqrt{p(x)}P_{R\rightarrow
F^{\prime}}^{x}|\psi^{x}\rangle_{RS}\right\Vert _{2}^{2}\nonumber\\
&  =\frac{1}{d}\sum_{x,y\in\mathcal{X}}\sqrt{p(x)p(y)}\langle\psi^{x}%
|_{RS}(P_{R\rightarrow F^{\prime}}^{x})^{\dag}P_{R\rightarrow F^{\prime}}%
^{y}|\psi^{y}\rangle_{RS}\\
&  =\frac{1}{d}\sum_{x\in\mathcal{X}}p(x)\langle\psi^{x}|_{RS}(P_{R\rightarrow
F^{\prime}}^{x})^{\dag}P_{R\rightarrow F^{\prime}}^{x}|\psi^{x}\rangle
_{RS}\nonumber\\
&  +\frac{2}{d}\sum_{\substack{x,y\in\mathcal{X}\\:x<y}}\sqrt{p(x)p(y)}%
\operatorname{Re}[\langle\psi^{x}|_{RS}(P^{x})^{\dag}P^{y}|\psi^{y}%
\rangle_{RS}]\\
&  \leq\frac{1}{d}+\frac{2}{d} \sum_{x,y\in\mathcal{X}:x<y}\sqrt{p(x)p(y)}\sqrt{F}%
(\rho_{S}^{x},\rho_{S}^{y})
\end{align}
where the first equality follows by expanding the norm, the second by
splitting the terms into those for which $x=y$ and $x < y$, and the
inequality follows because $(P_{R\rightarrow F^{\prime}}^{x})^{\dag
}P_{R\rightarrow F^{\prime}}^{x}\leq I_{R}$ and from reasoning similar to that
in the proof of Theorem~\ref{thm:acc-prob-fid-states}.

The final statement about tightness of the upper bound for the case $d=2$ follows by picking
$P^{x}$ and $P^{y}$ for $x < y$ to be isometries from Uhlmann's theorem, as
was done at the end of the proof of Theorem~\ref{thm:acc-prob-fid-states}.
\end{proof}

\subsection{Proof of Theorem~\ref{thm:fid-multiple-channels}}

\label{app:ProofAlg12}

\begin{proof}[Proof of Theorem~\ref{thm:fid-multiple-channels}]
We can employ the result of Theorem~\ref{thm:fid-multiple-states}. For a fixed
state $\psi_{RA}$ of the min-prover, the acceptance probability is equal to%
\begin{equation}
\frac{1}{d}\left[  \sup_{\sigma_{RB}}\sum_{x\in\mathcal{X}}\sqrt{p(x)}\sqrt
{F}(\mathcal{N}_{A\rightarrow B}^{x}(\psi_{RA}),\sigma_{RB})\right]  ^{2},
\label{eq:fixed-state-accept-prob-mult-ch}%
\end{equation}
as a consequence of Theorem~\ref{thm:fid-multiple-states}. Thus, we arrive at
the claim in \eqref{eq:accept-prob-mult-channels} by minimizing over every
state $\psi_{RA}$ of the min-prover.

The upper bound in \eqref{eq:upper-bound-accept-prob-mult-ch} follows from the
upper bound in \eqref{eq:upper-bound-fid-multiple-states}. Indeed, for a fixed
state $\psi_{RA}$ of the min-prover, the acceptance probability in
\eqref{eq:fixed-state-accept-prob-mult-ch}\ is bounded from above by%
\begin{equation}
\frac{1}{d}+\frac{2}{d}\sum_{\substack{x,y\in\mathcal{X}:\\x<y}}\sqrt
{p(x)p(y)}\sqrt{F}(\mathcal{N}_{A\rightarrow B}^{x}(\psi_{RA}),\mathcal{N}%
_{A\rightarrow B}^{y}(\psi_{RA})).
\end{equation}
After taking infima, we arrive at \eqref{eq:upper-bound-accept-prob-mult-ch}.

The final statement follows from the same reasoning employed at the end of the
proof of Theorem~\ref{thm:fid-multiple-states}.
\end{proof}

\section{Number of samples for Fidelity-Pure-Pure}
\label{sec:samples-Fid-Pure-Pure}

In Theorem~\ref{thm:BQP-comp-fid-pure}, we argued that the problem Fidelity-Pure-Pure is BQP-complete; i.e., every problem in BQP can be reduced to this problem in polynomial time. In this section, we discuss the number of samples required to obtain a desired accuracy and confidence. Let us first recall Hoeffding's bound.

\begin{lemma}[Hoeffding's Bound]
\label{lem:hoeffding}
Suppose that we are given $n$ independent samples $Y_1, \ldots, Y_n$ of a bounded random variable $Y$ taking values in the interval $[a,b]$ and having mean $\mu$. Set 
\begin{equation}
    \overline{Y_n} \coloneqq \frac{1}{n} (Y_1 + \ldots +Y_n)
\end{equation}
to be the sample mean. Let $\varepsilon \in (0,1)$ be the desired accuracy, and let $1-\delta$ be the desired success probability, where $\delta \in (0,1)$. Then
\begin{equation}
\label{eq:Hoeff-bound}
\Pr[\vert \overline{Y_n} - \mu \vert \leq \varepsilon] \geq 1-\delta,
\end{equation}
as long as 
\begin{equation}
    n \geq \frac{M^2}{2\varepsilon^2} \ln \!\left( \frac{2}{\delta}\right),
\end{equation}
where $M \coloneqq b -  a$.
\end{lemma}

In the main text, we mapped a general BQP algorithm to Fidelity-Pure-Pure. In a general BQP algorithm, we measure a single qubit called the decision qubit, leading to a random variable $Y$ taking the value $0$ with probability $1-p$ and the value $1$ with probability $p$, where $p$ is the acceptance probability of the algorithm. We repeat this procedure $n$ times and label the outcomes $Y_1, \ldots, Y_n$. We output the mean
\begin{equation}
    \overline{Y_n} = \frac{1}{n}\left(Y_1 + \ldots + Y_n\right)
\end{equation}
as an estimate for the true value $p$ (as seen in \eqref{eq:BQP-accept-prob})
\begin{equation}
    p = \bra{x}_S \bra{0}_A Q^\dagger (\outerproj{1}_D \otimes I_G) Q \ket{x}_S \ket{0}_A.
\end{equation}
By plugging into Lemma~\ref{lem:hoeffding}, setting
\begin{equation}
\mu=p
\end{equation}
therein, and taking $n$ to satisfy the condition $n \geq \frac{1}{2\varepsilon^2} \ln\! \left(\frac{2}{\delta}\right)$, we can achieve an error $\varepsilon$ and confidence $\delta$ (as defined in \eqref{eq:Hoeff-bound}).

Now, we see from \eqref{eq:BQP-hard-accep} that the modified algorithm has an acceptance probability $p^2$, i.e.,  equal to the square of the original BQP problem's acceptance probability. In the modified algorithm, we measure the decision qubit, leading to a random variable $Z$ taking value $0$ with probability $1-p^2$ and the value $1$ with probability $p^2$. We repeat the procedure $m$ times and label the outcomes $Z_1, \ldots, Z_m$. We output the mean 
\begin{equation}
    \overline{Z}_m = \frac{1}{m} \left(Z_1 + \ldots + Z_m \right)
\end{equation}
as an estimate for the true value $p^2$ (as seen in \eqref{eq:BQP-hard-accep}). Setting $\tilde{\mu} = p^2$, and plugging into Lemma~\ref{lem:hoeffding}, it follows that
\begin{equation}
\Pr[\vert \overline{Z}_m - \tilde{\mu} \vert \leq \varepsilon^2] \geq 1-\delta,
\end{equation}
if
\begin{equation}
    m \geq \frac{1}{2\varepsilon^4} \ln \!\left( \frac{2}{\delta}\right).
\end{equation}
Consider the following inequalities:
\begin{align}
    \varepsilon^2 &\geq \left\vert \overline{Z}_m - \tilde{\mu} \right\vert \nonumber \\
    &= \left\vert \overline{Z}_m - \mu^2 \right\vert \nonumber \\
    &= \left\vert \sqrt{\overline{Z}_m} - \mu \right\vert \left\vert \sqrt{\overline{Z}_m} + \mu \right\vert \nonumber \\
    &\geq \left\vert \sqrt{\overline{Z}_m} - \mu \right\vert^2,
\end{align}
where the second inequality is derived from the fact that $\overline{Z}_m, \mu \in [0,1]$, so that $\left\vert\overline{Z}_m + \mu \right\vert \geq \left\vert \overline{Z}_m - \mu \right\vert$. Thus,
\begin{equation}
    \left\vert \sqrt{\overline{Z}_m} - \mu \right\vert \leq \varepsilon.
\end{equation}

In other words, 
\begin{equation}
    \varepsilon^2 \geq \left\vert \overline{Z}_m - \mu^2 \right\vert \implies \varepsilon \geq \left\vert \sqrt{\overline{Z}_m} - \mu \right\vert
    \end{equation}
    so that
    \begin{align}
    \Pr\left[\left\vert \sqrt{\overline{Z}_m} - \mu \right\vert \leq \varepsilon\right] &\geq \Pr[\left\vert \overline{Z}_m - \mu^2 \right\vert \leq \varepsilon^2] \nonumber \\
    &\geq 1-\delta.
\end{align}

Thus, $\sqrt{\overline{Z}_m}$ is an estimator for $p$ and taking 
\begin{equation}
    m \geq \frac{1}{2\varepsilon^4} \ln \!\left( \frac{2}{\delta}\right)
\end{equation}
suffices to achieve an error $\varepsilon$ and confidence $\delta$ in estimating $p$.

\section{Approximate fixed points and Deutschian closed timelike curves}

\label{app:CTCs}

The computational complexity of computation assisted by Deutschian closed
timelike curves (CTCs) was solved in \cite{AW09}, in which these authors
showed that the power of classical and quantum computing are equivalent and
equal to PSPACE, which is the class of decision problems solvable with
polynomial memory. Let us briefly review these results. In the Deutschian
model of CTCs \cite{Deutsch91}, we suppose that chronology-respecting qubits
in a state~$\rho$ can interact with chronology-violating qubits in a
state~$\sigma$ according to a unitary transformation $U$. Let $S$ denote the
quantum system for the chronology-respecting qubits, and let $C$ denote the
quantum system for the chronology-violating qubits. Then the output state of
the transformation is as follows:%
\begin{equation}
\operatorname{Tr}_{C}[U_{SC}(\rho_{S}\otimes\sigma_{C})U_{SC}^{\dag}].
\end{equation}
In an effort to avoid grandfather and unproved theorem paradoxes, Deutsch
postulates that nature imposes the following self-consistency condition on the
state of the CTC\ qubits:%
\begin{equation}
\sigma_{C}=\operatorname{Tr}_{S}[U_{SC}(\rho_{S}\otimes\sigma_{C})U_{SC}%
^{\dag}]. \label{eq:deutsch-self-cons}%
\end{equation}
At a first glance, this condition might seem innocuous, but its implications
for quantum information processing are dramatic, essentially due to the fact
that \eqref{eq:deutsch-self-cons} allows for non-linear evolutions, which are
disallowed in standard quantum mechanics. Indeed, quantum processors assisted
by Deutschian CTCs can violate the uncertainty principle
\cite{BHW09,PhysRevLett.110.060501}, can break the no-cloning theorem
\cite{BWW13,Yuan2015}, and can solve computational problems believed to be
difficult \cite{AW09}.

The connection of Deutschian CTCs (D-CTCs) with fixed points of channels is
that the condition in \eqref{eq:deutsch-self-cons} demands that the state of
the CTC system be a fixed point of the quantum channel $\mathcal{N}_{U,\rho}$:%
\begin{equation}
\omega_{C}\rightarrow\mathcal{N}_{U,\rho}(\omega_{C}%
)\coloneqq \operatorname{Tr}_{S}[U_{SC}(\rho_{S}\otimes\omega_{C})U_{SC}%
^{\dag}].
\end{equation}
Thus, this is how Section~\ref{sec:fixed-points-of-channels} connects with Deutschian CTCs.

The class of computational problems efficiently decidable by a quantum
computer assisted by D-CTCs is called BQP$_{\text{CTC}}$, and it is formally
defined as follows. Set $\delta\in(0,1/2)$. Let $G$ be a universal set of
quantum gates. A quantum D-CTC\ algorithm is a deterministic polynomial time
algorithm that takes as input a string $x\in\left\{  0,1\right\}  ^{n}$ and
produces an encoding of a unitary quantum circuit $U$ using gates from $G$.
This unitary acts on two systems of qubits, called $S$ and $C$ as discussed
above, which consist of $p(n)$ and $q(n)$ qubits, respectively, where $p(n)$
and $q(n)$ are polynomials. The system $S$ is initialized to the all-zeros
state $|0\rangle\!\langle0|_{S}^{\otimes p(n)}$, and the system $C$ is
initialized to a state $\sigma_{C}$ satisfying the causal self-consistency
condition in \eqref{eq:deutsch-self-cons} with $\rho_{S}=|0\rangle
\!\langle0|_{S}^{\otimes p(n)}$. That is, $\sigma_{C}$ is such that%
\begin{equation}
\mathcal{N}_{U,|0\rangle\!\langle0|_{S}^{\otimes p(n)}}(\sigma_{C})=\sigma
_{C}.
\end{equation}
Let $\mathcal{M}$ be a measurement of the last qubit of $S$ in the
computational basis. The algorithm accepts the input $x$ if%
\begin{equation}
\mathcal{M}(\operatorname{Tr}_{C}[U_{SC}(|0\rangle\!\langle0|_{S}^{\otimes
p(n)}\otimes\sigma_{C})U_{SC}^{\dag}]) \label{eq:CTC-state-after-meas}%
\end{equation}
results in the output 1 with probability at least $1-\delta$ for every state
$\sigma_{C}$ satisfying \eqref{eq:deutsch-self-cons}. The algorithm rejects if
\eqref{eq:CTC-state-after-meas} results in output 1 with probability no larger
than $\delta$ for every state $\sigma_{C}$ satisfying
\eqref{eq:deutsch-self-cons}.\ The algorithm decides the promise problem
$A=A_{\text{yes}}\cup A_{\text{no}}\subseteq\left\{  0,1\right\}  ^{\ast}$
(where $A_{\text{yes}}\cap A_{\text{no}}=\emptyset$) if the algorithm accepts
every input $x\in A_{\text{yes}}$ and rejects every input $x\in A_{\text{no}}%
$. BQP$_{\text{CTC}}$ is the class of all promise problems that are decided by some
quantum D-CTC\ algorithm.

It is already known from \cite{AW09} that BQP$_{\text{CTC}}=\ $PSPACE, and it
is also known that QIP $=$ PSPACE \cite{JJUW11}. Thus, it immediately follows
from these results that BQP$_{\text{CTC}}=\ $QIP. Here, we discuss an attempt
at a direct proof that BQP$_{\text{CTC}}\subseteq\ $QIP, which ideally
would be arguably simpler to see than by examining the proofs of the
equalities BQP$_{\text{CTC}}=\ $PSPACE and QIP $=$ PSPACE individually.
However, there are some difficulties in establishing this direct proof. We
note here that this is related to an open question posed in \cite[Section~8]%
{A05}, the spirit of which is to find a direct proof of the containment
BQP$_{\text{CTC}}\subseteq\ $QIP.

Consider the following purported algorithm for simulating $\operatorname{BQP}%
_{\operatorname{CTC}}$ in $\operatorname{QIP}$:

\begin{algorithm}
\label{alg:CTC-alg-complicated}The algorithm proceeds as follows:

\begin{enumerate}
\item The verifier prepares a state%
\begin{equation}
|\Phi\rangle_{T^{\prime}T}\coloneqq\sum_{\ell=0}^{L-1}\sqrt{\frac{1}{L}}%
|\ell\ell\rangle_{T^{\prime}T}%
\end{equation}
on registers $T^{\prime}$ and $T$ and prepares system $S^{L}$ in the all-zeros
state $|0\rangle_{S^{L}}$.

\item The prover transmits the system $C$ of the state $|\psi\rangle_{RC}$ to
the verifier.

\item Using the circuit $U_{SC}$, the verifier performs the following
controlled unitary:%
\begin{equation}
\sum_{\ell=0}^{L-1}|\ell\rangle\!\langle\ell|_{T}\otimes U_{S_{1}^{\ell}%
C}^{\ell},
\end{equation}
where%
\begin{equation}
U_{S_{1}^{\ell}C}^{\ell}\coloneqq \underbrace{\left(  U_{S_{\ell}C\rightarrow S_{\ell
}C}\circ\cdots\circ U_{S_{1}C\rightarrow S_{1}C}\right)  }_{\ell\text{ times}}%
\end{equation}

\item The verifier transmits systems $T^{\prime}$ and $S^{L}$ to the max-prover.

\item The prover prepares a system $F$ in the $|0\rangle_{F}$ state and acts
on systems $T^{\prime}$, $S^{L}$, and $F$ with a unitary $P_{T^{\prime}%
S^{L}F\rightarrow T^{\prime\prime}F^{\prime}}$ to produce the output systems
$T^{\prime\prime}$ and $F^{\prime}$, where $T^{\prime\prime}$ is a qudit system.

\item The prover sends system $T^{\prime\prime}$ to the verifier, who then
performs a qudit Bell measurement%
\begin{equation}
\{\Phi_{T^{\prime\prime}T},I_{T^{\prime\prime}T}-\Phi_{T^{\prime\prime}T}\}
\end{equation}
on systems $T^{\prime\prime}$ and $T$, where $\Phi_{T^{\prime\prime}T}$ is
defined in \eqref{eq:qudit-bell-state}. The verifier then initializes a system
$S_{L+1}$ to the all-zeros state $|0\rangle_{S_{L+1}}$, performs the unitary
$U_{S_{L+1}C}$, and measures the decision qubit of system $S_{L+1}$. The
verifier accepts if and only if the outcome $\Phi_{T^{\prime\prime}T}$ occurs
and the decision qubit is measured to be in the $|1\rangle$ state.
\end{enumerate}
\end{algorithm}

\begin{proposition}
The acceptance probability of Algorithm~\ref{alg:CTC-alg-complicated} is equal
to%
\begin{equation}
\left[  \sup_{\rho_{C},\sigma_{G}}\frac{1}{L}\sum_{\ell=0}^{L-1}\sqrt
{F}\!\left(
\begin{array}
[c]{c}%
|1\rangle\!\langle1|_{D}\otimes\sigma_{G},\\
U_{SC}(|0\rangle\!\langle0|_{S}\otimes\mathcal{N}^{\ell}(\rho_{C}%
))U_{SC}^{\dag}%
\end{array}
\right)  \right]  ^{2},\label{eq:acc-prob-BQP-CTC-sim}%
\end{equation}
where%
\begin{equation}
\mathcal{N}(\omega_{C})\coloneqq \operatorname{Tr}_{S}[U_{SC}(|0\rangle\!\langle
0|_{S}\otimes\omega_{C})U_{SC}^{\dag}].
\end{equation}

\end{proposition}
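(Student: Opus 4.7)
The plan is to generalize the argument from the proof of Theorem~\ref{thm:fid-multiple-states}, incorporating the additional post-selection arising from the unitary $U_{S_{L+1}C}$ and the decision-qubit measurement at the end. First, I would track the global state through steps~1--3 to obtain
\begin{equation}
|\Psi_3\rangle = \frac{1}{\sqrt{L}}\sum_{\ell=0}^{L-1} |\ell\ell\rangle_{T'T}|\Phi^\ell\rangle_{S^LCR},
\end{equation}
where $|\Phi^\ell\rangle := U^\ell_{S_1^\ell C}|0\rangle_{S^L}|\psi\rangle_{RC}$, and the key identity is $\operatorname{Tr}_{S^LR}[|\Phi^\ell\rangle\!\langle\Phi^\ell|] = \mathcal{N}^\ell(\rho_C)$ with $\rho_C = \operatorname{Tr}_R[\psi_{RC}]$, since each successive $U_{S_iC}$ acting on the all-zeros input on $S_i$ implements one application of $\mathcal{N}$. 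Defining the contraction $P^\ell := \langle\ell|_{T''}P|\ell\rangle_{T'}|0\rangle_F$ from $S^L$ to $F'$, the same calculation as in Theorem~\ref{thm:fid-multiple-states} gives that the unnormalized amplitude after the Bell measurement outcome $\Phi_{T''T}$ is $\langle \Phi|_{T''T}|\Psi_5\rangle = \frac{1}{L}\sum_\ell P^\ell|\Phi^\ell\rangle$. Introducing the contraction $W := \langle 1|_D U_{S_{L+1}C}|0\rangle_{S_{L+1}}$ from $C$ to $GC$, the joint acceptance probability then takes the form
\begin{equation}
\Pr[\text{accept}] = \sup_{\rho_C,P}\frac{1}{L^2}\Big\|\sum_\ell (P^\ell \otimes W)|\Phi^\ell\rangle\Big\|^2.
\end{equation}

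Second, I would run the Uhlmann-style optimization used in the proof of Theorem~\ref{thm:fid-multiple-states}. Using $\|v\|^2 = \sup_{|\xi\rangle: \|\xi\|=1}|\langle\xi|v\rangle|^2$ with a variational vector $|\xi\rangle_{F'GCR}$ and defining $|\widetilde\xi\rangle := (I_{F'}\otimes W^\dagger\otimes I_R)|\xi\rangle$, the expression becomes $\sup_{\rho_C,|\xi\rangle}\frac{1}{L^2}|\sum_\ell \langle\widetilde\xi|P^\ell|\Phi^\ell\rangle|^2$. Absorbing the $\ell$-dependent phases into $P^\ell$ (by diagonal phases on $|\ell\rangle_{T'}$), and using that the $P^\ell$ can be chosen as arbitrary independent contractions whenever $F'$ is sufficiently large (extending to a unitary $P$), the squared modulus decouples into a sum of independent Uhlmann optimizations. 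Each one evaluates via Uhlmann's theorem to $\sqrt{F(W^\dagger \xi_{GC} W,\mathcal{N}^\ell(\rho_C))}$, where $\xi_{GC} = \operatorname{Tr}_{F'R}[|\xi\rangle\!\langle\xi|]$ and I have used both $F(a\tau,\rho)=aF(\tau,\rho)$ for $a\geq 0$ and the fact that $\operatorname{Tr}_{F'R}[|\widetilde\xi\rangle\!\langle\widetilde\xi|] = W^\dagger \xi_{GC} W$.

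The main obstacle is then the algebraic identification
\begin{equation}
F(W^\dagger \xi_{GC} W,\, \mathcal{N}^\ell(\rho_C)) = F\!\left(|1\rangle\!\langle 1|_D \otimes \xi_{GC},\; U_{SC}(|0\rangle\!\langle 0|_S \otimes \mathcal{N}^\ell(\rho_C))U_{SC}^\dagger\right),
\end{equation}
which converts our derived expression into the form stated in \eqref{eq:acc-prob-BQP-CTC-sim}. For rank-one $\xi_{GC} = |\sigma\rangle\!\langle\sigma|$, the identity is immediate from $F(|\phi\rangle\!\langle\phi|,\rho) = \langle\phi|\rho|\phi\rangle$: both sides reduce to $\langle\sigma|W\mathcal{N}^\ell(\rho_C)W^\dagger|\sigma\rangle$ after a direct computation using $W = \langle 1|_D U_{SC}|0\rangle_S$. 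For general (mixed) $\xi_{GC}$, I would lift this by invoking the definition $F(A,B) = \|\sqrt{A}\sqrt{B}\|_1^2$ together with the block structure $\sqrt{|1\rangle\!\langle 1|_D \otimes \xi_{GC}} = |1\rangle\!\langle 1|_D \otimes \sqrt{\xi_{GC}}$ and the unitary invariance $\sqrt{U\tau U^\dagger} = U\sqrt{\tau}U^\dagger$, so that the trace-norm on the right collapses to that on the left after contracting with $\langle 0|_S$ and $|0\rangle_S$. The supremum on the right-hand side of \eqref{eq:acc-prob-BQP-CTC-sim} over $(\rho_C,\sigma_G)$ then matches the supremum over $(\rho_C,\xi_{GC})$ we have derived, once $\sigma_G$ is interpreted as a density operator on the combined garbage-plus-CTC system $GC$, completing the proof.
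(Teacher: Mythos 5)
Your proof is correct in substance but takes a genuinely different route from the paper's. The paper mirrors the proof of Theorem~\ref{thm:fid-multiple-states}: it writes the acceptance amplitude as an overlap with a single variational vector $|\varphi\rangle_{F^{\prime}G}$, observes that the two pure states involved reduce, on the systems the prover does not control, to $\frac{1}{L}\sum_{\ell}|\ell\rangle\!\langle\ell|_{T}\otimes U_{SC}(|0\rangle\!\langle0|_{S}\otimes\mathcal{N}^{\ell}(\rho_{C}))U_{SC}^{\dag}$ and $\pi_{T}\otimes|1\rangle\!\langle1|_{D}\otimes\sigma_{G}$, invokes Uhlmann's theorem once for this pair of classical--quantum states, and then extracts the sum over $\ell$ from the direct-sum property of the root fidelity --- so the $U_{SC}$-conjugated form in \eqref{eq:acc-prob-BQP-CTC-sim} appears automatically. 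You instead follow the template of Theorem~\ref{thm:acc-prob-fid-states}: decompose the prover's unitary into blocks $P^{\ell}$, argue that these may be taken to be arbitrary independent contractions, align phases so the sum of overlaps splits into independent Uhlmann optimizations, and apply the contraction form of Uhlmann's theorem term by term. This forces you to prove the additional identity $F(W^{\dag}\xi W,\omega)=F(|1\rangle\!\langle1|_{D}\otimes\xi,\,U_{SC}(|0\rangle\!\langle0|_{S}\otimes\omega)U_{SC}^{\dag})$, which does hold: writing $|1\rangle\!\langle1|_{D}\otimes\xi=K\xi K^{\dag}$ and $U_{SC}(|0\rangle\!\langle0|_{S}\otimes\omega)U_{SC}^{\dag}=V\omega V^{\dag}$ with isometries $K=|1\rangle_{D}\otimes I_{G}$ and $V=U_{SC}(|0\rangle_{S}\otimes I_{C})$, both sides equal $\big\Vert\sqrt{\xi}\,W\sqrt{\omega}\big\Vert_{1}^{2}$ with $W=K^{\dag}V$, since $\operatorname{Tr}\big[\sqrt{\sqrt{\omega}W^{\dag}\xi W\sqrt{\omega}}\big]$ is insensitive to whether $W$ sits inside or outside the square root. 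Your route is more explicit about where each contraction enters and avoids quoting the direct-sum property, at the cost of this extra lemma and heavier bookkeeping; both are valid.

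One point to tighten: you define $P^{\ell}$ as a contraction from $S^{L}$ to $F^{\prime}$ but then evaluate fidelities of states reduced all the way down to $C$, i.e., with $R$ traced out. A term-by-term Uhlmann optimization over contractions acting only on $S^{L}$ yields the root fidelity of the reduced states on $CR$, which by data processing is in general strictly smaller than $\sqrt{F}(W^{\dag}\xi_{G}W,\mathcal{N}^{\ell}(\rho_{C}))$; to attain the latter the blocks must act on $R$ as well, so they should be contractions $P^{\ell}\colon S^{L}R\rightarrow F^{\prime}$. This is legitimate because $R$ is the prover's retained purifying system and the prover may act on everything in its possession, and the paper makes the same implicit assumption when it traces $R$ out of the verifier's reduced state before applying Uhlmann's theorem. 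It is therefore a shared imprecision to be repaired in the write-up rather than a gap specific to your argument.
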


\begin{proof}
This follows by employing reasoning similar to that for \cite[Lemma~4.2]{R09}
(see also \cite{KW00}). This reasoning is also very similar to the reasoning used
around \eqref{eq:uhlmann-steps-1}--\eqref{eq:uhlmann-steps-last}. For a fixed
state $|\psi\rangle_{RC}$ of the prover, the global state after Step~6 of
Algorithm~\ref{alg:CTC-alg-complicated}, but before the measurements,\ is%
\begin{equation}
PU_{S_{L+1}C}\sum_{\ell=0}^{L-1}\sqrt{\frac{1}{L}}|\ell\ell\rangle_{T^{\prime
}T}U_{S_{1}^{\ell}C}^{\ell}|0\rangle_{S^{L+1}}|\psi\rangle_{RC},
\end{equation}
where $P\equiv P_{T^{\prime}S^{L}F\rightarrow T^{\prime\prime}F^{\prime}}$.
Then, by splitting the systems $S_{L+1}C$ into the decision qubit $D$ and
denoting all other qubits by $G$, the acceptance probability is given by%
\begin{multline}
\left\Vert
\begin{array}
[c]{c}%
\langle\Phi|_{T^{\prime\prime}T}\langle1|_{D}PU_{S_{L+1}C}\times\\
\sum_{\ell=0}^{L-1}\sqrt{\frac{1}{L}}|\ell\ell\rangle_{T^{\prime}T}%
U_{S_{1}^{\ell}C}^{\ell}|0\rangle_{S^{L+1}}|\psi\rangle_{RC}%
\end{array}
\right\Vert _{2}^{2}\label{eq:CTC-accept-prob-final}\\
=\sup_{|\varphi\rangle_{F^{\prime}G}}\left\vert
\begin{array}
[c]{c}%
\langle\Phi|_{T^{\prime\prime}T}\langle1|_{D}\langle\varphi|_{F^{\prime}%
G}PU_{S_{L+1}C}\times\\
\sum_{\ell=0}^{L-1}\sqrt{\frac{1}{L}}|\ell\ell\rangle_{T^{\prime}T}%
U_{S_{1}^{\ell}C}^{\ell}|0\rangle_{S^{L+1}}|\psi\rangle_{RC}%
\end{array}
\right\vert ^{2}.
\end{multline}
Considering that the reduced state of $\sum_{\ell=0}^{L-1}\sqrt{\frac{1}{L}%
}|\ell\ell\rangle_{T^{\prime}T}U_{S_{1}^{\ell}C}^{\ell}|0\rangle_{S^{L+1}%
}|\psi\rangle_{RC}$, after tracing over all systems sent to the prover, is%
\begin{equation}
\frac{1}{L}\sum_{\ell=0}^{L-1}|\ell\rangle\!\langle\ell|_{T}\otimes
\mathcal{N}^{\ell}(\rho_{C}),
\end{equation}
where $\rho_{C}\coloneqq \operatorname{Tr}_{R}[\psi_{RC}]$, and the reduced state of
$|\Phi\rangle_{T^{\prime\prime}T}|1\rangle_{D}|\varphi\rangle_{F^{\prime}G}$,
after tracing over all systems not transmitted by the prover, is%
\begin{equation}
\frac{1}{L}\sum_{\ell=0}^{L-1}|\ell\rangle\!\langle\ell|_{T}\otimes
|1\rangle\!\langle1|_{D}\otimes\sigma_{G},
\end{equation}
where $\sigma_{G}\coloneqq \operatorname{Tr}_{F^{\prime}}[\varphi_{F^{\prime}G}]$, we
conclude by Uhlmann's theorem that \eqref{eq:CTC-accept-prob-final} is equal
to%
\begin{multline}
\sup_{\sigma_{G}}F\!\left(
\begin{array}
[c]{c}%
\frac{1}{L}\sum_{\ell=0}^{L-1}|\ell\rangle\!\langle\ell|_{T}\otimes
\mathcal{N}^{\ell}(\rho_{C}),\\
\frac{1}{L}\sum_{\ell=0}^{L-1}|\ell\rangle\!\langle\ell|_{T}\otimes
|1\rangle\!\langle1|_{D}\otimes\sigma_{G}%
\end{array}
\right) \\
=\left[  \sup_{\sigma_{G}}\frac{1}{L}\sum_{\ell=0}^{L-1}\sqrt{F}\!\left(
\begin{array}
[c]{c}%
|1\rangle\!\langle1|_{D}\otimes\sigma_{G},\\
U_{SC}(|0\rangle\!\langle0|_{S}\otimes\mathcal{N}^{\ell}(\rho_{C}%
))U_{SC}^{\dag}%
\end{array}
\right)  \right]  ^{2}.
\end{multline}
We conclude the expression in the statement of the theorem after optimizing
over all input states of the prover.
\end{proof}

\bigskip

In order to establish that $\operatorname{BQP}_{\operatorname{CTC}}$ is
contained in $\operatorname{QIP}$, it is necessary to map yes-instances of the
former to yes-instances of the latter, and the same for the no-instances.
Accomplishing the first part of the task is straightforward. A yes-instance of
$\operatorname{BQP}_{\operatorname{CTC}}$ implies that there exists a
fixed-point state $\rho_{C}$ such that
\begin{equation}
\operatorname{Tr}\left[  (|1\rangle\!\langle1|_{D}\otimes I_{G})U_{SC}%
(|0\rangle\!\langle0|_{S}\otimes\rho_{C})U_{SC}^{\dag}\right]  \geq1-\delta.
\end{equation}
Thus, the prover transmits such a fixed-point state $\rho_{C}$ to the
verifier, and we find that the acceptance probability is not smaller than%
\begin{align}
&  \left[  \sup_{\sigma_{G}}\frac{1}{L}\sum_{\ell=0}^{L-1}\sqrt{F}\!\left(
\begin{array}
[c]{c}%
|1\rangle\!\langle1|_{D}\otimes\sigma_{G},\\
U_{SC}(|0\rangle\!\langle0|_{S}\otimes\mathcal{N}^{\ell}(\rho_{C}%
))U_{SC}^{\dag}%
\end{array}
\right)  \right]  ^{2}\nonumber\\
&  \geq\sup_{\sigma_{G}}F\!\left(  |1\rangle\!\langle1|_{D}\otimes\sigma
_{G},U_{SC}(|0\rangle\!\langle0|_{S}\otimes\rho_{C})U_{SC}^{\dag}\right)  \\
&  \geq\sup_{|\varphi\rangle_{F^{\prime}G}}\left\vert \langle1|_{D}%
\langle\varphi|_{F^{\prime}G}U_{SC}|0\rangle_{S}|\psi\rangle_{RC}\right\vert
^{2}\\
&  =\left\Vert \langle1|_{D}U_{SC}|0\rangle_{S}|\psi\rangle_{RC}\right\Vert
_{2}^{2}\\
&  =\operatorname{Tr}\left[  (|1\rangle\!\langle1|_{D}\otimes I_{G}%
)U_{SC}(|0\rangle\!\langle0|_{S}\otimes\rho_{C})U_{SC}^{\dag}\right]  \\
&  \geq1-\delta.
\end{align}
The first inequality follows because $\mathcal{N}^{\ell}(\rho_{C})=\rho_{C}$
for all$~\ell$.

It is less clear how to handle the case of a no-instance of
$\operatorname{BQP}_{\operatorname{CTC}}$, because the definition of this
complexity class only specifies the behavior of the circuit when $\rho_{C}$ is
an exact fixed point of $\mathcal{N}$. Algorithm~\ref{alg:CTC-alg-complicated}
attempts to verify whether the prover sends a fixed point, but it only
actually verifies whether the prover sends a state that is an approximate
fixed point. The acceptance probability of
Algorithm~\ref{alg:CTC-alg-complicated}\ is given by
\eqref{eq:acc-prob-BQP-CTC-sim} and is bounded from above by%
\begin{multline}
\sup_{\rho_{C},\sigma_{G}}F\!\left(  |1\rangle\!\langle1|_{D}\otimes\sigma
_{G},U_{SC}(|0\rangle\!\langle0|_{S}\otimes\overline{\mathcal{N}_{L}}(\rho
_{C}))U_{SC}^{\dag}\right)  \\
\leq\sup_{\rho_{C}}\langle1|_{D}\operatorname{Tr}_{G}[U_{SC}(|0\rangle
\!\langle0|_{S}\otimes\overline{\mathcal{N}_{L}}(\rho_{C}))U_{SC}^{\dag
}]|1\rangle_{D},
\end{multline}
where the bounds follow from concavity of root fidelity and the
data-processing inequality for fidelity. In the above, $\overline
{\mathcal{N}_{L}}$ is the Cesaro mean channel:%
\begin{equation}
\overline{\mathcal{N}_{L}}(\omega_{C})\coloneqq \frac{1}{L}\sum_{\ell=0}%
^{L-1}\mathcal{N}^{\ell}(\omega_{C}).
\end{equation}
This channel has the property that the sequence $\{\overline{\mathcal{N}_{L}%
}\}_{L}$ converges to the fixed-point projection channel $\mathcal{P}%
\coloneqq \lim_{L\rightarrow\infty}\overline{\mathcal{N}_{L}}$ of $\mathcal{N}$, so
that $\mathcal{P}(\omega_{C})$ is guaranteed to be a fixed point of
$\mathcal{N}$ for every input state $\omega_{C}$ \cite{Wolf12}. It is not
clear how to obtain a channel independent bound that relates the convergence
of $\overline{\mathcal{N}_{L}}$\ to $\mathcal{P}$, as a function of $L$ alone.
Furthermore, it is likely not possible that the closeness of $\overline
{\mathcal{N}_{L}}$\ to $\mathcal{P}$ could generally be inverse polynomial in
$L$; for if it were, then one could simulate $\operatorname{BQP}%
_{\operatorname{CTC}}$ in BQP, because the verifier could apply the map
$\overline{\mathcal{N}_{L}}$ and generate a fixed point of the channel without
the help of the prover. However, we now know that $\operatorname{BQP}%
_{\operatorname{CTC}}=\operatorname{PSPACE}$ \cite{AW09}, and it is widely
believed that $\operatorname{PSPACE}\neq\operatorname{BQP}$. In the case of a
no-instance of $\operatorname{BQP}_{\operatorname{CTC}}$, it is thus not clear
how to relate the acceptance probability of
Algorithm~\ref{alg:CTC-alg-complicated} to the acceptance probability of the
no-instance of $\operatorname{BQP}_{\operatorname{CTC}}$. We leave this as a
curious open question.

\end{document}